\numberwithin{equation}{section}
\newcommand{\cket}[1]{\left|#1\right\rangle}
\newcommand{\mcl}[1]{\ensuremath{\mathcal{#1}}}
\newcommand{\C}{\ensuremath{\mathbb{C}}}
\newcommand{\R}{\ensuremath{\mathbb{R}}}
\newcommand{\CompFuncObs}[1]{\ensuremath{\mathcal{F}_{\C,#1}}}
\newcommand{\ReFuncObs}[1]{\ensuremath{\mathcal{F}_{\R,#1}}}
\NewDocumentCommand{\proj}{m}{|#1\rangle\langle #1|}
\newcommand{\id}[1]{\ensuremath{\mathbb{I}_{#1}}}
\newcommand{\VR}{\ensuremath{V_{\mathrm{KD,real}}}}
\newcommand{\SACondExp}[4]{\ensuremath{\mathbb{E}_{#1}^{#2,\mathrm{sa}}(#3|#4)}}
\newcommand{\Econd}{\mathcal{E}_{\mathrm{cond},Y}}
\newtheorem{Theorem}{Theorem}[section]
\newtheorem{definition}[Theorem]{Definition}
\newtheorem{lemma}[Theorem]{Lemma}
\newtheorem{proposition}[Theorem]{Proposition}
\newtheorem{corollary}[Theorem]{Corollary}
\theoremstyle{definition}
\newtheorem{rmk}[Theorem]{Remark}
\newcommand{\IntEnt}[2]{\llbracket #1,#2\rrbracket}
\newcommand{\KDDist}[1]{\mathrm{Q}^{\mathrm{KD},#1}}
\newcommand{\KDSymb}[1]{\tilde{\mathrm{Q}}^{\mathrm{KD},#1}}
\newcommand{\QPEcond}[3]{\mathbb{E}^{#1}_{\hat{\rho}}(#2|#3)}
\newcommand{\QuantEcond}[3]{\mathbb{E}^{#1}_{\hat{\rho}}(#2|#3)}
\begin{document}

\title{What is special about the Kirkwood-Dirac distributions?\\
Only they produce natural conditional expectations.}
\author[1]{M. Spriet}
\author[1]{C. Langrenez}
\author[2]{R. Brummelhuis}
\author[1]{S. De Bièvre}

\affil[1]{Univ. Lille, CNRS, Inria, UMR 8524, Laboratoire Paul Painlevé, F-59000 Lille, France}
\affil[2]{Univ. Reims Champagne-Ardenne, CNRS, UMR 9008, Laboratoire Math\'ematique de Reims, F-51687 Reims, France}

\maketitle

\begin{abstract}
 Among the many quasiprobability representations of quantum mechanics,  the family of Kirkwood-Dirac (KD) representations  has come to the foreground in recent years. Each such KD  representation  is determined by the choice of two complementary complete sets of commuting observables $\hat A$ and $\hat B$ with respect to which it is Born-compatible, meaning that it correctly reproduces their Born probabilities for every state.    
 We identify in this paper what property uniquely characterizes the KD representations among all such $\hat A$ and $\hat B$ Born-compatible quasiprobability representations. For that purpose, we first define a natural notion of \emph{quantum conditional expectation} of an observable $\hat X$, given an observable $\hat Y$, in a state $\hat \rho$, as a best estimator and we show   that it has the basic properties generally expected of a conditional expectation.  We then show that only the KD representations provide a notion of conditional expectation, given $\hat B$ (or given $\hat A$) 
 that coincides with the above quantum conditional expectation. 
As a byproduct of our analysis, we show a state-dependent no-go theorem. We prove that, if the quantum conditional expectation of an observable $\hat X$, given an observable $\hat Y$ in a state $\hat \rho$ admits an anomalous value (meaning a value lying outside the interval $[x_{\min}, x_{\max}]$), then there cannot exist a Born-compatible joint probability distribution $\mu(x,y)$ for $\hat X$ and $\hat Y$ in the state $\hat \rho$ for which the associated conditional probability $\mu(x|y)$ 
yields a conditional expectation that coincides with the quantum conditional expectation.
 We further apply our findings to revisit a standard model for  phase estimation in quantum metrology. 
 We show in particular that, within the real sector of  a given KD representation, the classical Fisher information of this  phase estimation problem vanishes identically.  
\end{abstract}   

\medskip   
\tableofcontents
\section{\bf Introduction} 

Within quantum mechanics, it is not possible to properly define the joint probability distribution of two incompatible (\emph{i.e.} non-commuting) observables in a coherent manner for all states~\cite{ludwig1983,busch2003,ferrieetal2010, lostaglio2023kirkwood}.  This impossibility is one of the characteristic features of quantum mechanics, distinguishing it from classical mechanics. 
For the prototypical incompatible observables of position and momentum, Wigner famously found a way around this impossibility by associating to each quantum state a quasiprobability distribution for position and momentum, having the correct marginals, but that can take negative values~\cite{wigner1932}. A large variety of other quasiprobability distributions for position and momentum have subsequently been developed and have been proven useful in various fields of physics~\cite{weyl1927, Kirkwood33,Dirac45,moyal1949, Sudarshan63,Glauber63,cagl69a,cagl69b}, even though not all these constructions yield joint quasiprobability distributions: their marginals do not always coincide with the Born-rule probabilities for position and momentum. Extensions of these constructions to abelian groups other than the space translations, and their associated Heisenberg groups, have also been developed~\cite{wootters1987, 
cohendetetal1988,leonhardt1995,leonhardt1996, bouzdb96, FeichtingerKozek98,
wootters2004, gross2006,benyetal2025,debievreetal2025a,spriet2025, NicolaRiccardi2025}.    Other approaches include  coherent state constructions~\cite{klauderskagerstam1985,robertcombescure2021,gazeau2009,perelomov1986}. All these approaches  fit in a general unifying setup using frames and dual frames on the quantum Hilbert space $\mathcal{H}$. This setup  allows to define a quasiprobability distribution $Q(\hat{\rho})$ for all quantum states $\hat{\rho}$ and symbols $\tilde Q(\hat{X})$ for all operators $\hat{X}\in\mathcal{L}(\mcl{H})$ \cite{Fe11, ferrieetal2010}, both defined on a space $\Lambda$, thus forming a quasiprobability representation of quantum mechanics (see Section~\ref{s:QCEQP_new} for details). Here, $\mathcal{L}(\mcl{H})$ is the set of linear operators on $\mcl{H}$; we suppose throughout that $\dim\mathcal H=d<+\infty$. 
 
Among the countless quasiprobability representations of quantum mechanics so obtained feature the ones introduced by Dirac~\cite{Dirac45}.  Suppose one is given two  complete sets of commuting operators (CSCO) $\hat A$ and $\hat B$. Dirac then proposed for each quantum state $\hat{\rho}$ two {\it joint} quasiprobability distributions $Q^{\textrm{KD}, \ell}(\hat{\rho})$ and $Q^{\textrm{KD}, \mathrm r}(\hat{\rho})$ (both with the correct marginals, therefore) on the space $\sigma(\hat A)\times \sigma (\hat B)$, where $\sigma(\hat A)$ and $\sigma(\hat B)$ denote the spectra of $\hat A$ and $\hat B$:
\begin{equation}
    Q^{\textrm{KD}, \ell}_{a,b}(\hat{\rho})=\Tr(\hat{\Pi}_a^{\hat{A}}\hat \rho \hat{\Pi}_b^{\hat{B}}),\quad  Q^{\textrm{KD}, \mathrm r}_{a,b}(\hat{\rho})=\Tr(\hat{\Pi}^{\hat{B}}_ b\hat \rho \hat{\Pi}_a^{\hat{A}}),
\end{equation}
where $\hat{\Pi}_a^{\hat{A}}$ and $\hat{\Pi}_b^{\hat{B}}$ are the one-dimensional spectral projectors of $\hat A$ onto the eigenspace of $a\in\sigma(\hat A)$ and of $\hat B$ onto the eigenspace of $b\in\sigma(\hat B)$, respectively. Throughout, we will assume that $\hat A$ and $\hat B$ are complementary CSCO, meaning that $\hat\Pi_a^{\hat A}\hat\Pi_b^{\hat B}\neq0$ for all $a, b $. The reason for the choice of superscript $\ell$, for ``left'' and $\mathrm r$ for ``right'' will become clear below. 
 $Q^{\textrm{KD}, \ell}$ and $Q^{\textrm{KD}, \mathrm r}$ are today referred to as the Kirkwood-Dirac (KD) distributions~\cite{Dirac45}: they do indeed generalize the construction of Kirkwood~\cite{Kirkwood33} for the special case of position and momentum. The naturally associated  KD symbols $\tilde Q^{\textrm{KD}, \ell}(\hat{X})$ and  $\tilde Q^{\textrm{KD}, \mathrm r}(\hat{X})$ for any operator $\hat{X}$,
 \begin{equation}
 \tilde Q^{\textrm{KD}, \ell}(\hat{X})=\frac{\Tr(\hat{\Pi}_a^{\hat{A}}\hat X \hat{\Pi}_b^{\hat{B}})}{\Tr(\hat \Pi_a^{\hat A}\hat \Pi_b^{\hat B})},\quad \tilde Q^{\textrm{KD}, \mathrm r}(\hat{X})=\frac{\Tr(\hat{\Pi}_b^{\hat{B}}\hat X \hat{\Pi}_a^{\hat{A}})}{\Tr(\hat \Pi_b^{\hat B}\hat \Pi_a^{\hat A})},
 \end{equation}
 correspond  to a choice of ordering between the operators $\hat A$ and $\hat B$: $\hat A$ before $\hat B$ for $\ell$ and $\hat B$ before $\hat A$ for $\mathrm r$. Indeed, the KD symbol $\tilde Q^{\textrm{KD}, \ell}(\hat{X})$ of the operator $\hat{X}=f(\hat A)g(\hat B)$ is readily checked to be the function $(a,b)\mapsto f(a)g(b)$, whereas the KD symbol $\tilde Q^{\textrm{KD}, \textrm r}(\hat{X})$ of $\hat{X}=g(\hat B)f(\hat A)$ is also $(a,b)\mapsto f(a)g(b)$.  (See Section~\ref{s:QCEQP_new} for details.)  The flexibility of this construction, applying as it does to general non-commuting observables $\hat A$ and $\hat B$, has proven to be an asset in an increasing number of applications in the last few years; we refer to the recent reviews~\cite{lostaglio2023kirkwood,arvidssonshukur2024properties} for details.

The question then arises: what singles out the Kirkwood-Dirac quasiprobability representations of quantum mechanics among all possible such representations? The central result of  this paper is that they are, in a precise sense to be explained below (see Theorem~\ref{thm:KDunique1_bis}), the only ones that are well-behaved with respect to a natural notion of conditional expectation in quantum mechanics that we introduce.  Equivalently, the KD conditional expectation is  the only one that can be naturally interpreted as a best estimator and that, as such, coincides with the conditional expectation proposed in weak value physics. We present several applications of these results. We establish a simple state-dependent no-go result for the existence of joint probabilities in quantum mechanics (Section~\ref{s:Q_cond_exp_mean_var}). We show (Section~\ref{s:KD_phase_insensitive}) that in a standard model for phase estimation in quantum metrology, the Fisher information vanishes within the real sector of a given KD representation.  This result provides a novel interpretation of the real sector of KD representations and allows us to show that the optimal measurement yielding the quantum Fisher information cannot be KD real (Proposition~\ref{prop:LnotKDreal}). 
We further explain to what extent KD-positive states (for which $Q^{\mathrm{KD}}_{a,b}(\hat\rho)\geq 0$), can be considered ``classical'' by exhibiting some of their nonclassical features.

 In the rest of this introduction, we outline the paper. To precisely state, and then establish our results, we first need to revisit the notions of conditional expectation used in probability theory on the one hand (Section~\ref{s:clas_cond}) and in quantum mechanics on the other hand (Sections~\ref{s:Q_cond_exp_min} and~\ref{s:QCEQP_new}). We shall use two distinct approaches to the definition of quantum conditional expectation. 
One  approach consists in adapting to quantum mechanics the standard conditional expectation  $\mathbb{E }_{\mathbb{P } } ( X | Y ) $ of a random variable $X$ given a random variable $Y$, defined as  the complex-valued function $f(Y) $ of $Y$ that minimizes  the mean squared error
\begin{equation}\label{eq:L2classical}
     {\mathbb E_{\mathbb P}(|X-f(Y)|^2)}:=\int_\Omega |X(\omega)-f(Y(\omega))|^2 \mathrm d\mathbb P(\omega),
 \end{equation}
$\mathbb{P } $ being the underlying probability measure on the space $\Omega$ on which the random variables $X$ and $Y$ are defined. (See Section~\ref{s:clas_cond}.)  In statistics,  $\mathbb{E }_{\mathbb{P } } ( X | Y ) $  is referred to as the best predictor or best estimator of $X$ among all complex-valued functions of $Y$. The idea that underlies this definition is that the law of $\mathbb E_{\mathbb P}(X|Y)$ provides some information about the law of $X$.    
For example, it is easily checked that  their first moments agree:
\begin{equation}\label{eq:iterated_class}
    \mathbb E_{\mathbb P}(\mathbb E_{\mathbb P}(X|Y))=\mathbb E_{\mathbb P}(X).
\end{equation}
This relation, known as the iterated expectation formula, expresses the fact that the best estimator $\mathbb{E }_{\mathbb P }(X|  Y ) $ of $X$ is ``unbiased''.
In addition, one has the following well-known relation between their variances,  referred to as the law of total variance or the variance decomposition formula:
\begin{equation}\label{eq:variances_compare_class_intro}
\Delta^2 X=\Delta^2 \mathbb E_{\mathbb{P}}(X|Y) +  \mathbb{E}_{\mathbb{P}}(|X-\mathbb E_{\mathbb{P}}(X|Y)|^2)=\Delta^2 \mathbb E_{\mathbb P}(X|Y) +  \mathbb{E}_{\mathbb{P}}\left(\Delta^2_{\mathbb P}(X|Y)\right).
\end{equation}
This relation decomposes the total variability of $X$ in a term due to the variability of the conditional expectation $\mathbb E_{\mathbb{P}}(X|Y)$, plus an error term evaluating the fluctuations of $X$ around this conditional expectation. This error term equals the expected value of the conditional variance $\Delta^2_{\mathbb P}(X|Y)$ of $X$, given $Y$.  In particular, the smaller is this mean squared error, the closer are their variances. In that case, the variability of $X$ is considered to be well explained by that of $\mathbb E_{\mathbb P}(X|Y)$.  

In Section~\ref{s:Q_cond_exp_min}, we define a conditional expectation in the quantum mechanical context, using a similar minimization argument~\cite{dressel2015, brummelhuis2024,Brummelhuis2025a,Brummelhuis2025b}. We will then show that  identities (see Eq.~\eqref{eq:ConsMean} and~\eqref{eq:Variance}) analogous to, but in important ways strikingly different from   Eq.~\eqref{eq:iterated_class} and Eq.~\eqref{eq:variances_compare_class_intro}  hold for this quantum conditional expectation. 
Let $\hat{X}$ be  an operator on a Hilbert space $\mathcal{H}$ and let  $\hat{Y}$ be a CSCO. Then, for any  given mixed state $\hat{\rho} $ (positive trace-class operator of trace 1), we consider the following expression, 
\begin{equation}\label{eq:L2quantum}
 {\rm Tr } (\hat{\rho} (\hat{X} - f(\hat{Y}))^\dagger (\hat{X} - f(\hat{Y}) ))={\rm Tr}( (\hat{X}-f(\hat{Y}))\hat{\rho} (\hat{X}-f(\hat{Y}))^\dagger),
 \end{equation}
 where $f$ is complex valued, and which is a natural analog of Eq.\eqref{eq:L2classical}.
 We then  define, following~\cite{Brummelhuis2025a, Brummelhuis2025b}, the conditional expectation $\mathbb{E }_{\hat{\rho} }^\ell (\hat{X} | \hat{Y} ) $ as that operator function $f(\hat{Y})$ of $\hat{Y}$ which minimizes the expression in Eq.\eqref{eq:L2quantum}. (See Definition~\ref{def: left/right Econd}.)  As in classical probability theory, where the conditional expectation depends on the underlying probability $\mathbb P$, the quantum conditional probability $\mathbb{E }_{\hat{\rho} }^\ell (\hat{X} | \hat{Y} ) $ depends on the state $\hat \rho$. It can be  computed explicitly:
\begin{equation}\label{eq:left_cond_exp_intro}
    \mathbb{E}^{\ell}_{\hat{\rho}}(\hat{X}|\hat{Y})=\sum_{y\in \sigma(\hat{Y})}\frac{\langle \varphi_y^{\hat{Y}},\hat{X}\hat{\rho}\varphi_y^{\hat{Y}}\rangle}{\langle \varphi_y^{\hat{Y}},\hat{\rho}\varphi_y^{\hat{Y}}\rangle}\hat{\Pi}_y^{\hat{Y}},
\end{equation}
where, for technical reasons, and to ensure uniqueness of the solution,  $\hat\rho$ is supposed to belong to 
 \begin{equation}
     D_{\hat{Y}}=\{\hat{\rho}\in D(\mathcal{H}): \forall y\in \sigma(\hat{Y}), \ \Tr(\hat{\rho} \hat{\Pi}_y^{\hat{Y}})\neq 0\: \}.
 \end{equation}
  Note that $\mathbb E_{\hat\rho}^\ell(\hat X|\hat Y)$ is not necessarily self-adjoint, even if $\hat X$ is. This is a first marked difference from what happens in probability theory: if $X$ is a {\it  real} random variable, then the conditional expectation $\mathbb E_{\mathbb P}(X|Y)$ is also real-valued. The physical meaning of the real and imaginary parts of $\mathbb E_{\hat \rho}(\hat X|\hat Y)$ will be further discussed in Sections~\ref{s:Q_cond_exp_min} and~\ref{s:Q_cond_exp_int}. 

The coefficients in Eq.~\eqref{eq:left_cond_exp_intro} are so-called ``weak values''. They were introduced in~\cite{Aharonov88}, where they were given an operational meaning through an experimental procedure referred to as a weak measurement, that involves the weak coupling of the system to a meter whose position and/or momentum can be measured. For completeness and for the reader's convenience, we describe this setup and the relevant analysis in some detail in Appendix~\ref{app:weakvalues}.  The nature of this experimental procedure  has subsequently suggested the interpretation of the weak values as conditional expectations~\cite{hofmann2011,steinberg1995}. 
This course of events may leave one with the impression that the notion of conditional expectation in quantum mechanics depends on the experimental setup associated with weak values. Our result in Eq.~\eqref{eq:left_cond_exp_intro} shows that this is not the case.  On the contrary, weak values and their interpretation in terms of  quantum conditional expectations arise naturally, in an experiment-independent manner, from the definition of conditional expectations via a quadratic minimization problem analogous to the one used to define conditional expectations in probability theory. In particular, the conditional expectation of $\hat X$ given $\hat Y$ can be viewed as a ``best estimator'' of $\hat X$ by a function of $\hat Y$. Following this line of thought, the possibility of experimental determination of the coefficients of $\mathbb E^\ell_{\hat\rho}(\hat X|\hat Y)$ in Eq.~\eqref{eq:left_cond_exp_intro}  in a weak value experiments is then an a posteriori observation that provides an operational interpretation to the quantum conditional expectation.

Even if one does agree that defining conditional expectations via a minimization procedure is appealing and elegant, one may still wonder why one should choose to minimize a quadratic error.  In order to give additional justification for this choice beyond the fact that it provides a natural analogue of the classical definition and beyond the operational interpretation in terms of weak values just given, we provide, in Section~\ref{s:Q_econd_char}, a characterization of the quantum conditional expectation in Eq.~\eqref{eq:left_cond_exp_intro}:  we show in Theorem~\ref{thm:UniqueEcond} that, for $\hat \rho \in D_{\hat{Y}}$, the map $\hat X\mapsto \mathbb{E}_{\hat\rho}^\ell(\hat X|\hat Y)$ is uniquely characterized by the following two properties, naturally associated with a conditional expectation and which also characterize conditional  expectations in probability theory:
\begin{enumerate}   
\item [(i) ]  Pull-out formula: $\mathbb{E }_{\hat{\rho} }^\ell ( f(\hat Y) \hat{X} | \hat Y ) = f(\hat Y) \mathbb{E }_{\hat{\rho} }^\ell(\hat{X} | \hat Y ) $, for all functions $f(\hat Y) $ of $\hat Y$ and for all $\hat{X} \in \mathcal L(\mcl{H}) $;  
   
\item [(ii) ] Iterated expectations:  $\mathbb{E }_{\hat{\rho} } ( \mathbb{E }_{\hat{\rho} }^\ell ( \hat{X} | \hat Y ) ) = \mathbb{E } _{\hat{\rho} } (\hat{X} ) $.   
\end{enumerate}  
In (ii), we wrote $\mathbb{E }_{\hat{\rho} } (\hat{X} ) := {\rm Tr } ( \hat{X} \hat{\rho} ) $, to stress the similarity with the analogous properties that hold for the probabilistic notion of conditional expectation,  with $\hat{\rho} $ playing the r\^ole of the probability measure $\mathbb{P}$.  Property~(ii) expresses the fact that the best estimator $\mathbb{E }_{\hat{\rho} }^\ell(\hat{X} | \hat Y ) $ of $\hat X$ is ``unbiased'' in the sense that it has the same expected value (in the state $\hat \rho$) as $\hat X$ itself. It is the analog of Eq.~\eqref{eq:iterated_class} in probability theory. 

Finally, in Proposition~\ref{prop:variances_condexp1} we show a quantum version of the law of total variance
\begin{eqnarray}\label{eq:Variance_intro}
\Delta_{\hat \rho}^{\ell/\mathrm{r},2}\hat X 
 &=& \Delta_{\hat \rho}^{\ell/\mathrm{r},2}\mathbb E_{\hat \rho}^{\ell/\mathrm{r}}(\hat X|\hat Y)+\mathbb E_{\hat\rho}(\Delta^{\ell/r,2}_{\hat \rho} (\hat X|\hat Y)).
\end{eqnarray}
Here  $\Delta^{\ell/r,2}_{\hat \rho} (\hat X|\hat Y)$ is the conditional variance of $\hat X$, given $\hat Y$, defined in Eq.~\eqref{eq:cond_var}. This equation is clearly analogous to Eq.~\eqref{eq:variances_compare_class_intro}. 
It  shows that the variance of $\hat X$ equals the sum the variance of $\mathbb E_{\hat\rho}^{\ell}(\hat X|\hat Y)$ and of the expected value of the  conditional variance of $\hat X$, given $\hat Y$.  We use these results to derive a sharpened version of an additive uncertainty principle first proposed in~\cite{hofmann2011}, see Section~\ref{s:uncertainty}.

Having stressed the analogies between the quantum and classical (by which we mean here probabilistic) conditional expectations, we then identify a number of marked differences between them in Section~\ref{s:Q_cond_exp_mean_var} and explain their physical interpretation. We show in particular a simple but effective no-go theorem for the existence of joint probabilities. Consider $\hat \rho,\hat X$ and $\hat Y$  such that the quantum conditional expectation has a value $\mathbb E^{\ell}_{\hat\rho}(\hat X|\hat Y=y)$ that falls outside the interval $[x_{\min}, x_{\max}]$ (where $x_{\min,\max}$ are the extremal eigenvalues of $\hat X$): such values are said to be anomalous.   Then there does not exist a joint probability distribution for $\hat X$ and $\hat Y$ with the right Born-marginals in $\hat \rho$, and such that the corresponding conditional expectation for $\hat X$, given $\hat Y$, constructed with Bayes' rule, equals the quantum conditional expectation $\mathbb E^{\ell}_{\hat\rho}(\hat X|\hat Y=y)$. We refer to Lemma~\ref{lem:nogo} for a precise statement. The strength of this result lies in the fact that it puts a requirement only on a fixed triplet $(\hat\rho, \hat X,\hat Y)$ to preclude the existence of a joint probability. It also gives a precise meaning to the statement that the presence of anomalous values of the quantum conditional expectation points to ``nonclassicality''.

The superscript $\ell$ appearing in $\mathbb E_{\hat \rho}^{\ell}(\hat X|\hat Y)$ stands for ``left'' because in property (i) above $f(\hat Y)$ appears to the left of $\hat X$. 
In fact, although the above definition of a quantum conditional expectation as  a best estimator is very natural, the choice of  Eq.\eqref{eq:L2quantum} as the quantum equivalent to Eq.\eqref{eq:L2classical} of the quantity to be minimized  surreptitiously hides a choice of operator ordering. This should be expected, since the passage from the classical observable $\mathbb E_{\mathbb P}(X|Y)$ to a quantum equivalent involves  a choice  of quantization of observables, and as such can be expected to imply a choice of ordering. Indeed, an equally natural quantum analog of  Eq.\eqref{eq:L2classical} is
\begin{equation}\label{eq:L2quantumbis}
 {\rm Tr } (\hat{\rho} (\hat{X} - f(\hat{Y})) (\hat{X} - f(\hat{Y}) )^{{\dagger}})={\rm Tr}( (\hat{X}-f(\hat{Y}))^\dagger\hat{\rho} (\hat{X}-f(\hat{Y}))).
 \end{equation}
 Note that this quantity differs from Eq.\eqref{eq:L2quantum} only by the order in which $(\hat X-f(\hat Y))$ and $(\hat X-f(\hat Y))^\dagger$ appear. The same minimization procedure as above then leads to a different conditional expectation, that we denote by $\mathbb{E }_{\hat{\rho} }^{\mathrm r} (\hat X | \hat Y ) $. 
Again, the map $\hat X\mapsto \mathbb{E}_{\hat\rho}^{\mathrm r}(\hat X|\hat Y)$ is uniquely characterized by property (ii) above as well as by the following version of (i):
 \begin{equation}
\mathbb{E }_{\hat{\rho} }^{\mathrm r} (  \hat{X} f(\hat Y) | \hat Y ) = f(\hat Y) \mathbb{E }_{\hat{\rho} }^{\mathrm r}(\hat{X} | \hat Y ) , \quad \forall f(\hat Y),  \forall \hat{X} \in \mathcal L(\mathcal H),
\end{equation}
in which $f(\hat Y)$ appears to the right of $\hat X$. The ``left'' and ``right'' conditional expectation are simply related to each other:
\begin{equation}
\label{eq:link left right econd}
  \mathbb{E }_{\hat{\rho} }^{\mathrm r} (  \hat{X} | \hat Y )  =\mathbb{E }_{\hat{\rho} }^{\ell} (  \hat{X}^\dagger |\hat Y ) ^\dagger. 
\end{equation}
Contrary to the classical conditional expectation of a real random variable, which is real, the quantum conditional expectations $\mathbb E_{\hat\rho}^{\ell/\mathrm r}(\hat X|\hat Y)$ are not necessarily self-adjoint, even if $\hat X$ is. When, on the other hand, either $\mathbb E_{\hat\rho}^{\ell}(\hat X|\hat Y)$ or $\mathbb E_{\hat\rho}^{\mathrm r}(\hat X|\hat Y)$ is self-adjoint, for some self-adjoint $\hat X$, we have
\begin{equation}\label{eq:QCE_selfadjoint}
 \mathbb{E }_{\hat{\rho} }^{\mathrm r} (  \hat{X} | \hat Y )^{\dagger} =\mathbb{E }_{\hat{\rho} }^{\mathrm r} (  \hat{X} | \hat Y )  =\mathbb{E }_{\hat{\rho} }^{\ell} (  \hat{X} | \hat Y )=\mathbb{E }_{\hat{\rho} }^{\ell} (  \hat{X} | \hat Y )^\dagger. 
\end{equation}
In other words, in that case  the choice of ordering plays no role and the ``left'' and ``right'' conditional expectations are identical.   

There exists a second definition of conditional expectation that can be used in quantum mechanics, and that we revisit in Section~\ref{s:QCEQP_new}. It starts from the observation that, in probability theory, the notion of conditional expectation  $\mathbb{E }_{\mathbb{P } } ( X | Y ) $   of a random variable $X$, given a random variable $Y$, both defined on an underlying probability space $\Omega$ with probability $\mathbb P$ (both assumed to be discrete), can be defined in terms of their conditional probability distribution $\mathbb P(X=x|Y=y)$, which in turn is defined in terms of their joint probability distribution $\mathbb{P}(X=x, Y=y)$. Since, as pointed out above, such joint probabilities do not exist in quantum mechanics for non-commuting observables, this definition cannot straightforwardly be adapted to the quantum context. One approach to resolve this issue consists in replacing probabilities by quasiprobabilities and then proceeding in analogy with the classical treatment~\cite{steinberg1995,johansen2007}. As we explain in Section~\ref{s:QCEQP_new}, this approach leads to a notion of conditional expectation $\QPEcond{Q}{\hat{X}}{\hat{Y}}$ as an operator that  depends not only on $\hat \rho$, but also on the quasiprobability representation used in its definition.  In addition, as we will show, the conditional expectation $\QPEcond{Q}{\hat{X}}{\hat{Y}}$ so defined  does not generally have all the natural properties one would expect from a conditional expectation. In particular,  whereas it  does always satisfy the law of iterated expectations (property~(ii) above), it does  not in general satisfy the pull-out property (property~(i) above). This is not surprising since, as we saw, the pull-out property together with the law of iterated expectations does uniquely fix the definition of conditional expectation (Theorem~\ref{thm:UniqueEcond}). 

After these preparatory developments, we turn in Section~\ref{s:uniqueKD} and Section~\ref{s:QPR give unique Econd} to our principal result, which is the unique characterization of the KD quasiprobability representations, introduced in Section~\ref{s:KDdef}. We consider the set of all quasiprobability representations $(Q,\tilde Q)$, defined on some finite set $\Lambda$ with $d^2$ elements, that are Born-compatible with two given CSCO $\hat A$ and $\hat B$. This means that the marginals of the quasiprobability distribution $Q(\hat\rho)$ of any state $\hat \rho$ with respect to the symbols $\tilde Q(\hat A)$ and $\tilde Q(\hat B)$ yield the correct quantum mechanical Born probabilities for these two observables. (See Definition~\ref{def:Borncompat} for the precise definition.) Depending on the goal pursued, the set $\Lambda$ is sometimes referred to as the ``ontic space'' or simply as the ``phase space'' of the quasiprobability representation. Our central result concerning the KD representations of quantum mechanics is then summed up in the following theorem:

\begin{restatable}{thm}{maintheorem}
\label{thm:KDunique1_bis} Let $\hat A$ and $\hat B$ be complementary $\textrm{CSCO}$. Let $(Q,\tilde Q)$ be an $\hat A$ and $\hat B$-compatible quasiprobability representation of quantum mechanics defined on a set $\Lambda$ ($\left|\Lambda\right|=d^2)$. Then the following are equivalent:
\begin{itemize}
\item[(i)] $\forall \hat \rho\in D_{\hat B},\forall \hat X\in \mathcal L(\mathcal H), \quad \mathbb E^Q_{\hat \rho}(\hat X|\hat B)=\mathbb E^\ell_{\hat \rho}(\hat X|\hat B)$ 

\item[(ii)] $\forall \hat \rho\in D_{\hat A},\forall \hat X\in \mathcal L(\mathcal H), \quad  \mathbb E^Q_{\hat \rho}(\hat X|\hat A)=\mathbb E^{\mathrm{r}}_{\hat \rho}(\hat X|\hat A)$

\item[(iii)]   There exists a bijective map $\Phi :\Lambda\to \sigma(\hat A)\times \sigma(\hat B)$ such that for all $(a,b)\in\sigma(\hat A)\times \sigma(\hat B)$
\begin{equation*}
    \forall \hat \rho\in D(\mathcal{H}), Q_{\Phi^{-1}(a,b)}(\hat \rho)= Q^{\mathrm{KD},\ell}_ {a,b}(\hat \rho) \mathrm{ \ and \ } \forall \hat X\in \mathcal{L}(\mathcal{H}), \tilde{Q}_{\Phi^{-1}(a,b)}(\hat X)= \tilde{Q}^{\mathrm{KD},\ell}_ {a,b}(\hat X).
\end{equation*}
\end{itemize}
\end{restatable}
The theorem can be paraphrased by saying that of all $\hat A$ and $\hat B$ Born-compatible quasiprobability representations of quantum mechanics, only the left KD representation  has conditional expectations that coincide with the best estimators $\mathbb E_{\hat\rho}^\ell(\hat X|\hat B)$ and $\mathbb E_{\hat\rho}^{\mathrm r}(\hat X|\hat A)$. Theorem~\ref{thm:KDunique1_bis} is a generalization of a result announced by the authors in~\cite{spriet2025b}. In that work, the class of quasiprobability representations among which the KD representation is proven to be unique was considerably restricted by the a priori assumption that that $\Lambda=\sigma(\hat A)\times \sigma(\hat B)$. When using quasiprobability representations in the study of foundational issues, such as contextuality and hidden variables, this is not a natural assumption, that we lifted here. 

We will provide two independent proofs of this result. One uses the characterization of $\mathbb E^{\ell}(\hat X|\hat Y)$ via the pull-out property (Theorem~\ref{thm:KD_unique_pullout}). The other shows it follows from a slightly stronger statement (Theorem~\ref{thm:StrongerMain}) that is itself  based on Theorem~\ref{thm:QPR give unique Econd} which is of interest in its own right: it shows that, quite generally, $\hat A$ and $\hat B$ Born-compatible quasiprobability representations of quantum mechanics are always uniquely determined by their conditional expectations given $\hat A $ or given $\hat B$. 
The proof of Theorem~\ref{thm:QPR give unique Econd} uses the techniques developed here with an argument found in~\cite{jordan2026}, where a partial result similar to Theorem~\ref{thm:StrongerMain}, but limited to the case where $\Lambda=\hat\sigma(\hat A)\times \hat\sigma(\hat B)$, was recently announced and partially proven.

As we saw, one of the striking differences between classical and quantum conditional expectations is that the latter can be non-self-adjoint. In addition, even if they are self-adjoint, they can take classically forbidden values (Lemma~\ref{lem:nogo}) and, in this sense, still signal typical quantum behaviour such as, for example, weak value amplification or quantum advantage in quantum metrology~\cite{Arvidsson-Shukur2020, lupu2022negative,arvidsson2023closed,arvidssonshukur2024properties}. In Section~\ref{s:Q_cond_exp_int} we further analyse the meaning of the imaginary part of the quantum conditional expectation values and more specifically of its vanishing. For that purpose, we first recall the role played by the imaginary part of the quantum conditional expectation in a well-known problem of phase estimation in quantum mechanics. We consider a one-parameter family of states 
\begin{equation}
\hat\rho_{\hat X}(\theta)=\exp(-i\theta \hat X)\hat\rho \exp(i\theta\hat X),
\end{equation}
where $\theta\in\R$ is referred to as the phase and $\hat X$ is self-adjoint. The outcome probabilities of repeated measurements of a CSCO $\hat Y$ in $\hat\rho_{\hat X}(\theta)$  are given by 
$p(y;\theta)=\Tr(\hat\Pi_y^{\hat Y}\hat\rho_{\hat X}(\theta))$ ($y \in \sigma (\hat{Y } ) $), and we write $\mathrm{I_F}(\hat Y;\theta)$ for the  Fisher information of $p(y;\theta)$. We first then recall the well-known relation~\cite{hofmann2011, Dressel_Jordan_2012_ImaginaryPartWeakValue}  between the above Fisher information $\mathrm{I_F}(y;\theta)$ and the variance of the imaginary part of $\mathbb E_{\hat\rho_{\hat X}(\theta)}^{\ell/\mathrm{r}}(\hat X|\hat Y)$: 
 \begin{equation} \label{eq:FI_intro}     
\mathrm{I}_{\mathrm{F}} (\hat Y;\theta ) = 
4 {\rm Tr } \left( \left( {\rm Im } \, \mathbb{E } ^{\ell/\mathrm r }_{\hat{\rho }_{\hat{X } } (\theta ) } (\hat{X } | \hat{Y } )\right) ^2 \hat{\rho }_{\hat{X } } (\theta ) \right) .      
\end{equation}
Since it is well known that, in order to be able to obtain a good estimate on $\theta$, one needs a large value of the Fisher information, the imaginary part of the conditional expectation $\mathbb{E } ^{\ell/\mathrm r }_{\hat{\rho }_{\hat{X } } (\theta ) } (\hat{X } | \hat{Y } )$ contains information on the phase $\theta$. 
For that reason, we will say that  a state is phase insensitive (for a given choice of $\hat X$ and of $\hat Y$) if its Fisher information $\mathrm{I_F}(\hat Y;0)$ vanishes. To further justify this terminology, we give an operational interpretation of the vanishing of a Fisher information in Appendix~\ref{app:vanishingFI}. As pointed out above, in the context of the phase estimation problem considered, vanishing of the associated Fisher information is equivalent to saying that the conditional expectation $\mathbb E_{\hat \rho}^{\ell/\rm r}(\hat X|\hat Y)$ is self-adjoint.
When this is the case, the probability distribution $p(y;\theta)$ cannot be used to efficiently estimate $\theta$.  Using the quantum law of total variance (Eq.~\eqref{eq:Variance_intro}), we sharpen an additive uncertainty relation first established in~\cite{hofmann2011} between the Fisher information $I_{\mathrm{F}}(\hat Y,0)$ and the variance of the real part of the conditional expectation, denoted by $\mathbb E_{\hat\rho}^{\rm sa}(\hat X|\hat Y)$. We show
\begin{equation}
    \min_{\hat Y} \Delta_{\hat \rho}^{2}\mathbb E_{\hat \rho}^{\mathrm{sa}}(\hat X | \hat Y) \leq \Delta_{\hat \rho}^{2}\mathbb E_{\hat \rho}^{\mathrm{sa}}(\hat X | \hat L)\leq\Delta^{2}_{\hat \rho} \hat X -\frac14I_{\mathrm{QF}}(0)\leq \Delta^{2}_{\hat \rho} \hat X= \max_{\hat Y} \Delta_{\hat \rho}^{2}\mathbb E_{\hat \rho}^{\mathrm{sa}}(\hat X | \hat Y).
\end{equation}
Here $I_{\mathrm{QF}}(0)$ is the quantum Fisher information of $\hat\rho$ and $\hat X$ and $\hat L$ is the symmetric logarithmic derivative of $\hat \rho_{\hat X}(\theta)$. (See Appendix~\ref{app:Fisher} for an introduction to the quantum Fisher information.) When $\hat\rho$ is pure, one more precisely has
\begin{equation}
    \min_{\hat Y} \Delta_{\hat \rho}^{2}\mathbb E_{\hat \rho}^{\mathrm{sa}}(\hat X | \hat Y)= \Delta_{\hat \rho}^{2}\mathbb E_{\hat \rho}^{\mathrm{sa}}(\hat X | \hat L)=\Delta^{2}_{\hat \rho} \hat X -\frac14I_{\mathrm{QF}}(0)=0,\quad  \Delta^{2}_{\hat \rho} \hat X= \max_{\hat Y} \Delta_{\hat \rho}^{2}\mathbb E_{\hat \rho}^{\mathrm{sa}}(\hat X | \hat Y),
\end{equation}
where the maximum on the right is reached when $\hat \rho$ is phase insensitive for the given choice of $\hat X$ and $\hat Y$.  In other words, when $\hat Y$ is chosen close to $\hat L$, then the Fisher information $I_{\mathrm F}(\hat Y;0)$
is close to its maximal value $I_{\mathrm{QF}}$ and the variance of $\mathbb E^{\rm sa}_{\hat \rho}(\hat X|\hat Y)$ is close to its minimal value; and vice versa.

In light of this analysis, it is of interest to know for which triplets $\hat\rho, \hat X, \hat Y$, the Fisher information $\mathrm{I_F}(\hat Y;0)$ vanishes; we call such triplets phase insensitive. For that purpose, we place ourselves in the framework of KD representations and exploit Theorem~\ref{thm:KDunique1_bis}. We show that, if $\hat\rho$ and $\hat X$ have a real KD symbol with respect to some CSCO $\hat A$ and $\hat B=\hat Y$, then  $\mathrm{I_F}(\hat Y;0)$ vanishes (Section~\ref{s:KD_phase_insensitive}); hence $(\hat \rho, \hat X, \hat B)$ form a phase-insensitive triplet.   This observation provides an interpretation of the ``real sector'' of a given KD representation and shows it is inefficient for the above phase estimation problem.
Since recent work has shown that the set of all self-adjoint operators with a real KD symbol can in many cases be explicitly identified~\cite{langrenez2023characterizing2, debievreetal2025a, spriet2025}, these results also provide many examples of phase insensitive states. 
Finally, supposing that $\hat\rho$ and $\hat X$ are KD real, we investigate their associated quantum Fisher information. We show that the latter is non-vanishing if and only if $\hat \rho$ and $\hat X$ do not commute. It is well known that it is obtained by optimizing the choice of observable $\hat Y$, taking it to be equal to the symmetric logarithmic derivative $\hat L$ of $\hat\rho_{\hat X}(0)$.   We then show that $\hat L$ cannot be KD real when $\hat \rho$ is pure (Proposition~\ref{prop:LnotKDreal}). In other words, if $\hat \rho$ and $\hat X$ are KD-real, then the optimal observable $\hat L$ is not KD real.

It should be noted that a triplet $(\hat \rho, \hat X, \hat B)$ can be phase insensitive, with $\hat\rho$ KD positive, while the conditional expectation $\mathbb E^{\ell}_{\hat\rho}(\hat X|\hat B)$ still allows for anomalous values. This shows once again that the word ``classical'' needs to be given a precise meaning when used.

We have completed this paper with a number of appendices with the goal to make it reasonably self contained and to provide, for the reader's convenience, proofs of important properties not necessarily easily accessible in the literature. In Appendix~\ref{app:weakvalues}, we provide a short, but rather complete introduction to weak value theory and in particular to weak value measurement. We have included a discussion of the conditional expectations and variances of meter positions in the weak limit which connects with recent work~\cite{Ogawa_2021}. In Appendix~\ref{app:vonNeumann}, we briefly explain the link between our definition of conditional expectations and a definition used in the context of von Neumann algebras, restricted to the finite dimensional setup that is ours. 
Appendix~\ref{app:Wigner_KD} contains illustrative examples and counterexamples. Appendix~\ref{app:vanishingFI} gives operational interpretations of the vanishing of the Fisher information for parameter estimation. Appendix~\ref{app:Fisher} provides an introduction to the quantum Fisher information. Using the ideas developed in the paper, we give a simple proof of the result of Braunstein and Caves showing that the Helstr\"om-Holevo quantum Fisher information is  obtained as the maximum over all possible measurements of the classical Fisher information associated to those measurements. In Appendix~\ref{app:variational_char} we provide a variational characterization of the KD distribution of a state.

\medskip

 It is a pleasure to dedicate this work to Jean-Pierre Gazeau on the occasion of his 80th birthday. Jean-Pierre has been for decades an efficient and enthusiastic advocate of phase space representations of quantum mechanics, coherent states and frames, and their applications in quantum theory and signal analysis. We hope he will find our contribution of interest. SDB is particularly grateful to Jean-Pierre for having welcomed him in the French scientific world over 35 years ago and for his continued support and friendship throughout all these years. 

\section{Classical conditional expectation and conditional variance}\label{s:clas_cond}

In this section, we collect some basic definitions and properties of the classical conditional expectation as used in  probability theory.
This allows us to fix notation and prepare for the quantum formulation.

Let $\Omega$ be a finite set. Let $X:\Omega \to \mathbb{C}$ be a complex random variable and $Y:\Omega \to \mathbb{R}^m$ be a real vector-valued random variable. We write $\mathrm{Ran}(X)$ for the range of $X$, meaning
\begin{equation}
 \mathrm{Ran}(X)=\{x\in\mathbb C\mid \exists \omega \in \Omega, X(\omega)=x\}=X(\Omega),
\end{equation}
and similarly for $Y$. We denote by $D_{Y}$ the set of probability measures $\mathbb{P}$ on $\Omega$ such that for each $y\in \mathrm{Ran}(Y)$, $\mathbb{P}(Y=y)> 0$. The space of complex random variables on $\Omega$ is denoted by $\mathcal F(\Omega;\mathbb C)$. 
We first recall the elementary definition of ``conditional expectation'' of a random variable. 

\begin{definition}\label{def:condexpclass1}
Let $\mathbb{P}\in D_Y$. For $y\in \mathrm{Ran}(Y)$ and $x\in \mathbb{C}$, the conditional probability that $X$ takes the value $x$ knowing $Y=y$ is given by
\begin{equation}
    \mathbb{P}(X=x|Y=y)=\frac{\mathbb{P}(X=x,Y=y)}{\mathbb{P}(Y=y)}.
\end{equation}
The conditional expectation of $X$ knowing that $Y=y$ is  
\begin{equation}\label{eq:condXYprimitive}
    \mathbb{E}_{\mathbb{P}}(X|Y=y)=\sum_{x\in \mathrm{Ran(X)} }x\mathbb{P}(X=x|Y=y)=\sum_{x\in \mathrm{Ran(X)}}x\frac{\mathbb{P}(X=x,Y=y)}{\mathbb{P}(Y=y)}.
\end{equation}
\end{definition}
We will, as  usual, define  the conditional expectation of $X$, knowing $Y$, denoted by $\mathbb{E}_{\mathbb{P}}(X|Y)$, as the function
\begin{equation}
    \mathbb{E}_{\mathbb{P}}(X|Y)(y)=    \mathbb{E}_{\mathbb{P}}(X|Y=y).
\end{equation}
Note that $\mathbb{E}_{\mathbb{P}}(X|Y)$ is a random variable that is a function of $Y$. Also, 
\begin{equation}
\label{eq:f(Y)_ind_Y_clas}
\mathbb{E}_{\mathbb P}(f(Y)|Y)=f(Y)  \mathrm{\ and \ } \mathbb{E}_{\mathbb P}\left(\mathbb{E}_{\mathbb P}(X|Y)\right)=\mathbb{E}_{\mathbb P}(X),
\end{equation}
as is readily checked.  Finally, if $X$ is a {\it  real} random variable, then $\mathbb E_{\mathbb P}(X|Y)$ is also real-valued. As we will see, this is a marked difference from what happens in quantum mechanics, where the conditional expectation of an observable can be non self-adjoint.

We note that this definition, and all the properties that follow in this section, are strongly dependent on the initially chosen probability $\mathbb{P}$ on $\Omega$. We indicate this dependence in the notation, to anticipate on   the  quantum conditional probability defined in the next section, which also depends on  the state $\hat{\rho}$ considered.

We now provide two distinct characterizations of $\mathbb{E}_{\mathbb{P}}(X|Y)$ that will be essential in the following sections in order to define a quantum conditional expectation. 
For $\mathbb{P}\in D_{Y}$, we equip $\mathcal F(\Omega; \mathbb C)$ with the following sesquilinear form:
\begin{equation}\label{eq:L2Omegainnerproduct}
    \langle X, X'\rangle_{\mathbb{P}} ={\mathbb E}_{\mathbb P} (\overline X X')=\int_{\Omega} \overline X(\omega) X'(\omega)\mathrm d\mathbb{P}=\sum_{\omega\in\Omega} \overline X(\omega)X'(\omega) \mathbb P(\{\omega\}).
\end{equation}
\begin{Theorem}
\label{thm:classical_minimization}
Let $\mathbb{P}\in D_Y$. The conditional expectation of $X$ knowing $Y$ is the unique minimizer of
\begin{equation}\label{eq:minimize_dist_Y}
    f\mapsto \mathbb{E}_{\mathbb{P}}(|X-f(Y)|^2)=\langle X-f(Y), X-f(Y)\rangle_{\mathbb{P}},
\end{equation}
over the set of 
complex valued functions $f$ defined on $\mathrm{Ran}(Y)$.
\end{Theorem}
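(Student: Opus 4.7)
The plan is to give a direct pointwise-minimization proof: partition $\Omega$ according to the value of $Y$, turn the minimization over functions $f$ on $\mathrm{Ran}(Y)$ into an independent scalar minimization problem at each $y \in \mathrm{Ran}(Y)$, and solve each such problem by completing the square. An alternative, essentially equivalent, route would be to view $\{f(Y) : f \colon \mathrm{Ran}(Y) \to \mathbb C\}$ as a linear subspace of $(\mathcal F(\Omega;\mathbb C), \langle\cdot,\cdot\rangle_{\mathbb P})$ and invoke orthogonal projection, but the pointwise approach handles more gracefully the fact that $\langle\cdot,\cdot\rangle_{\mathbb P}$ need not be definite on all of $\mathcal F(\Omega;\mathbb C)$.

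First I would rewrite
$$\mathbb E_{\mathbb P}(|X-f(Y)|^2) = \sum_{\omega\in\Omega}|X(\omega)-f(Y(\omega))|^2\,\mathbb P(\{\omega\}) = \sum_{y\in\mathrm{Ran}(Y)} S_y(f(y)),$$
where $S_y(c) := \sum_{\omega : Y(\omega)=y} |X(\omega)-c|^2\,\mathbb P(\{\omega\})$ for $c\in\mathbb C$. The key observation is that the values $\{f(y)\}_{y\in\mathrm{Ran}(Y)}$ are independent parameters, so the global minimization reduces to minimizing each $S_y$ separately over $c\in\mathbb C$.

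Next I would carry out this one-variable complex quadratic minimization. Writing $p_y := \mathbb P(Y=y)$ and $m_y := \sum_{\omega : Y(\omega)=y} X(\omega)\mathbb P(\{\omega\})$, expansion gives
$$S_y(c) = S_y(0) - \overline{m_y}\,c - m_y\,\overline{c} + p_y|c|^2 = p_y\left|c - \frac{m_y}{p_y}\right|^2 + \left(S_y(0) - \frac{|m_y|^2}{p_y}\right).$$
Since $\mathbb P \in D_Y$ guarantees $p_y > 0$, the quadratic $S_y$ is strictly convex and admits the unique minimizer $c^\star = m_y/p_y$, which is exactly $\mathbb E_{\mathbb P}(X|Y=y)$ by Definition~\ref{def:condexpclass1}. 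Aggregating these unique pointwise minimizers across $y$ yields the unique global minimizer $f^\star(y) = \mathbb E_{\mathbb P}(X|Y=y)$, i.e.\ $f^\star(Y) = \mathbb E_{\mathbb P}(X|Y)$, as claimed.

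There is no real obstacle here; the argument is the finite, elementary version of the classical $L^2$ projection characterization of conditional expectation. The only subtlety worth flagging is that one should not attempt to invoke strict convexity of $f \mapsto \mathbb E_{\mathbb P}(|X-f(Y)|^2)$ in some ambient inner-product space, since $\langle\cdot,\cdot\rangle_{\mathbb P}$ can be degenerate if $\mathbb P$ vanishes on some singletons; uniqueness among functions defined on $\mathrm{Ran}(Y)$ is nevertheless obtained because the hypothesis $\mathbb P \in D_Y$ is exactly what ensures $p_y > 0$ for every $y \in \mathrm{Ran}(Y)$, and hence that each scalar quadratic $S_y(\cdot)$ is itself strictly convex.
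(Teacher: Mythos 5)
Your proof is correct and follows essentially the same route as the paper's: expand the quadratic, complete the square in each value $f(y)$, and use $\mathbb{P}\in D_Y$ (i.e.\ $\mathbb{P}(Y=y)>0$) to get a unique minimizer $f(y)=\mathbb{E}_{\mathbb{P}}(X\mathds{1}_{y}(Y))/\mathbb{P}(Y=y)=\mathbb{E}_{\mathbb{P}}(X|Y=y)$. Organizing the computation as independent scalar minimizations over the level sets of $Y$ is only a cosmetic repackaging of the paper's single completed-square sum.
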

We denote by $\CompFuncObs{Y}$ the set:
\begin{equation}
    \CompFuncObs{Y} = \left\{ f(Y) \mid f : \mathrm{Ran}(Y)\to \mathbb{C} \right\}.
\end{equation}
 $\CompFuncObs{Y}$ is thus the set of all complex random variables on $\Omega$ that are functions of $Y$. In other words, $Z:\Omega\to \mathbb C$ belongs to $\CompFuncObs{Y}$ provided $Z$ is constant on all level sets of $Y$. This means there exists a function $f:\mathrm{Ran}(Y)\to \mathbb{C}$ so that $Z=f(Y)$. Note that $\CompFuncObs{Y}$ is a complex vector space ($\CompFuncObs{Y}\subset \mathcal F (\Omega;\mathbb C)$); it is  in addition closed under multiplication of functions. In the usual language of Hilbert space theory, the minimizer of Eq.\eqref{eq:minimize_dist_Y} -- which is the conditional expectation of $X$ given $Y$ --  is the orthogonal projection of $X$ onto $\CompFuncObs{Y}$. In more advanced treatments of probability, where the space $\Omega$ is not finite and where the probability measures are general, this property provides a simple way to  \emph{define} the conditional expectation, since Definition~\ref{def:condexpclass1} then does not necessarily make sense. We will see in the next section that this definition adapts nicely to quantum theory as well. 
\begin{proof}
For $f : \mathrm{Ran}(Y) \to \mathbb{C}$,
we have 
\begin{equation}
    f(y)=\sum_{y'\in \mathrm{Ran}(Y)}f(y')\mathds{1}_{y'}(y)
\end{equation}
where $\mathds{1}_{y'}(y)$ is $1$ if $y=y'$ and $0$ otherwise.
One then computes
\begin{align}
    \mathbb{E}_{\mathbb{P}}(|X-f(Y)|^2)&=\mathbb{E}_{\mathbb{P}}(|X|^2) - 2\sum_{y\in \mathrm{Ran}(Y)}\mathrm{Re}\left(\overline{f(y)}\mathbb{E}_{\mathbb{P}}(X\mathds{1}_{y}(Y))\right) + \sum_{y\in \mathrm{Ran}(Y)}|f(y)|^2\mathbb{P}(Y=y)\nonumber\\
    &=\mathbb{E}_{\mathbb{P}}(|X|^2) + \sum_{y\in \mathrm{Ran}(Y)}\mathbb{P}(Y=y)\left(|f(y)|^2-2\frac{\mathrm{Re}\left(\overline{f(y)}\mathbb{E}_{\mathbb{P}}(X\mathds{1}_{y}(Y))\right)}{\mathbb{P}(Y=y)}\right)\nonumber\\
    &=\mathbb{E}_{\mathbb{P}}(|X|^2) - \sum_{y\in \mathrm{Ran}(Y)}\frac{\left|\mathbb{E}_{\mathbb{P}}(X\mathds{1}_{y}(Y))\right|^2}{\mathbb{P}(Y=y)} + \sum_{y\in \mathrm{Ran}(Y)}\mathbb{P}(Y=y)\left|f(y)-\frac{\mathbb{E}_{\mathbb{P}}(X\mathds{1}_{y}(Y))}{\mathbb{P}(Y=y)}\right|^2.\label{eq:variances_compare}
\end{align}
This quantity is minimal if and only if for all $y\in \mathrm{Ran}(Y)$,
\begin{equation}
    f(y)=\frac{\mathbb{E}_{\mathbb{P}}(X\mathds{1}_{y}(Y))}{\mathbb{P}(Y=y)}.
\end{equation}
Moreover, we have that
\begin{align*}
    \mathbb{E}_{\mathbb{P}}(X\mathds{1}_{y}(Y))&=\sum_{(x,y')\in \mathrm{Ran}(X)\times \mathrm{Ran}(Y)}\mathbb{P}(X=x,Y=y')x\mathds{1}_{y}(y')\\
    &=\sum_{x\in \mathrm{Ran}(X)}x\mathbb{P}(X=x,Y=y)\\
    &=\mathbb{P}(Y=y)\mathbb{E}_{\mathbb{P}}(X|Y=y).
\end{align*}
This concludes the proof.
\end{proof}
From Eq.~\eqref{eq:f(Y)_ind_Y_clas}, we recall the law of iterated expectations,
\begin{equation}
\label{eq:iterated_ex_classical}
\mathbb E_{\mathbb{P}}\left(\mathbb E_{\mathbb{P}}(X|Y)\right) = \mathbb E_{\mathbb{P}}\left(X\right)
\end{equation}
and we note, for later reference, that Eq.~\eqref{eq:variances_compare} can be rewritten in the form
\begin{equation}\label{eq:variances_compare_class}
\Delta^2 X=\Delta^2 \mathbb E_{\mathbb P}(X|Y) +  \mathbb{E}_{\mathbb{P}}(|X-\mathbb E_{\mathbb P}(X|Y)|^2),
\end{equation}
 which is the law of total variance, where
\begin{equation}
    \Delta^2Z:=\mathbb{E}_{\mathbb{P}}(|Z-\mathbb{E}_{\mathbb{P}}(Z)|^2)
\end{equation}
is the variance of the random variable $Z:\Omega\to \mathbb{C}$. 
In other words, $X$ and its best estimator $\mathbb E(X|Y)$ have the same expected value and their variances differ by the minimal ``cost''. 
Introducing the conditional variance
\begin{equation}
    \Delta^2_{\mathbb P}(X|Y)=\mathbb E_{\mathbb P}(|X-\mathbb E_{\mathbb P}(X|Y)|^2|Y),
\end{equation}
the law of total variance reads alternatively:
\begin{equation}\label{eq:variances_compare_class_bis}
\Delta^2 X=\Delta^2 \mathbb E_{\mathbb P}(X|Y) +  \mathbb{E}_{\mathbb{P}}\left(\Delta^2_{\mathbb P}(X|Y)\right).
\end{equation}

 We note for later reference  that 
\begin{equation}\label{eq:Konig}
    \Delta_{\mathbb P}^2(X|Y)=\mathbb E_{\mathbb P}(|X|^2)-|\mathbb E_{\mathbb P}(X|Y)|^2.
\end{equation}
As we shall now prove, the conditional expectation is also characterized by two simple properties:
\begin{Theorem}\label{thm:econd_characterize2}\
Let $Y$ be a real vector-valued random variable on $\Omega$ and $\mathbb{P}\in D_Y$. Then there exists a unique linear map $\Econd: \mathcal F (\Omega; \mathbb C)\to \CompFuncObs{Y}\subset \mathcal F(\Omega; \mathbb C)$ 
that satisfies the following properties:
\begin{enumerate}
    \item  Pull-out formula. For any complex random variable $X$ on $\Omega$ and any $f:\mathrm{Ran}(Y) \to \mathbb{C}$, 
    \begin{equation}\label{eq:cond1}
\Econd(f(Y)X)=f(Y)\Econd(X);
    \end{equation}
   \item The law of iterated expectations. For any complex random variable $X$ on $\Omega$, 
   \begin{equation}\label{eq:cond2}
       \mathbb{E}_{\mathbb{P}}(\Econd(X))=\mathbb{E}_{\mathbb{P}}(X).
        \end{equation}
\end{enumerate}

 One has $\Econd(X)=\mathbb{E}_{\mathbb{P}}(X|Y).$
\end{Theorem}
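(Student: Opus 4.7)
The plan is to verify the two properties for the explicit candidate $X\mapsto\mathbb{E}_{\mathbb{P}}(X|Y)$ and then show that any linear map satisfying (1) and (2) is forced to equal this one. The backbone of the argument is the ``probing'' trick: the indicator functions $\mathds{1}_y(Y)$, for $y\in\mathrm{Ran}(Y)$, are themselves functions of $Y$, so by property (1) they can be pulled out of $\Econd$ to isolate the value of $\Econd(X)$ on each level set $\{Y=y\}$.

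\medskip

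\textbf{Existence.} I would first check that $X\mapsto\mathbb{E}_{\mathbb{P}}(X|Y)$ is linear and takes values in $\mathcal Y$ (both obvious from Definition~\ref{def:condexpclass1}). For property (1), using that $f(Y)(\omega)=f(y)$ whenever $Y(\omega)=y$, one computes directly from~\eqref{eq:condXYprimitive} that
\begin{equation*}
\mathbb{E}_{\mathbb{P}}(f(Y)X|Y=y)=\sum_{x}f(y)x\,\mathbb{P}(X=x,Y=y)/\mathbb{P}(Y=y)=f(y)\,\mathbb{E}_{\mathbb{P}}(X|Y=y),
\end{equation*}
which rewrites as $\mathbb{E}_{\mathbb{P}}(f(Y)X|Y)=f(Y)\mathbb{E}_{\mathbb{P}}(X|Y)$. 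For property (2), the law of total expectation gives
\begin{equation*}
\mathbb{E}_{\mathbb{P}}(\mathbb{E}_{\mathbb{P}}(X|Y))=\sum_{y\in\mathrm{Ran}(Y)}\mathbb{E}_{\mathbb{P}}(X|Y=y)\,\mathbb{P}(Y=y)=\sum_{x,y}x\,\mathbb{P}(X=x,Y=y)=\mathbb{E}_{\mathbb{P}}(X).
\end{equation*}

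\medskip

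\textbf{Uniqueness.} Let $\Econd$ be any linear map $\mathcal F(\Omega;\mathbb C)\to\mathcal Y$ satisfying (1) and (2), and fix $X\in\mathcal F(\Omega;\mathbb C)$. Since $\Econd(X)\in\mathcal Y$, there is a function $g:\mathrm{Ran}(Y)\to\mathbb C$ with $\Econd(X)=g(Y)$. For each $y\in\mathrm{Ran}(Y)$, apply (1) with $f=\mathds{1}_y$ to get $\Econd(\mathds{1}_y(Y)X)=\mathds{1}_y(Y)\,\Econd(X)=g(y)\mathds{1}_y(Y)$. Taking expectations and using (2) on the left-hand side yields
\begin{equation*}
g(y)\,\mathbb{P}(Y=y)=\mathbb{E}_{\mathbb{P}}(\mathds{1}_y(Y)\,\Econd(X))=\mathbb{E}_{\mathbb{P}}(\Econd(\mathds{1}_y(Y)X))=\mathbb{E}_{\mathbb{P}}(\mathds{1}_y(Y)X).
\end{equation*}
The rightmost quantity was already computed in the proof of Theorem~\ref{thm:classical_minimization} to equal $\mathbb{P}(Y=y)\,\mathbb{E}_{\mathbb{P}}(X|Y=y)$. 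Since $\mathbb{P}\in D_Y$ ensures $\mathbb{P}(Y=y)>0$, we divide and obtain $g(y)=\mathbb{E}_{\mathbb{P}}(X|Y=y)$ for every $y\in\mathrm{Ran}(Y)$, hence $\Econd(X)=\mathbb{E}_{\mathbb{P}}(X|Y)$.

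\medskip

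The argument is entirely straightforward; there is no real obstacle beyond remembering to use the hypothesis $\mathbb{P}\in D_Y$, which is exactly what allows the indicator-probing step to recover $g(y)$ for \emph{every} $y$ in the range of $Y$. Without it, $\Econd(X)$ would only be determined on the support of the law of $Y$, and uniqueness would fail on the remaining level sets.
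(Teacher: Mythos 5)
Your proof is correct, but your uniqueness argument takes a genuinely different route from the paper's. The paper shows that properties \eqref{eq:cond1} and \eqref{eq:cond2} force $X-\Econd(X)$ to be orthogonal to $\mathcal{Y}$ with respect to $\langle\cdot,\cdot\rangle_{\mathbb{P}}$, notes the same for $X-\mathbb{E}_{\mathbb{P}}(X|Y)$, deduces that $\Econd(X)-\mathbb{E}_{\mathbb{P}}(X|Y)$ has zero $\mathbb{P}$-norm, and then needs a final step using $\mathbb{P}\in D_Y$ together with the fact that both candidates lie in $\mathcal{Y}$ to upgrade the resulting $\mathbb{P}$-a.e.\ equality to equality everywhere on $\Omega$. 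You instead probe with the indicators $\mathds{1}_y$, $y\in\mathrm{Ran}(Y)$: property \eqref{eq:cond1} pulls $\mathds{1}_y(Y)$ out, property \eqref{eq:cond2} converts the result into $\mathbb{E}_{\mathbb{P}}(\mathds{1}_y(Y)X)=\mathbb{P}(Y=y)\mathbb{E}_{\mathbb{P}}(X|Y=y)$, and $\mathbb{P}\in D_Y$ lets you divide by $\mathbb{P}(Y=y)$. This is shorter and more elementary, it determines $\Econd(X)$ on every level set directly (so the a.e.-to-everywhere issue never arises), and it makes transparent exactly where $D_Y$ is needed; as a bonus you verify existence explicitly where the paper only asserts it. What the paper's orthogonality route buys is that it is the exact template reused for the quantum analogue (Theorem~\ref{thm:Prop_Q_cond_exp_from_min}), where the same inner-product projection argument carries over with $\langle\cdot,\cdot\rangle_{\ell,\mathrm{r}}$ in place of $\langle\cdot,\cdot\rangle_{\mathbb{P}}$, so the classical and quantum uniqueness proofs run in parallel. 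Your argument would also adapt to the quantum setting (probing with the spectral projectors $\Pi_y$), so there is no loss of content, only a difference of emphasis.
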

\begin{proof}
Existence is clear, since it is easily checked that $X\mapsto \mathbb{E}_{\mathbb{P}}(X|Y)$ satisfies the two desired properties. It remains to show uniqueness. 
Let $\Econd: X\mapsto \Econd(X)\in\CompFuncObs{Y}$ satisfy properties $\textit{1}$ and $\textit{2}$.
Let $X$ be a random variable. For any $f(Y)\in \CompFuncObs{Y}$, we have
\begin{align*}
    \langle f(Y),X-\Econd(X)\rangle_{\mathbb{P}} &= \mathbb{E}_{\mathbb{P}}(\overline{f(Y)}(X-\Econd(X)))\\
    &=\mathbb{E}_{\mathbb{P}}(\overline{f(Y)}X) - \mathbb{E}_{\mathbb{P}}(\Econd(\overline{f(Y)}X)) \text{ by $\textit{1}$}\\
    &= \mathbb{E}_{\mathbb{P}}(\overline{f(Y)}X)-\mathbb{E}_{\mathbb{P}}(\overline{f(Y)}X) \text{ by $\textit{2}$}\\
    &=0.
\end{align*}
We therefore also have $\langle f(Y), X-\mathbb{E}_{\mathbb{P}}(X|Y)\rangle_{\mathbb{P}} =0$. It follows that
\begin{align*}
    &\langle \Econd(X)-\mathbb{E}_{\mathbb{P}}(X|Y),\Econd(X)-\mathbb{E}_{\mathbb{P}}(X|Y)\rangle_{\mathbb{P}}\\
    =&\langle \Econd(X)-X+X -\mathbb{E}_{\mathbb{P}}(X|Y), \Econd(X)-\mathbb{E}_{\mathbb{P}}(X|Y) \rangle_{\mathbb{P}}\\
    =&-\langle X-\Econd(X),\Econd(X)-\mathbb{E}_{\mathbb{P}}(X|Y)\rangle_{\mathbb{P}} + \langle X-\mathbb{E}_{\mathbb{P}}(X|Y),\Econd(X)-\mathbb{E}_{\mathbb{P}}(X|Y)\rangle_{\mathbb{P}}\\
    =&0
\end{align*}
since $\Econd(X)-\mathbb{E}_{\mathbb{P}}(X|Y)\in \CompFuncObs{Y}$. Therefore $\mathbb{E}_{\mathbb{P}}(X|Y)=\Econd(X)$ almost everywhere with respect to $\mathbb{P}$. Let $\omega\in \Omega$, then $Y(\omega)\in \mathrm{Ran}(Y)$ and thus, since $\mathbb{P}\in D_Y$, there exists $\omega'\in \Omega$ such that $\mathbb{P}(\{\omega'\})>0$ and $Y(\omega)=Y(\omega')$. The fact that $\mathbb{E}_{\mathbb{P}}(X|Y)=\Econd(X)$ almost everywhere with respect to $\mathbb{P}$ then implies $\mathbb{E}_{\mathbb{P}}(X|Y)(\omega')=\Econd(X)(\omega')$. Moreover since they are both functions of $Y$ we have $\mathbb{E}_{\mathbb{P}}(X|Y)(\omega')=\mathbb{E}_{\mathbb{P}}(X|Y)(\omega)$ and $\Econd(X)(\omega')=\Econd(X)(\omega)$. In the end we have indeed $\mathbb{E}_{\mathbb{P}}(X|Y)=\Econd(X)$ on $\Omega$. This concludes the proof.
\end{proof}
Let us point out that one can see quite directly that Eq.\eqref{eq:cond1} and Eq.\eqref{eq:cond2} imply that $I-\Econd(I)$ is orthogonal to the algebra $\CompFuncObs{Y}$, as well as belonging to $\CompFuncObs{Y}$. This implies
\begin{equation}
\Econd(I)=I,
\end{equation}
where $I$ is the constant function equal to $1$ on $\Omega$.
Using Eq.\eqref{eq:cond1} again, we obtain that 
\begin{equation}
\Econd(f(Y))=f(Y).
\end{equation}
These identities can also be retrieved from 
$\Econd(X)=\mathbb{E}_{\mathbb{P}}(X|Y).$ 

We finish this section by showing that one can recover the joint probability distribution of the random variables $X$ and $Y$ using the expectation and conditional expectation. We will discuss a similar relation between joint quasiprobabilities and the conditional expectation in the quantum realm in Proposition~\ref{prop: iterated expectation_Y}.

\begin{proposition}\label{Prop:joint_probab_from_cond_exp}
For any $(x,y)\in \mathrm{Ran}(X)\times \mathrm{Ran(Y)}$, we have
 \begin{equation}\label{eq:joint_proba_from_cond_exp}
     \mathbb{P}(X=x,Y=y)=\mathbb{E}_{\mathbb{P}}[\mathds{1}_{y}(Y)\mathbb{E}_{\mathbb{P}}(\mathds{1}_{x}(X)|Y)].
 \end{equation}
 \end{proposition}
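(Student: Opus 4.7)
The plan is to derive the identity directly from the two characterizing properties of the conditional expectation established in Theorem~\ref{thm:econd_characterize2}, applied to the specific random variables $\mathds{1}_x(X)$ and $\mathds{1}_y(Y)$. The key observation is that the right-hand side of Eq.~\eqref{eq:joint_proba_from_cond_exp} fits exactly the pattern of the tower property composed with the $\mathcal{Y}$-linearity property.

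First, I would use Eq.~\eqref{eq:cond1} with $f(Y)=\mathds{1}_y(Y)$ and $X$ replaced by $\mathds{1}_x(X)$ to rewrite
\begin{equation*}
\mathds{1}_y(Y)\,\mathbb{E}_{\mathbb{P}}(\mathds{1}_x(X)\mid Y)=\mathbb{E}_{\mathbb{P}}\bigl(\mathds{1}_y(Y)\mathds{1}_x(X)\mid Y\bigr).
\end{equation*}
Then I would apply Eq.~\eqref{eq:cond2}, the iterated expectations property, to obtain
\begin{equation*}
\mathbb{E}_{\mathbb{P}}\bigl[\mathds{1}_y(Y)\,\mathbb{E}_{\mathbb{P}}(\mathds{1}_x(X)\mid Y)\bigr]=\mathbb{E}_{\mathbb{P}}\bigl(\mathds{1}_y(Y)\mathds{1}_x(X)\bigr).
\end{equation*}

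Finally, the product $\mathds{1}_y(Y)\mathds{1}_x(X)$ is the indicator function of the event $\{X=x,\,Y=y\}\subset\Omega$, so its expectation is by definition $\mathbb{P}(X=x,Y=y)$, which yields the claim. There is no substantial obstacle here; the proposition is essentially a two-line corollary of Theorem~\ref{thm:econd_characterize2}, and the only thing to be careful about is invoking the $\mathcal{Y}$-linearity in the correct direction (pulling $\mathds{1}_y(Y)$ inside the conditional expectation) before applying the tower property.
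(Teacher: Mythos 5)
Your argument is correct, but it takes a different route from the paper. The paper proves the proposition by direct computation from Definition~\ref{def:condexpclass1}: it expands the outer expectation as a sum over $\mathrm{Ran}(Y)$, which collapses to $\mathbb{P}(Y=y)\,\mathbb{E}_{\mathbb{P}}(\mathds{1}_x(X)|Y=y)$, and then inserts Eq.~\eqref{eq:condXYprimitive} so that the factor $\mathbb{P}(Y=y)$ cancels against the denominator, leaving $\mathbb{P}(X=x,Y=y)$. You instead treat the identity as a two-line corollary of the characterizing properties of Theorem~\ref{thm:econd_characterize2}: pull $\mathds{1}_y(Y)$ inside the conditional expectation via Eq.~\eqref{eq:cond1}, apply the iterated-expectation property Eq.~\eqref{eq:cond2}, and identify $\mathds{1}_y(Y)\mathds{1}_x(X)$ as the indicator of $\{X=x,\,Y=y\}$. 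Both are valid (each implicitly assumes $\mathbb{P}\in D_Y$, which is needed for the conditional expectation to be defined at all). Your version is more structural: it shows the formula follows from the two abstract axioms alone, i.e.\ from exactly the pair of properties the paper later transports to the quantum setting, and it makes visible that recovering the joint distribution uses the pull-through property Eq.~\eqref{eq:cond1} in addition to iterated expectations. The paper's computation is more elementary and self-contained: it relies only on Definition~\ref{def:condexpclass1} and not on the existence half of Theorem~\ref{thm:econd_characterize2}, which the paper dispatches with ``easily checked''; if one wanted your proof to be fully rigorous within the paper's logic, that verification (in particular that $X\mapsto\mathbb{E}_{\mathbb{P}}(X|Y)$ satisfies Eq.~\eqref{eq:cond1}) would need to be spelled out, and doing so amounts to essentially the same computation the paper performs here.
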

 \begin{proof}
 We compute
 \begin{align*}
     \mathbb{E}_{\mathbb{P}}[\mathds{1}_{y}(Y)\mathbb{E}_{\mathbb{P}}(\mathds{1}_{x}(X)|Y)]&=\sum_{y'\in \mathrm{Ran}(Y)}\mathbb{P}(Y=y')\mathds{1}_{y}(y')\mathbb{E}_{\mathbb{P}}(\mathds{1}_{x}(X)|Y)(y')\\
     &=\mathbb{P}(Y=y)\mathbb{E}_{\mathbb{P}}(\mathds{1}_{x}(X)|Y)(y)\\
     &=\mathbb{P}(Y=y)\sum_{x'\in \mathrm{Ran}(X)}\mathds{1}_{x}(x')\frac{\mathbb{P}(X=x',y=y)}{\mathbb{P}(Y=y)}\\
     &=\mathbb{P}(X=x,Y=y).
 \end{align*}
 \end{proof}

We finally point out that $Y$ can also be taken complex vector-valued in what precedes. But, to stress the analogy with the quantum mechanical treatment, we have taken $Y$ real vector-valued here.

\section{Quantum conditional expectation and variance via  minimization}\label{s:Q_cond_exp_min}
 In Section~\ref{s:Q_cond_exp_def}, we show how to adapt the classical definition of a conditional expectation based on Theorem~\ref{thm:classical_minimization} to the quantum context  in order to define a quantum conditional expectation: see Theorem~\ref{thm:Uniqueminimizers}. This procedure defines the quantum conditional expectation as a ``best estimator'' of the operator $\hat X$ by functions of the observable $\hat Y$.  Such definition based on a minimization procedure in analogy with the classical definition clearly depends on the quantity (the cost function) that is chosen to be minimized.  We will, as in the classical context, use a quadratic minimizer and see that its choice is closely linked to the familiar ``ordering problem'' in quantization. 
 Indeed, whereas for random variables $X$ and $Y$, one has $f(Y)X=Xf(Y)$, this is no longer generally true when $\hat{X}$ and $\hat{Y}$ are operators on a Hilbert space.  This observation naturally leads to two conditional expectations, a ``left'' one and a ``right'' one, among many other possibilities. In Section~\ref{s:Q_econd_char},  we provide in Theorem~\ref{thm:UniqueEcond} a characterization of the ``left'' and ``right'' quantum conditional expectations introduced that is analogous to its classical counterpart, Theorem~\ref{thm:econd_characterize2} in that it relies on a ``pull-out property''. 
 In Section \ref{s:Q_cond_exp_mean_var}, we introduce the conditional variance of $\hat X$ given $\hat Y$ and we prove relations analogous to Eqs~\eqref{eq:iterated_ex_classical} and \eqref{eq:variances_compare_class}, relating the mean and variance of the operator $\hat X$ to those of its quantum conditional expectation and variance. We also discuss some of the marked differences between the classical and quantum notions, in particular with regard to pure states. Finally in Section \ref{s:real part}, we show that the self-adjoint part of the quantum conditional expectation can itself be interpreted as a best estimator of $\hat X$ by {\it self-adjoint} functions of $\hat Y$.
 
\subsection{The quantum conditional expectation: definition\label{s:Q_cond_exp_def}}
Let $\mathcal{H}$ be a Hilbert space of dimension $d<+\infty$. Let $\mathcal{L}(\mathcal{H})$ be the space of  operators on  $\mathcal{H}$,  and $D(\mathcal{H})$ the set of density matrices (positive semi-definite operators of trace $1$). We will first adapt the classical construction of the conditional expectation of Theorem~\ref{thm:classical_minimization} to the quantum context. For that purpose, we need to identify a suitable quantum analog of the sesquilinear form between classical observables in Eq.\eqref{eq:L2Omegainnerproduct}. One may expect such analogs not to be uniquely determined since the sesquilinear form involves the product of two classical observables, and in general, quantizations of observables don't respect products: the quantization of the product is usually not the product of the quantizations due to  ordering problems. We will see this phenomenon manifests itself here as well. 
One choice that presents itself naturally is to define, for any two  $\hat{X},\hat{X}'\in \mathcal L(\mathcal H)$, the sesquilinear form
\begin{equation}\label{eq:leftproduct}
    \langle \hat{X},\hat{X}'\rangle_\ell := \langle  \hat{X } , \hat{X }'  \rangle _{\hat{\rho } , \ell } :=  \Tr(\hat{\rho} \hat{X}^\dagger \hat{X}')=\Tr(\hat{X}'\hat{\rho} \hat{X}^\dagger),
\end{equation}
where $\hat \rho\in D(\mathcal{H})$ is a fixed quantum state, which we will mostly drop from the notations (except when we will consider varying states in Section~\ref{s:Q_cond_exp_int} below).
 Now, let $\hat Y=(\hat Y_1,\dots \hat Y_m)$ be a set of commuting observables on $\mathcal H$. We define the joint spectrum of $\hat Y$ as
 \begin{equation*}
     \sigma(\hat Y)=\sigma(\hat Y_1)\times\dots\times \sigma(\hat Y_m),
 \end{equation*}
 where $\sigma(\hat Y_j)$ is the spectrum of  $\hat Y_j$.
 We will say $\hat Y$ is \emph{complete} when, for each $y=(y_1,\dots,y_m)\in\sigma(\hat Y)$, 
there exists a unique (up to a an irrelevant global phase) normalized common eigenvector $\varphi_y^{\hat Y}$, for all $\hat Y_j:$
\begin{equation}
    \hat Y_j\varphi_y^{\hat{Y}}=y_{j}\varphi_y^{\hat{Y}}. 
\end{equation}
In other words, $\hat Y$ is complete if and only if its joint spectrum is non-degenerate. Such a family is called a {\it complete set of commuting observables} (CSCO). While demanding that $\hat Y$ is a family of commuting observables rather than an observable itself may seem superfluous, it is in fact of importance in many cases. A relevant example is found in spin systems, where $\mathcal{H}=\left(\mathbb{C}^2\right)^{\otimes m}$, and where the Pauli operators are given by $\hat Y_j^{\hat \sigma}=I\otimes \dots\otimes I \otimes \hat \sigma \otimes I\otimes \dots \otimes I$. Here $\hat \sigma$ is a $2\times 2$ Pauli matrix, acting on the $j$-th sector.

In what follows we shall write $\hat \Pi_y^{\hat{Y}}$ for the orthogonal projector onto $\varphi_y^{\hat{Y}}$.  We consider the algebra of all functions of $\hat Y$:
\begin{equation}\label{eq:mathcalY}
\CompFuncObs{\hat{Y}}=\{f(\hat Y)|f:\sigma(\hat Y)\to \mathbb C\}.
\end{equation}
Note that $\CompFuncObs{\hat{Y}}$ is a  commutative $*$-algebra. We denote by $\left(\CompFuncObs{\hat{Y}}\right)'$ the commutant of $\CompFuncObs{\hat{Y}}$. $\CompFuncObs{\hat{Y}}$ is  maximal, in the following sense:  
\begin{lemma} 
$\left(\CompFuncObs{\hat{Y}}\right)'=\CompFuncObs{\hat{Y}}.$
\end{lemma}

\begin{proof}
    It is clear that $\CompFuncObs{\hat{Y}} \subseteq \left(\CompFuncObs{\hat{Y}}\right)'$. Suppose therefore $\hat C\in \left(\CompFuncObs{\hat{Y}}\right)'$. Then  $[\hat C,\hat \Pi_y^{\hat{Y}}]=0$, for all $y\in\sigma(\hat Y)$. Hence, for all $y \in\sigma(\hat Y)$, 
\begin{equation}
    \hat C\varphi_y^{\hat{Y}}=\varphi_y^{\hat{Y}} \Tr(\hat C\hat\Pi_y^{\hat{Y}}).
\end{equation}
Defining the function $g(y)=\Tr(\hat C\hat \Pi_y^{\hat{Y}})$, one concludes that $\hat C=g(\hat Y)$ and hence that  $\hat C\in \CompFuncObs{\hat{Y}}$. 
\end{proof}

For a density matrix $\hat{\rho}$, in analogy with Theorem~\ref{thm:classical_minimization}, one can then define (following~\cite{dressel2015, brummelhuis2024,Brummelhuis2025a,Brummelhuis2025b} and generalized in~\cite{tsang2022,tsang2023}) a quantum conditional expectation of $\hat{X}$, knowing $\hat{Y}$, that we shall denote by $\QuantEcond{\ell}{\hat{X}}{\hat{Y}}$ as the minimizer of the following quantity:
\begin{equation}\label{eq:Q_min_prob_left}
    f\mapsto \Tr(\hat{\rho} |\hat{X}-f(\hat{Y})|^2)=\langle \hat{X}-f(\hat{Y}), \hat{X}-f(\hat{Y})\rangle_{\ell},
\end{equation}
over all $f:\sigma(\hat{Y})\to\mathbb{C}$. Here, for $Z\in \mathcal L(\mathcal H)$, we write $|Z|=\sqrt{Z^\dagger Z}$. As in the classical case, we think of the right hand side of this equation as a mean squared error cost. We shall discuss the existence and uniqueness of this minimizer below. We shall refer to $\QuantEcond{\ell}{\hat{X}}{\hat{Y}}$ as the left quantum conditional expectation of $\hat{X}$ knowing $\hat{Y}$ in the state $\hat{\rho}$, for reasons that will become clear shortly. There is a second, equally natural choice for the sesquilinear product, analogous to the inner product in Eq.\eqref{eq:L2Omegainnerproduct}, namely
\begin{equation}\label{eq:rightproduct}
    \langle \hat{X},\hat{X}'\rangle_{\mathrm r}=\Tr(\hat{\rho} \hat{X}' \hat{X}^\dagger )=\Tr(\hat{X}^\dagger\hat{\rho} \hat{X}').
\end{equation}
Note that, as anticipated, the difference between Eq.\eqref{eq:leftproduct} and Eq.\eqref{eq:rightproduct} lies indeed in the change of order of appearance of $\hat{X}^\dagger$ and $\hat{X}'$.  We then define the right quantum conditional expectation of $\hat{X}$, knowing $\hat{Y}$, that we shall denote by $\QuantEcond{\mathrm{r}}{\hat{X}}{\hat{Y}}$, as the minimizer over complex-valued functions of
\begin{equation}\label{eq:Q_min_prob_right}
    f\mapsto \langle \hat{X}-f(\hat{Y}), \hat{X}-f(\hat{Y})\rangle_{\mathrm r}.
\end{equation}

The existence and uniqueness of these conditional expectations is guaranteed by the following theorem, under a suitable hypothesis on $\hat{\rho}$.
We define 
 \begin{equation}\label{eq:DYdef}
        D_{\hat{Y}}=\{\hat{\rho}\in D(\mathcal{H}): \forall y\in \sigma(\hat{Y}), \ \Tr(\hat{\rho} \hat{\Pi}_y^{\hat{Y}})\neq 0\: \}.
    \end{equation}
    
\begin{Theorem}\label{thm:Uniqueminimizers}
    Assume that $\hat{\rho}\in D_{\hat{Y}}$. Then the functions given in Eq.\eqref{eq:Q_min_prob_left} and Eq.\eqref{eq:Q_min_prob_right} admit  unique minimizers given by
    \begin{equation}\label{eq:MinCondExp}
        f_0^{\ell}(y)=\frac{\langle \varphi_y^{\hat{Y}}, \hat{X}\hat{\rho}\varphi_y^{\hat{Y}}\rangle}{\langle \varphi_y^{\hat{Y}},\hat{\rho}\varphi_y^{\hat{Y}}\rangle},\quad f_0^{\mathrm{r}}(y)=\frac{\langle \varphi_y^{\hat{Y}}, \hat{\rho} \hat{X}\varphi_y^{\hat{Y}}\rangle}{\langle \varphi_y^{\hat{Y}},\hat{\rho}\varphi_y^{\hat{Y}}\rangle},
    \end{equation}
    respectively, for all $y\in \sigma(\hat{Y})$.
\end{Theorem}
   
\begin{proof}
We only prove the ``left'' case, the ``right'' case being similar. We compute:
\begin{align}
    \langle \hat{X}-f(\hat{Y}),\hat{X}-f(\hat{Y})\rangle_{\ell}&=\Tr(\hat{\rho} |\hat{X}|^2) - \Tr(\hat{\rho} \hat{X}^{\dagger}f(\hat{Y})) - \Tr(\hat{\rho} f(\hat{Y})^{\dagger}\hat{X}) + \Tr(\hat{\rho} |f(\hat{Y})|^2)\nonumber\\
    &=\Tr(\hat{\rho} |\hat{X}|^2) + \sum_{y\in \sigma(\hat{Y})}\langle \varphi_y^{\hat{Y}},\hat{\rho}\varphi_y^{\hat{Y}}\rangle\left(|f(y)|^2 - 2 \Re\left(\overline{f(y)}\frac{\langle \varphi_y^{\hat{Y}},\hat{X}\hat{\rho}\varphi_y^{\hat{Y}}\rangle}{\langle \varphi_y^{\hat{Y}},\hat{\rho}\varphi_y^{\hat{Y}}\rangle}\right)\right)\nonumber\\
    &=\Tr(\hat{\rho} |\hat{X}|^2) - \sum_{y\in \sigma(\hat{Y})}\frac{|\langle \varphi_y^{\hat{Y}},\hat{X}\hat{\rho}\varphi_y^{\hat{Y}}\rangle|^2}{\langle \varphi_y^{\hat{Y}},\hat{\rho}\varphi_y^{\hat{Y}}\rangle}\nonumber\\
    &\qquad+ \sum_{y\in \sigma(\hat{Y})}\langle \varphi_y^{\hat{Y}},\hat{\rho}\varphi_y^{\hat{Y}}\rangle\left|f(y)-\frac{\langle \varphi_y^{\hat{Y}},\hat{X}\hat{\rho}\varphi_y^{\hat{Y}}\rangle}{\langle \varphi_y^{\hat{Y}},\hat{\rho}\varphi_y^{\hat{Y}}\rangle}\right|^2 .\label{eq:quadform}
\end{align}
Hence this quantity is minimal if and only if
\begin{equation}
        f(y)=\frac{\langle \varphi_y^{\hat{Y}}, \hat{X}\hat{\rho}\varphi_y^{\hat{Y}}\rangle}{\langle \varphi_y^{\hat{Y}},\hat{\rho}\varphi_y^{\hat{Y}}\rangle}
    \end{equation}
    for all $y\in \sigma(\hat{Y})$.
\end{proof}   

\begin{definition}
\label{def: left/right Econd}
We define the left quantum conditional expectation of the operator $\hat{X}$ knowing $\hat{Y}$ in the state $\hat{\rho}\in D_{\hat{Y}}$ as
\begin{equation}\label{eq:left_cond_exp}
    \mathbb{E}^{\ell}_{\hat{\rho}}(\hat{X}|\hat{Y})=f_0^{\ell}(\hat{Y})=\sum_{y\in \sigma(\hat{Y})}\frac{\langle \varphi_y^{\hat{Y}},\hat{X}\hat{\rho}\varphi_y^{\hat{Y}}\rangle}{\langle \varphi_y^{\hat{Y}},\hat{\rho}\varphi_y^{\hat{Y}}\rangle}\hat{\Pi}_y^{\hat{Y}},
\end{equation}
where $\hat{\Pi}_y$ is the orthogonal projector to the eigenvector $\varphi_y^{\hat Y}$ for $y\in \sigma(\hat{Y})$. Similarly we define the right quantum conditional expectation of $\hat{X}$ knowing $\hat{Y}$ in the state $\hat{\rho}$ as
\begin{equation}\label{eq:right_cond_exp}
    \mathbb{E}^{\mathrm{r}}_{\hat{\rho}}(\hat{X}|\hat{Y})=f_0^{\mathrm{r}}(\hat{Y})=\sum_{y\in \sigma(\hat{Y})}\frac{\langle \varphi_y^{\hat Y},\hat{\rho}\hat{X}\varphi_y^{\hat Y}\rangle}{\langle \varphi_y^{\hat Y},\hat{\rho}\varphi_y^{\hat Y}\rangle}\hat{\Pi}_y^{\hat Y}.
\end{equation}
We will write $\mathbb{E}^{\ell/\mathrm{r}}_{\hat{\rho}}(\hat{X}|\hat{Y}=y)=\Tr\left(\mathbb{E}^{\ell/\mathrm{r}}_{\hat{\rho}}(\hat{X}|\hat{Y})\hat \Pi^{\hat Y}_y\right)$.
\end{definition}
It is readily checked that
for any $f:\sigma(\hat{Y})\to \mathbb{C}$,
\begin{equation}\label{eq:Q_f(Y)_ind_Y}
    \mathbb{E}^{\ell/\mathrm{r}}_{\hat{\rho}}(f(\hat{Y})|\hat{Y})=f(\hat{Y}).
\end{equation}

\begin{rmk}\label{rmk:QCE}

$(i)$ If the state $\hat{\rho}$ is not an element of $D_{\hat{Y}}$, i.e. if there exists $y\in \sigma(\hat{Y})$ such that $\langle \varphi_y^{\hat{Y}},\hat{\rho}\varphi_y^{\hat{Y}}\rangle=0$, then the functional \eqref{eq:Q_min_prob_left} still admits minimizers, but they are no longer unique. Since $\langle \varphi_y^{\hat{Y}},\hat{\rho}\varphi_y^{\hat{Y}}\rangle=0 $ implies that $\hat{\rho }^{1/2 } \varphi_y^{\hat{Y } } = 0 $ and therefore $\hat{\rho } \varphi_y^{\hat{Y} } = 0 $, the sums in (\ref{eq:quadform}) only extend over $y $'s with $\langle\varphi_y^{\hat{Y}},\hat{\rho}\varphi_y^{\hat{Y}}\rangle \neq 0$ and the minimizers are in fact given by
   
\begin{equation} \label{eq:QCE_degenerate}
f_0 (\hat{Y } ) \ = \sum_{\substack{y\in \sigma(\hat{Y})\\ \langle\varphi_y^{\hat{Y}},\hat{\rho}\varphi_y^{\hat{Y}}\rangle \neq 0}}\frac{\langle \varphi_y^{\hat{Y}},\hat{X}\hat{\rho}\varphi_y^{\hat{Y}}\rangle}{\langle \varphi_y^{\hat{Y}},\hat{\rho}\varphi_y^{\hat{Y}}\rangle}\hat{\Pi}_y^{\hat{Y}} + \sum_{\substack{y\in \sigma(\hat{Y})\\ \langle \varphi_y^{\hat{Y}},\hat{\rho}\varphi_y^{\hat{Y}}\rangle=0}}g_y\hat{\Pi}_y^{\hat{Y}},   
\end{equation}   
where $(g_y)_{\langle \varphi_y^{\hat{Y}},\hat{\rho}\varphi_y^{\hat{Y}}\rangle=0}\subset \mathbb{C}$ can be any complex numbers \cite{brummelhuis2024}. In the usual interpretation of quantum mechanics, the quantity $\langle \varphi_y^{\hat{Y}},\hat{\rho}\varphi_y^{\hat{Y}}\rangle$ represents the probability of the event ``$\hat{Y}$ takes the value $y$''.  It follows in accordance with classical probability theory that the minimizers of the functional \eqref{eq:Q_min_prob_left} are unique up to functions of $\hat{Y}$ which are ``$\hat{\rho}$-almost everywhere $0$'', such functions corresponding to the term $\sum_{\substack{y\in \sigma(\hat{Y})\\ \langle \varphi_y^{\hat{Y}},\hat{\rho}\varphi_y^{\hat{Y}}\rangle=0}}g_y\hat{\Pi}_y^{\hat{Y}}$. In this article we only consider non-degenerate states, except for in Appendix~\ref{app:Fisher} where we discuss the connection with quantum Fisher information. Note that here by non-degenerate we mean states $\hat{\rho}\in D_{\hat{Y}}$, which does not necessarily imply that the state $\hat{\rho}$ is positive definite. For example the set $D_{\hat{Y}}$ contains pure states. This non-degeneracy condition allows nice characterizations of the conditional expectations. Moreover it is not too restrictive, as shown by Lemma~\ref{lemma: D_Y dense}.
    
$(ii)$ The expressions $\langle\cdot,\cdot\rangle_{\ell/\textrm r}$ are not inner products because $\langle \hat X,\hat X\rangle_{\ell/\textrm{r}}$ may vanish even if $\hat X\not=0$. This will happen if $\hat \rho$ has a non-vanishing kernel, foremost when the state $\hat{\rho } $ is  pure. One may however think of them as degenerate inner products or as quadratic cost functions. Note that this cost depends on the state $\hat \rho$. From this point of view, one may think of the operator $\mathbb E^{\ell/\textrm{r}}_{\hat\rho}(\hat X|\hat Y)$ as the operator function of the observables $\hat Y$ that approximates $\hat X$ at minimal cost. It is also clear that such minimizers necessarily depend on the choice of cost function, that in turn necessarily depends on the physics of the problem considered. In the next subsection, we show the ``left'' and ``right'' choices made here are natural in the sense that they can be uniquely characterized by a pull-out property analogous to the classical conditional expectation.
    
$(iii)$ In view of what precedes, the quantum conditional expectations $\mathbb E^{\ell, \mathrm r}$ are orthogonal projections with respect to the two ``inner products'' induced by the state $\hat{\rho}$ of the operator $\hat{X}$ onto the algebra of functions of the operators $\hat{Y}$. 

    $(iv)$  The conditional expectations  $\mathbb{E}^{\ell/\mathrm{r}}_{\hat{\rho}}(\hat{X}|\hat{Y})$ are not self-adjoint in general, even if we take $\hat{X}$ to be self-adjoint. This is markedly different from the classical situation, where the conditional expectation of a real random variable $X$, given a real random variable $Y$, is always real.  We will extensively elaborate on this point below.
    Note that 
\begin{equation}
\label{eq: link Eleft Eright}
\mathbb{E}^{\mathrm{r}}_{\hat{\rho}}(\hat{X}|\hat{Y})= \mathbb{E}^{\mathrm{\ell}}_{\hat{\rho}}(\hat{X}^\dagger|\hat{Y})^\dagger.
\end{equation}

    $(v)$  When $\hat X$ is self-adjoint, the coefficients 
\begin{equation}\label{eq:weak_values_def}
\left(\frac{\langle \varphi_y,\hat{X}\hat{\rho}\varphi_y\rangle}{\langle \varphi_y,\hat{\rho}\varphi_y\rangle}\right)_{y\in\sigma(\hat{Y})}
\end{equation}
appearing in $\QuantEcond{\ell}{\hat{X}}{\hat{Y}}$ are weak values with as the initial state $\hat \rho$, a weak measurement of the observable $\hat{X}$ and a post-selection $\hat{\Pi}^{\hat Y}_{y}$ for each $y\in\sigma(\hat{Y})$.  This observation links the conditional expectation of $\hat X$ to the physics of weak values, which has been studied extensively, see~\cite{arvidssonshukur2024properties,dresselContextualvalueApproachGeneralized2012, Williams_Jordan_2008,Jozsa2007,Dressel_Jordan_2012,Dressel_Jordan_2012_ImaginaryPartWeakValue}.  For the readers' convenience, we provide, in Appendix~\ref{app:weakvalues},  a concise, complete, self-contained and mathematically clean introduction to weak value measurement, which is the procedure by which the above weak values can be measured. This procedure relies in particular on a postselection, which has historically provided the intuitive link between weak value measurement and conditioning. We refer to the appendix for details.  The definition of quantum conditional expectation through minimization that we propose here allows to sharpen this link as we shall further discuss is Sections~\ref{s:Q_cond_exp_mean_var} and~\ref{s:Q_cond_exp_int}.      
\end{rmk}

\subsection{Characterization of the left/right quantum conditional expectations}\label{s:Q_econd_char}

Let us point out that there are many other quantum analogs of the inner product in Eq.\eqref{eq:L2classical}, among which one could for example consider
\begin{equation}
    \langle \hat X, \hat X'\rangle_\alpha=\alpha\langle\hat X, \hat X'\rangle_\ell +(1-\alpha)\langle \hat{X}, \hat{X'}\rangle_{\mathrm r}.
\end{equation}
for $0\leq \alpha\leq 1$, interpolating between $\langle \hat{X}, \hat{X'}\rangle_\ell$ and $\langle \hat{X}, \hat{X'}\rangle_{\mathrm r}$.  Minimizing $\langle \hat X-f(\hat Y),\hat X-f(\hat Y)\rangle_{\alpha}$ leads to a candidate notion of conditional expectation, whose explicit expression can be easily computed to be
\begin{equation}
    \mathbb{E}^{\alpha}_{\hat \rho}(\hat X|\hat Y)=\sum_{y\in \sigma(\hat Y)}\frac{\langle \varphi_y^{\hat Y}, (\alpha \hat X \hat \rho + (1-\alpha)\hat \rho \hat X)\varphi_y^{\hat Y}\rangle}{\langle \varphi_y^{\hat Y},\hat \rho \varphi_y^{\hat Y}\rangle}\hat{\Pi}_y^{\hat Y}.
\end{equation}
Other, more elaborate choices have been considered in~\cite{tsang2023}.

The following theorem shows that to obtain a conditional expectation that satisfies the properties of the classical conditional expectation given in Theorem~\ref{thm:econd_characterize2}, $\alpha$ has to be either $0$ or $1$.

\begin{Theorem}\label{thm:UniqueEcond}
Let $\hat{\rho} \in D_{\hat{Y}}$. For any $\#\in \{\ell, \mathrm{r}\}$,
there exist a unique map $\Econd^{\#}: \mathcal{L}(\mathcal{H})\to \CompFuncObs{\hat{Y}}$ satisfying the following properties:
\begin{enumerate}
    \item Pull-out formula. For any $\hat{X}\in \mathcal{L}(\mathcal{H})$ and any $f:\sigma(\hat{Y})\to \mathbb{C}$,
        \begin{align}
        &\text{for }\#=\ell,\:\:\Econd^{\ell}(f(\hat{Y})\hat{X})=f(\hat{Y})\Econd^{\ell}(\hat{X})=\Econd^{\ell}(\hat{X})f(\hat{Y}),
        \label{eq:left invariance}\\
        &\text{for }\#=\mathrm{r},\:\:\Econd^{\mathrm{r}}(\hat{X}f(\hat{Y}))=f(\hat{Y})\Econd^{\mathrm{r}}(\hat{X})=\Econd^{\mathrm{r}}(\hat{X})f(\hat{Y}).
        \label{eq:right invariance}
    \end{align}
    \item Quantum law of iterated expectations. For any $\hat{X}\in \mathcal{L}(\mathcal{H})$,
    \begin{equation}\label{eq:iterated expectation}
        \mathbb{E}_{\hat{\rho}}(\Econd^{\#}(\hat{X}))=\mathbb{E}_{\hat{\rho}}(\hat{X}).
    \end{equation}
\end{enumerate}
This map is given by $\Econd^{\#}(\hat{X})=\mathbb{E}^{\#}_{\hat{\rho}}(\hat{X}|\hat{Y})$.
\end{Theorem}
This theorem singles out two natural definitions of quantum conditional expectation through the conditions Eq.\eqref{eq:left invariance} and Eq.\eqref{eq:iterated expectation} or Eq.\eqref{eq:right invariance} and Eq.\eqref{eq:iterated expectation}. 

As in the classical case, the uniqueness in Theorem~\ref{thm:UniqueEcond} proves that Eq.\eqref{eq:Q_f(Y)_ind_Y} can be deduced from Eq.\eqref{eq:left invariance}-\eqref{eq:right invariance} and Eq.\eqref{eq:iterated expectation}. Again, one could have shown this fact directly.

\begin{proof}
Again, we write the proof for $\#=\ell$, the other case being similar. We begin by showing that the left quantum conditional expectation satisfies these two properties. For point 1, we compute
\begin{align*}
    \mathbb{E}^{\ell}_{\hat{\rho}}(f(\hat{Y})\hat{X}|\hat{Y})&=\sum_{y\in \sigma(\hat{Y})}\frac{\langle \varphi_y^{\hat{Y}}, f(\hat{Y})\hat{X}\hat{\rho} \varphi_y^{\hat{Y}}\rangle}{\langle \varphi_y^{\hat{Y}}, \hat{\rho}\varphi_y^{\hat{Y}}\rangle}\hat{\Pi}_y\\
    &=\sum_{y\in \sigma(\hat{Y})}f(y)\frac{\langle \varphi_y^{\hat{Y}}, \hat{X}\hat{\rho} \varphi_y^{\hat{Y}}\rangle}{\langle \varphi_y^{\hat{Y}},\hat{\rho} \varphi_y^{\hat{Y}}\rangle}\hat{\Pi}_y\\
    &=f(\hat{Y})\mathbb{E}^{\ell}_{\hat{\rho}}(\hat{X}|\hat{Y})=\mathbb{E}^{\ell}_{\hat{\rho}}(\hat{X}|\hat{Y})f(\hat{Y}).
\end{align*}
For point 2, we compute
\begin{align*}
    \mathbb{E}_{\hat{\rho}}(\mathbb{E}^{\ell}_{\hat{\rho}}(\hat{X}|\hat{Y}))&=\Tr\left(\hat{\rho}\sum_{y\in \sigma(\hat{Y})}\frac{\langle \varphi_y^{\hat{Y}}, \hat{X}\hat{\rho} \varphi_y^{\hat{Y}}\rangle}{\langle \varphi_y^{\hat{Y}},\hat{\rho}\varphi_y^{\hat{Y}}\rangle}\hat{\Pi}_y^{\hat{Y}}\right)\\
    &=\sum_{y\in \sigma(\hat{Y})}\langle \varphi_y^{\hat{Y}},\hat{X}\hat{\rho}\varphi_y^{\hat{Y}}\rangle\\
    &=\Tr(\hat{X}\hat{\rho})\\
    &=\mathbb{E}_{\hat{\rho}}(\hat{X}).
\end{align*}
Now, let $\Econd^{\ell}: \mathcal{L}(\mathcal{H})\to \CompFuncObs{\hat{Y}}$ be a map that satisfies Eq.~\eqref{eq:left invariance} and Eq.~\eqref{eq:iterated expectation}. We will show that $\Econd^{\ell}(\hat{X})=\mathbb{E}^{\ell}_{\hat{\rho}}(\hat{X}|\hat{Y})$ for any $\hat{X}\in \mathcal{L}(\mathcal{H})$. 
Let $\hat{X}\in \mathcal{L}(\mathcal{H})$. For any $f(\hat{Y})\in \CompFuncObs{\hat{Y}}$, we have
\begin{align*}
    \langle f(\hat{Y}), \hat{X}-\Econd^{\ell}(\hat{X})\rangle_{\ell} &=\Tr(\hat{\rho} f(\hat{Y})^{\dagger}(\hat{X}-\Econd^{\ell}(\hat{X})))\\
    &=\Tr(\hat{\rho} f(\hat{Y})^{\dagger} \hat{X})- \Tr(\hat{\rho} \Econd^{\ell}(f(\hat{Y})^{\dagger}\hat{X})) \text{ by Eq.\eqref{eq:left invariance}}\\
    &=\Tr(\hat{\rho} f(\hat{Y})^{\dagger}\hat{X})-\Tr(\hat{\rho} f(\hat{Y})^{\dagger}\hat{X}) \text{ by Eq.\eqref{eq:iterated expectation}}\\
    &=0,
\end{align*}
and of course $\langle f(\hat{Y}), \hat{X}-\mathbb{E}_{\hat{\rho}}(\hat{X}|\hat{Y})\rangle_{\ell}=0$ as well. It follows
\begin{align*}
    &\langle \Econd^{\ell}(\hat{X}) - \mathbb{E}^{\ell}_{\hat{\rho}}(\hat{X}|\hat{Y}),\Econd^{\ell}(\hat{X}) - \mathbb{E}^{\ell}_{\hat{\rho}}(\hat{X}|\hat{Y})\rangle_{\ell}\\ 
    =&\langle \Econd^{\ell}(\hat{X}) -\hat{X} + \hat{X} - \mathbb{E}^{\ell}_{\hat{\rho}}(\hat{X}|\hat{Y}),\Econd^{\ell}(\hat{X}) - \mathbb{E}^{\ell}_{\hat{\rho}}(\hat{X}|\hat{Y})\rangle_{\ell}\\
    =&- \langle \hat{X}-\Econd^{\ell}(\hat{X}),  \Econd^{\ell}(\hat{X}) - \mathbb{E}^{\ell}_{\hat{\rho}}(\hat{X}|\hat{Y})\rangle_{\ell}  +  \langle \hat{X}-\mathbb{E}^{\ell}_{\hat{\rho}}(\hat{X}|\hat{Y}),  \Econd^{\ell}(\hat{X}) - \mathbb{E}^{\ell}_{\hat{\rho}}(\hat{X}|\hat{Y})\rangle_{\ell}\\
    =&0
\end{align*}
since $ \Econd^{\ell}(\hat{X}) - \mathbb{E}^{\ell}_{\hat{\rho}}(\hat{X}|\hat{Y})\in \CompFuncObs{\hat{Y}}$. Therefore, $\Tr(\hat{\rho}|\Econd^{\ell}(\hat{X}) - \mathbb{E}^{\ell}_{\hat{\rho}}(\hat{X}|\hat{Y})|^2)=0$. Since $\Econd^{\ell}(\hat{X}) - \mathbb{E}^{\ell}_{\hat{\rho}}(\hat{X}|\hat{Y})\in \CompFuncObs{\hat{Y}}$, there exists $C_{\hat{\rho}}:\sigma(\hat{Y})\to \mathbb{C}$ such that $\Econd^{\ell}(\hat{X}) - \mathbb{E}^{\ell}_{\hat{\rho}}(\hat{X}|\hat{Y})=C_{\hat{\rho}}(\hat{Y})$. One gets
\begin{align*}
    0=\Tr(\hat{\rho} |C_{\hat{\rho}}(\hat{Y})|^2)
    =\sum_{y\in \sigma(\hat{Y})}|C_{\hat{\rho}}(y)|^2\langle \varphi_y^{\hat{Y}}, \hat{\rho} \varphi_y^{\hat{Y}}\rangle,
\end{align*}
which implies, since $\hat{\rho} \in D_{\hat{Y}}$, that $C_{\hat{\rho}}=0$, \textit{i.e.} $\Econd^{\ell}(\hat{X})=\mathbb{E}^{\ell}_{\hat{\rho}}(\hat{X}|\hat{Y})$.
\end{proof}

It is legitimate to ask if one can define a conditional expectation which satisfies both the left- and right-pullout properties \eqref{eq:left invariance} and \eqref{eq:right invariance}. This has been done in the context of von Neumann algebras in~\cite{umegaki1954}. In Appendix~\ref{app:vonNeumann}, we briefly explain the link between that approach and ours.

The conditional expectation of $\hat{X}$ knowing $\hat{Y}$ is defined only for states $\hat{\rho} \in D_{\hat{Y}}$, to avoid dividing by $0$. As we now prove in the following Lemma, this condition on the state $\hat{\rho}$ is not too restrictive, as the set $D_{\hat{Y}}$ is, in a topological sense, a large set (it is dense in the set of all density matrices).
\begin{lemma}
\label{lemma: D_Y dense}
    Let $\hat{Y}$ be a self-adjoint operator on $\mathcal{H}$. Then $D_{\hat Y}$, defined in Eq.~\eqref{eq:DYdef} 
    is dense in $D(\mathcal{H})$ (for any norm topology).
\end{lemma}

\begin{proof}
    Since we are working in finite dimension, all norms are equivalent. We chose to work with the trace-class norm. Let $\hat{\rho}_0\in D(\mathcal{H})$. If $\hat{\rho}_0\in D_{\hat{Y}}$ there is nothing to do. Otherwise, let 
    \begin{equation*}
        I_0=\{y\in \sigma(\hat{Y}): \Tr(\hat{\rho}_{0} \hat{\Pi}_y^{\hat{Y}})=0\}.
    \end{equation*}
    We denote by $\left|I\right|$ the cardinal of the finite set $I$. For $\varepsilon >0$, let 
    \begin{equation*}
        \hat{\rho}_{\varepsilon}=\frac{1}{1+\varepsilon |I_0|}\left(\hat{\rho}_0 + \varepsilon \sum_{y\in I_0}\hat{\Pi}_y^{\hat{Y}}\right)
    \end{equation*}
    Then it is easily checked that $\hat{\rho}_{\varepsilon}\in D_{\hat{Y}}$ for any $\varepsilon >0$ and moreover
    \begin{align*}
        \Tr(|\hat{\rho}_{\varepsilon}-\hat{\rho}_0|)&=\Tr\left( \left|\hat{\rho}_0\left(\frac{1}{1+\varepsilon|I_0|}-1\right) + \frac{\varepsilon}{1+\varepsilon|I_0|}\sum_{y\in I_0}\hat{\Pi}_y^{\hat{Y}}\right|\right)\\
        &\leq \left|1-\frac{1}{1+\varepsilon |I_0|}\right| + \frac{\varepsilon |I_0|}{1+\varepsilon |I_0|}\\
        &\xrightarrow[\varepsilon \to 0]{}0.
    \end{align*}
    This proves the lemma.
\end{proof}

\begin{rmk}
    $(i)$ If $\hat{Y}$ is an incomplete set of commuting observables, the associated projectors $(\hat{\Pi}_y^{\hat{Y}})_{y\in \sigma(\hat{Y})}$ are not all of rank $1$. However, the results in this section still adapt to this case. For $\hat{\rho}\in D_{\hat{Y}}$ and $\hat X\in \mathcal{L}(\mathcal{H})$, it is easily shown that the minimizers of Eq.\eqref{eq:MinCondExp} become
    \begin{equation}
    \forall y\in\sigma(\hat{Y}), f_{0}^{\ell}(y) = \frac{\mathrm{Tr}(\hat{X}\hat{\rho}\hat{\Pi}_{y}^{\hat{Y}})}{\mathrm{Tr}(\hat{\rho}\hat{\Pi}_{y}^{\hat{Y}})}, \: \: f_{0}^{\mathrm{r}}(y)=\frac{\Tr(\hat{\rho}\hat{X}\hat{\Pi}_y^{\hat{Y}})}{\Tr(\rho\hat{\Pi}_y^{\hat{Y}})},
\end{equation}
and that one can define the left and right conditional expectations by
\begin{equation}
    \mathbb{E}^{\ell}_{\hat{\rho}}(\hat{X}|\hat{Y}) = \sum_{y\in\sigma(\hat{Y})} \frac{\mathrm{Tr}(\hat{X}\hat{\rho}\hat{\Pi}_{y})}{\mathrm{Tr}(\hat{\rho}\hat{\Pi}_{y})} \hat{\Pi}_{y}, \:\: \mathbb{E}^{\mathrm{r}}_{\hat{\rho}}(\hat{X}|\hat{Y}) = \sum_{y\in\sigma(\hat{Y})} \frac{\mathrm{Tr}(\hat{\rho}\hat{X}\hat{\Pi}_{y})}{\mathrm{Tr}(\hat{\rho}\hat{\Pi}_{y})} \hat{\Pi}_{y}.
\end{equation}
The characterization given in Theorem \ref{thm:UniqueEcond} remains true in this setting as do the equivalences of Proposition  \ref{prop:leftequalsright} but not Eq.~\eqref{eq:left_right_commute}. In the following sections, we require the completeness of $\hat{Y}$.

 $(ii)$ We finally note that in the case of commuting operators $\hat{X } $ and $\hat{Y } $, the quantum conditional expectation reduces to the classical conditional expectation (\ref{def:condexpclass1}), when considering the two operators as classical random variables with joint probability $\mathbb{P } (\hat{X } = x , \hat{Y } = y ) = {\rm Tr } (\hat{\Pi }_x \hat{\Pi }_y \hat{\rho } ) $, the Born measure of the commuting system $(\hat{X } , \hat{Y } )$.
\end{rmk}

\subsection{The quantum laws of iterated expectations and of total variance}\label{s:Q_cond_exp_mean_var}

The first two moments of the quantum conditional expectations $\mathbb E_{\hat \rho}^{\ell/\mathrm r}(\hat X|\hat Y)$ have properties that are reminiscent of those of the classical conditional expectation: the law of iterated expectations and the law of total variance. There are however some important differences between the classical and quantum versions that we shall point out. The results are summed up in Proposition~\ref{prop:variances_condexp1} and in the discussion following it. 

We introduce, for any operator $\hat C\in\mathcal{L}(\mathcal{H})$, the left and right variance of $\hat{C } $ in the state $\hat \rho$ by:
\begin{equation}
\Delta_{\hat\rho}^{\ell,2}(\hat C)=\Tr\left(\hat \rho(\hat C-\mathbb{E}_{\hat \rho}( \hat C))^\dagger(\hat C-\mathbb{E}_{\hat \rho}( \hat C))\right),
\Delta_{\hat\rho}^{\mathrm{r},2}(\hat C)=\Tr\left(\hat \rho(\hat C-\mathbb{E}_{\hat \rho}( \hat C))(\hat C-\mathbb{E}_{\hat \rho}( \hat C))^\dagger\right).
\end{equation} 
When $\hat C^\dagger=\hat C$ or $\hat C^\dagger=-\hat C$, we will write
\begin{equation}
    \Delta^2_{\hat\rho} \hat C=\Delta_{\hat\rho}^{\ell,2}(\hat C)=\Delta_{\hat\rho}^{\mathrm r,2}(\hat C).
\end{equation}
 We further introduce the left and right conditional variances of $\hat C$, in analogy with the classical definition:  
\begin{equation}\label{eq:cond_var}
    \Delta_{\hat\rho}^{\ell,2 }(\hat C|\hat Y)=\mathbb E_{\hat\rho}^{\ell}(|\hat C-\mathbb E_{\hat \rho}^\ell(\hat C|\hat Y)|^2|\hat Y),\quad \Delta_{\hat\rho}^{\mathrm r,2 }(\hat C|\hat Y)=\mathbb E_{\hat\rho}^{\mathrm r}\left((\hat C-\mathbb E_{\hat \rho}^{\mathrm r}(\hat C|\hat Y))(\hat C-\mathbb E_{\hat \rho}^{\mathrm r}(\hat C|\hat Y))^\dagger|\hat Y\right).
\end{equation} 
Recall that, for any operator $Z$, one has $|\hat Z|^2=\hat Z^\dagger \hat Z$.
 In addition, a simple computation allows to establish  that the expected value of the conditional variance equals the error:
\begin{equation}\label{eq:MeanCondVar}
    \mathbb E_{\hat\rho}(\Delta^{\ell/r,2}_{\hat \rho} (\hat C|\hat Y))=\langle \hat C-\mathbb E_{\hat \rho}^{\ell/\mathrm{r}}(\hat C|\hat Y),\hat C-\mathbb E_{\hat \rho}^{\ell/\mathrm{r}}(\hat C|\hat Y)\rangle_{\ell/\mathrm{r}}.
\end{equation}

The notations are cumbersome, but the treatment is similar for the ``left'' and ``right'' cases. Recall that, even if $\hat X$ is self-adjoint, its conditional expectation may not be. The left/right conditional variances will then differ, generally. 
The following proposition sums up the basic properties of the conditional expectation and the conditional variance.

\begin{proposition}
\label{prop:variances_condexp1}
    If $\hat\rho\in D_{\hat Y}$ then, for any 
    $\hat X\in \mathcal{L}(\mathcal{H})$ the quantum law of iterated expectations holds:
\begin{equation}\label{eq:ConsMean}
    \mathbb E_{\hat \rho}(\mathbb{E}_{\hat \rho}^{\ell/\mathrm{r}}(\hat{X}|\hat{Y})) = \mathbb E_{\hat \rho}(\hat{X}).
\end{equation}
The quantum law of total variance reads:
\begin{eqnarray}\label{eq:Variance}
\Delta_{\hat \rho}^{\ell/\mathrm{r},2}\hat X &=& \Delta_{\hat \rho}^{\ell/\mathrm{r},2}\mathbb E_{\hat \rho}^{\ell/\mathrm{r}}(\hat X|\hat Y)+ \langle \hat X-\mathbb E_{\hat \rho}^{\ell/\mathrm{r}}(\hat X|\hat Y),\hat X-\mathbb E_{\hat \rho}^{\ell/\mathrm{r}}(\hat X|\hat Y)\rangle_{\ell/\mathrm{r}}\\
 &=& \Delta_{\hat \rho}^{\ell/\mathrm{r},2}\mathbb E_{\hat \rho}^{\ell/\mathrm{r}}(\hat X|\hat Y)+\mathbb E_{\hat\rho}(\Delta^{\ell/r,2}_{\hat \rho} (\hat X|\hat Y)).
\end{eqnarray}
If $\hat\rho\in D_{\hat Y}$ is pure,
\begin{equation}\label{eq:CondVarpure}   
\Delta_{\hat\rho}^{\ell,2 }(\hat X |\hat Y) = 0  
\end{equation}
and thus,
\begin{equation}\label{eq:Variance_pure}
    \Delta_{\hat \rho}^{\ell/\mathrm{r},2}\hat X = \Delta_{\hat \rho}^{\ell/\mathrm{r},2}\mathbb E_{\hat \rho}^{\ell/\mathrm{r}}(\hat X|\hat Y).
\end{equation}
\end{proposition}   

\medskip   
   
The quantum law of iterated expectations, Eq.~\eqref{eq:ConsMean}, says that the best estimator of $\hat X$ by a function of $\hat Y$ has the same expected value as $\hat X$ itself: it is, in this sense, unbiased, as the classical conditional expectation.  Eq.~\eqref{eq:Variance} is in turn a quantum analog of the classical law of total variance Eq.~\eqref{eq:variances_compare_class}.  It provides a relation  between the variance of $\hat X$, the variance of the conditional expectation $\mathbb{E}_{\hat \rho}^{\ell/\mathrm{r}}(\hat{X}|\hat{Y})$, and the expected value of the conditional variance $\Delta^{\ell/\mathrm r,2}(X|Y)$, in complete analogy with the the classical situation. The error term in this relation (the second term on the right hand side of Eq.~\eqref{eq:Variance}) can, as in the classical case, be expressed as the expected value of the conditional variance: this is the content of Eq.~\eqref{eq:MeanCondVar}. 

\begin{proof}
 We give the proof for the left conditional expectation $\mathbb{E }^{\ell} _{\hat\rho} (\hat{X } | \hat{Y } )$.
Note that Eq.~\eqref{eq:ConsMean} is exactly Eq.~\eqref{eq:iterated expectation} and it has already be shown in the proof of Theorem~\ref{thm:UniqueEcond}.

To prove Eq.~\eqref{eq:Variance}, we note that, from Eq.~\eqref{eq:quadform}, we obtain:  
\begin{eqnarray}    
\min_{f\in \CompFuncObs{\hat Y}}
\langle (\hat X-f(\hat Y) ),(\hat X-f(\hat Y) )\rangle_{\ell}&=& {\rm Tr } (\hat{X }^{\dagger }  \hat{X } \hat\rho ) - \sum _y \frac{ | \langle \varphi_{y}^{\hat Y}, \hat{X } \hat\rho \varphi_{y}^{\hat Y}\rangle |^2 }{ \langle \varphi_y^{\hat Y}, \hat\rho \varphi_y^{\hat Y} \rangle}\nonumber\\
&=&   \langle \hat{X } , \hat{X } \rangle _{\ell} - \langle \mathbb{E }^{\ell} _{\hat\rho} (\hat{X } | \hat{Y } ) , \mathbb{E }^{\ell } _{\hat\rho} (\hat{X } | \hat{Y } ) \rangle _{\ell}.  \label{eq:val_min_1}
\end{eqnarray}
Thus:
\begin{equation}\label{eq:ValueMinLeft}
    \langle \hat X - \mathbb{E }^{\ell} _{\hat\rho} (\hat{X } | \hat{Y } ) , \hat X- \mathbb{E }^{\ell } _{\hat\rho} (\hat{X } | \hat{Y } ) \rangle _{\ell }=\langle \hat{X } , \hat{X } \rangle _{\ell} - \langle \mathbb{E }^{\ell} _{\hat\rho} (\hat{X } | \hat{Y } ) , \mathbb{E }^{\ell } _{\hat\rho} (\hat{X } | \hat{Y } ) \rangle _{\ell }.
\end{equation}
We have that:
\begin{eqnarray} \label{eq:decomp_quantum_variance}   
\Delta_{\hat\rho}^{\ell,2}(\hat X) & = & \mathrm{Tr}(\hat \rho  \hat X^{\dagger}\hat X) - \overline{\mathbb E_{\hat \rho}(\hat X)}\mathrm{Tr}(\hat \rho  \hat X) - \mathbb E_{\hat \rho}(\hat X)\mathrm{Tr}(\hat \rho  \hat X^{\dagger}) + \mathbb E_{\hat \rho}(\hat X)\overline{\mathbb E_{\hat \rho}(\hat X)} \nonumber\\
&=&\langle \hat X,\hat X\rangle_{\ell} - \left|\mathbb E_{\hat \rho}(\hat X)\right|^2 \nonumber\\
&=& \langle \hat X - \mathbb{E }^{\ell} _{\hat\rho} (\hat{X } | \hat{Y } ) , \hat X- \mathbb{E }^{\ell } _{\hat\rho} (\hat{X } | \hat{Y } ) \rangle _{\ell }+ \langle \mathbb{E }^{\ell} _{\hat\rho} (\hat{X } | \hat{Y } ) , \mathbb{E }^{\ell } _{\hat\rho} (\hat{X } | \hat{Y } ) \rangle _{\ell } - |\mathbb E_{\hat \rho}(\mathbb{E }^{\ell} _{\hat\rho} (\hat{X } | \hat{Y } ))|^2\nonumber\\
&=& \langle \hat X - \mathbb{E }^{\ell} _{\hat\rho} (\hat{X } | \hat{Y } ) , \hat X- \mathbb{E }^{\ell } _{\hat\rho} (\hat{X } | \hat{Y } ) \rangle _{\ell }+ \Delta_{\hat\rho}^{\ell,2}(\mathbb{E }^{\ell } _{\hat\rho} (\hat{X } | \hat{Y } ))
\end{eqnarray}
where we used Eq.~\eqref{eq:ConsMean} and Eq.~\eqref{eq:ValueMinLeft} on the third line.

Eq.~\eqref{eq:MeanCondVar} now follows:
\begin{eqnarray*}
    \mathbb E_{\hat\rho}(\Delta^{\ell,2} (\hat X|\hat Y))&=& \mathbb E_{\hat\rho}\left(\mathbb E_{\hat\rho}^{\ell}(|\hat X-\mathbb E_{\hat \rho}^\ell(\hat X|\hat Y)|^2|\hat Y)\right)\\
    &=& \mathbb E_{\hat\rho}\left(|\hat X-\mathbb E_{\hat \rho}^\ell(\hat X|\hat Y)|^2\right) \\
    &=& \mathrm{Tr}\left(\rho|\hat X-\mathbb E_{\hat \rho}^\ell(\hat X|\hat Y)|^2\right) = \langle \hat{X}-\mathbb E_{\hat \rho}^\ell(\hat X|\hat Y), \hat{X}-\mathbb E_{\hat \rho}^\ell(\hat X|\hat Y)\rangle_{\ell}
\end{eqnarray*}
where we used the quantum law of iterated expectations.

We finally prove Eq.~\eqref{eq:CondVarpure}.
We have that :
\begin{eqnarray} \nonumber   
 \Delta^{\ell,2}_{\hat \rho}(\hat X |\hat Y) &=&\mathbb{E }^{\ell } _{\hat \rho } ( |\hat{X } - \mathbb{E }^{\ell } _{\hat \rho } (\hat X | \hat Y ) |^2 | \hat Y ) \\   
 &=&\mathbb{E }^{\ell } _{\hat \rho } (\hat X ^\dagger \hat X | \hat Y ) - \mathbb{E }^{\ell } _{\hat \rho } (\hat X ^\dagger \mathbb{E }^{\ell } _{\hat \rho } (\hat X | \hat Y ) | \hat Y ) - \mathbb{E }^{\ell } _{\hat \rho } ( \mathbb{E }^{\ell } _{\hat \rho } ( \hat X | \hat Y )^{\dagger } \hat X | \hat Y ) + \mathbb{E }^{\ell } _{\hat \rho } (| \mathbb{E }^{\ell } _{\hat \rho } (\hat X | \hat Y ) |^2 | \hat{Y } ) \nonumber \\ 
 &=& \mathbb{E }^{\ell } _{\hat \rho } (\hat X ^\dagger \hat X | \hat Y ) - \mathbb{E }^{\ell } _{\hat \rho } (\hat X ^\dagger \mathbb{E }^{\ell } _{\hat \rho } (\hat X | \hat Y ) | \hat Y ) , \label{eq:exp_quantum_cond_var_bis}    
\end{eqnarray}
We now compute:
\begin{eqnarray*}   
\mathbb{E }^{\ell } _{\hat \rho } (\hat X ^\dagger \mathbb{E }^{\ell } _{\hat \rho } (\hat X | \hat Y ) | \hat Y ) &=& 
\sum _{y\in\sigma(\hat Y)} \frac{{\rm Tr } (\hat \Pi _y^{\hat Y} \hat X ^\dagger \mathbb{E }^{\ell } _{\hat \rho } (\hat X | \hat Y ) \hat \rho ) }{{\rm Tr } (\hat \Pi _y^{\hat Y} \hat \rho ) } \hat \Pi _y^{\hat Y} \\   
&=& \sum _{y\in\sigma(\hat Y)} \frac{1 }{{\rm Tr } (\hat \Pi _y^{\hat Y} \hat \rho ) } \left( \sum _{y'\in\sigma(\hat Y)} \frac{{\rm Tr } (\hat \Pi _{y' }^{\hat Y} \hat{X } \hat \rho ) {\rm Tr } (\hat \Pi _y^{\hat Y} \hat X ^\dagger \hat \Pi _{y' }^{\hat Y} \hat \rho ) }{{\rm Tr } (\hat \Pi _{y' }^{\hat Y} \hat \rho ) } \right) \hat \Pi _y^{\hat Y}. 
\end{eqnarray*}
As $\hat \rho$ is a pure state, we note $\hat\rho = \cket{\psi}\bra{\psi}$ and we obtain that:
\begin{eqnarray*}
    \sum _{y'\in\sigma(\hat Y)} \frac{{\rm Tr } (\hat \Pi _{y' }^{\hat Y} \hat{X } \hat \rho ) {\rm Tr } (\hat \Pi _y^{\hat Y} \hat X ^\dagger \hat \Pi _{y' }^{\hat Y} \hat \rho ) }{{\rm Tr } (\hat \Pi _{y' }^{\hat Y} \hat \rho ) } &=&  \sum _{y'\in\sigma(\hat Y)}\frac{\langle \psi , \varphi _{y' }^{\hat Y} \rangle \langle \varphi _{y' }^{\hat Y} , \hat X \psi \rangle \langle \psi , \varphi _y^{\hat Y} \rangle \langle \varphi _y^{\hat Y} , \hat X ^\dagger \varphi _{y' }^{\hat Y} \rangle \langle \varphi _{y' }^{\hat Y} , \psi \rangle}{| \langle \varphi _{y' }^{\hat Y} , \psi \rangle |^2 }\\
    &=& \sum _{y'\in\sigma(\hat Y)} \langle \psi , \varphi _y^{\hat Y} \rangle  \langle \varphi _y^{\hat Y} , \hat X ^\dagger \varphi _{y' }^{\hat Y} \rangle\langle \varphi _{y' }^{\hat Y} , \hat X \psi \rangle\\
    &=& \langle \psi , \varphi _y^{\hat Y} \rangle \langle \varphi _y^{\hat Y} , \hat X ^\dagger\hat X \psi \rangle = \mathrm{Tr}(\hat\Pi_{y}^{\hat Y} \hat X ^\dagger\hat X \hat \rho).
\end{eqnarray*}
Consequently, we conclude that:
\begin{eqnarray*}
    \mathbb{E }^{\ell } _{\hat \rho } (\hat X ^\dagger \mathbb{E }^{\ell } _{\hat \rho } (\hat X | \hat Y ) | \hat Y ) = \sum _{y\in\sigma(\hat Y)} \frac{\mathrm{Tr}(\hat\Pi_{y}^{\hat Y} \hat X ^\dagger\hat X \hat \rho)}{{\rm Tr } (\hat \Pi _{y' }^{\hat Y} \hat \rho ) }\hat \Pi _y^{\hat Y} = \mathbb{E }^{\ell } _{\hat \rho } (\hat X ^\dagger \hat X | \hat Y ). 
\end{eqnarray*}
This proves Eq.~\eqref{eq:CondVarpure}. Moreover, Eq.\eqref{eq:Variance_pure} is a direct consequence of Eq.~\eqref{eq:Variance}, Eq.~\eqref{eq:MeanCondVar} and Eq.~\eqref{eq:CondVarpure}.

\end{proof}

Having stressed the analogies between the classical and quantum conditional expectations and their basic properties, we now discuss  four important differences between them. 

First, classically, one has, if X is bounded ($X_{\min} \leqslant X(\omega) \leqslant X_{\max}$), that
\begin{equation}\label{eq:condexpboundclass}
    X_{\min}\leq \mathbb E_{\mathbb P}(X|Y=y)\leq X_{\max}.
\end{equation}
We have already pointed out that quantum mechanically, even if $\hat X$ is self-adjoint, $\mathbb E_{\mathbb P}(\hat X|\hat Y=y)$ may be complex non-real. In addition, when it is indeed real, it may not be confined to lie between the minimal and maximal eigenvalue of $\hat X$. One says a value $\mathbb E_{\hat\rho}(\hat X|\hat Y=y)$ of the conditional expectation $\mathbb E_{\hat\rho}(\hat X|\hat Y)$ is anomalous if $\mathbb E_{\hat\rho}(\hat X|\hat Y=y)\not\in [x_{\min}, x_{\max}]$, where $x_{\min, \max}$ are the extremal eigenvalues of $\hat X$. The historic example of such anomalous values corresponds to a spin 1/2~\cite{Aharonov88}. Let $\mathcal H=\C^2$, $\hat X=\hat\sigma_z$, $\hat Y=\hat \sigma_x$ and $\hat\rho=|\alpha\rangle\langle \alpha|$, $|\alpha\rangle=\cos\alpha|-1\rangle+\sin\alpha|1\rangle$. Then,
\begin{equation}\label{example:anomalous_weak}
    \mathbb E^{\ell}_{\hat \rho}(\hat X|\hat Y=1)=\frac{\sin\alpha-\cos\alpha}{\sin\alpha+\cos\alpha},\quad 
\mathbb E^{\ell}_{\hat \rho}(\hat X|\hat Y=-1)=\frac{\sin\alpha+\cos\alpha}{\sin\alpha-\cos\alpha}.
\end{equation}
 Choosing $\alpha$ close to $\pi/4$, for example, the second term can be made arbitrary large, while the first one tends to $0$. This phenomenon lies at the basis of weak value amplification; see~\cite{arvidssonshukur2024properties} for a review on this topic. Note that, when $\alpha=\pi/4$, then $|\alpha\rangle$ does not belong to $D_{\hat Y}$, since it is then an eigenvector of $\hat Y=\hat \sigma_x$.

The existence of anomalous values is a strong indicator of typically quantum mechanical behaviour as the following immediate consequence of Eq.~\eqref{eq:condexpboundclass} shows. 
\begin{lemma}\label{lem:nogo}
    Let $\hat X$ and $\hat Y$ be observables and $\hat \rho\in D_{\hat Y}$. Suppose there exists a probability $\mu: (x,y)\in \sigma(\hat X)\times \sigma(\hat Y)\mapsto \mu(x,y)\in [0,1]$ satisfying
    \begin{itemize}
        \item[(i)] \emph{Born compatibility}: for all $x\in\sigma(\hat X)$, for all $y\in\sigma(\hat Y)$, 
        \begin{equation}
     \sum_{x\in\sigma(\hat X)}\mu(x,y)=\Tr(\hat\Pi_y^{\hat Y}\hat \rho),\quad \sum_{y\in\sigma(\hat Y)}\mu(x,y)=\Tr(\hat\Pi_x^{\hat X}\hat \rho);
     \end{equation}
        \item[(ii)] For all $y\in\sigma(\hat Y)$, $\mathbb E^{\ell}_{\hat\rho}(\hat X|\hat Y=y)=\sum_{x\in\sigma(\hat X)} x \frac{\mu(x,y)}{\Tr(\hat\rho\hat\Pi_y^{\hat Y})}$.
    \end{itemize}
    Then 
    \begin{equation}\label{eq:CONDEXPclassbound}
    x_{\min}\leq \mathbb E_{\hat \rho}^{\ell}(\hat X|\hat Y=y) \leq x_{\max}.
    \end{equation}
\end{lemma}
The same statement holds with $\ell$ replaced by $\mathrm r$. Condition (i) asserts that, for the  given triple $(\hat X, \hat Y, \hat \rho)$, $\mu$ is a joint probability distribution  on the joint spectrum of $\hat X$ and $\hat Y$ that has the correct quantum mechanical Born marginals in the state $\hat \rho$ and $\hat X$ and $\hat Y$. Condition~(ii) adds a condition on the natural associated conditional probability obtained from Bayes' rule, $\mu(x|y)=\frac{\mu(x,y)}{\Tr(\hat\rho\hat Y)}$: it must lead to a conditional expectation for $\hat X$, given $\hat Y$, that coincides with the quantum mechanical conditional expectation $\mathbb E_{\hat \rho}^{\ell}(\hat X|\hat Y=y)$. This then implies that the quantum mechanical conditional expectation cannot take on anomalous values. This lemma is of interest because its contrapositive provides a no-go theorem for the existence of joint probabilities: if, for a given triple $(\hat X,\hat Y,\hat \rho)$, the conditional expectation $\mathbb E_{\hat\rho}^{\ell}(\hat X|\hat Y=y)$ admits at least one anomalous value, then there does not exist a joint probability distribution satisfying (i) and (ii) of the lemma. The usual ``no-go'' theorems of this type, ruling out the existence of Born-compatible joint probability distributions in quantum mechanics, rely on assumptions that need to hold \emph{for all} $\hat \rho$, given a pair of observables $\hat X$ and $\hat Y$ or, alternatively, for all $\hat \rho$ and all $\hat X$ (see for example~\cite{ludwig1983, Fe11, lostaglio2023kirkwood}). Here we consider a single state $\hat \rho$ and add to Born-compatibility of this state a second compatibility condition of the same type, still only for this unique state; indeed (ii)   requires equality between the ``classical'' conditional expectation of $\hat X$ and the quantum one defined in the previous section. If the quantum conditional expectation admits at least one anomalous value (\emph{i.e.} if Eq.~\eqref{eq:CONDEXPclassbound} is violated) then such a joint probability distribution cannot exist.  It should be noted that the Born-compatibility condition in and by itself is very weak and can always be satisfied by taking, for example:
$$
\mu(x,y)=\Tr(\hat\rho \hat \Pi_x^{\hat X})\Tr(\hat\rho\hat \Pi_y^{\hat Y}).
$$
But this treats $x$ and $y$ as independent random variables, which is not coherent with the second condition, generally. The by now standard way to avoid this no-go theorem is to allow $\mu(x,y)$ to be complex-valued, \emph{i.e.} a quasiprobability distribution rather than a probability distribution, as we will see in the next section.

To explain the second difference between the classical and quantum conditional expectations,  we express Eq.~\eqref{eq:Variance} in terms of the self-adjoint and the anti-self-adjoint part of $\mathbb{E}_{\hat \rho}^{\ell/\mathrm{r}}(\hat{X}|\hat{Y})$.
For the self-adjoint part of $\mathbb E_{\hat\rho}^{\ell/\mathrm r}(\hat X|\hat Y)$, we write
\begin{eqnarray}
    \label{eq: def of real econd leftright1}
    \SACondExp{\hat \rho}{\ell}{\hat X}{\hat Y}&:=&\frac{1}{2}\left(\mathbb{E}^{\ell}_{\hat{\rho}}(\hat{X}|\hat{Y}) + \mathbb{E}^{\ell}_{\hat{\rho}}(\hat{X}|\hat{Y})^{\dagger}\right)=\sum_{y\in \sigma(\hat{Y})}\mathrm{Re}\left(\frac{\langle \varphi_y,\hat{X}\hat{\rho}\varphi_y\rangle}{\langle \varphi_y,\hat{\rho}\varphi_y\rangle}\right)\hat{\Pi}_y\\
    \label{eq: def of real econd leftright2}
    \SACondExp{\hat \rho}{\mathrm{r}}{\hat X}{\hat Y}&:=&\frac{1}{2}\left(\mathbb{E}^{r}_{\hat{\rho}}(\hat{X}|\hat{Y}) + \mathbb{E}^{r}_{\hat{\rho}}(\hat{X}|\hat{Y})^{\dagger}\right)=\sum_{y\in \sigma(\hat{Y})}\mathrm{Re}\left(\frac{\langle \varphi_y,\hat{\rho}\hat{X}\varphi_y\rangle}{\langle \varphi_y,\hat{\rho}\varphi_y\rangle}\right)\hat{\Pi}_y,
\end{eqnarray}
for any $\hat{\rho}\in D_{\hat{Y}}$ and any $\hat{X}\in \mathcal{L}(\mathcal{H})$. 
Similarly, we introduce
\begin{equation}
\mathrm{Im}\mathbb{E}^{\ell/\mathrm r}_{\hat{\rho}}(\hat{X}|\hat{Y})=\frac{1}{2i} \left(\mathbb{E}^{\ell/\mathrm r}_{\hat{\rho}}(\hat{X}|\hat{Y})-\mathbb{E}^{\ell/\mathrm{r}}_{\hat{\rho}}(\hat{X}|\hat{Y})^{\dagger}\right)
\end{equation}
As the self-adjoint and anti-self-adjoint parts of $\mathbb{E}_{\hat \rho}^{\ell/\rm r}(\hat X|\hat Y)$ are both functions of $\hat{Y}$, they commute. Then
\begin{equation}
    \Delta_{\hat \rho}^{\ell/\mathrm{r},2}\mathbb E_{\hat \rho}^{\ell/\mathrm{r}}(\hat X|\hat Y)=\Delta_{\hat \rho}^{2}\mathbb E_{\hat \rho}^{\ell/\mathrm{r},\mathrm{sa}}(\hat X|\hat Y)+\Delta_{\hat \rho}^{2}\mathrm{Im}\mathbb E_{\hat \rho}^{\ell/\mathrm{r}}(\hat X|\hat Y).
\end{equation}
In conclusion,  thet law of total variance, Eq.~\eqref{eq:Variance}, shows that the variance of $\hat X$ differs from the sum of the variances of the self-adjoint and anti-self-adjoint part of its conditional expectation given $\hat Y$ by the mean quadratic error between the two. 
Note that, for self-adjoint $\hat X$, the underlying chosen ordering is irrelevant:
\begin{eqnarray}
    \label{eq:equality of real econd for self adjoint}
    \SACondExp{\hat \rho}{\ell}{\hat X}{\hat Y}=\SACondExp{\hat \rho}{\mathrm{r}}{\hat X}{\hat Y}&=&\frac12\sum_{y\in\sigma(\hat Y)} \frac{\langle \varphi_y,\{\hat{\rho},\hat{X}\}\varphi_y\rangle}{\langle \varphi_y,\hat{\rho}\varphi_y\rangle}, \\ \mathrm{Im}\mathbb{E}^{\ell}_{\hat{\rho}}(\hat{X}|\hat{Y})=-\mathrm{Im}\mathbb{E}^{\mathrm{r}}_{\hat{\rho}}(\hat{X}|\hat{Y})&=&\frac1{2i}\sum_{y\in\sigma(\hat Y)} \frac{\langle \varphi_y,[\hat{\rho},\hat{X}]\varphi_y\rangle}{\langle \varphi_y,\hat{\rho}\varphi_y\rangle},
\end{eqnarray}
where $\{\cdot, \cdot\}$ denotes the anti-commutator and $[\cdot, \cdot]$ the commutator. 
Thus, for self-adjoint $\hat X$, we will write 
\begin{equation} \label{eq:E^sa}   
    \mathbb E_{\rho}^{\mathrm{sa}}(\hat X | \hat Y) := \SACondExp{\hat \rho}{\ell}{\hat X}{\hat Y}=\SACondExp{\hat \rho}{\mathrm{r}}{\hat X}{\hat Y}. 
\end{equation}
It should be noted that this equality need not be true if $\hat{X}$ is not a self-adjoint operator. In particular, we have that, generally,
\begin{equation*}
    \SACondExp{\hat \rho}{\ell}{f(\hat{Y})\hat{X}}{\hat Y}\neq\SACondExp{\hat \rho}{\mathrm{r}}{f(\hat{Y})\hat{X}}{\hat Y}
\end{equation*}
for $f:\sigma(\hat{Y})\to \mathbb{R}$, even if $\hat{X}$ is self-adjoint. The law of total variance, Eq.~\eqref{eq:Variance}, then becomes, for self-adjoint $\hat X$,
\begin{equation}
\label{eq: variance real and imaginary parts}
\Delta_{\hat \rho}^{2}\hat X =\Delta_{\hat \rho}^{2}\mathbb E_{\hat \rho}^{\mathrm{sa}}(\hat X|\hat Y)+\Delta_{\hat \rho}^{2}\left(\mathrm{Im}\mathbb E_{\hat \rho}^{\ell/\mathrm{r}}(\hat X|\hat Y)\right) + \langle (\hat X-\mathbb E_{\hat \rho}^{\ell/\mathrm{r}}(\hat X|\hat Y)),(\hat X-\mathbb E_{\hat \rho}^{\ell/\mathrm{r}}(\hat X|\hat Y))\rangle_{\ell/\mathrm{r}}.
\end{equation}
Comparing this to the classical equality Eq.~\eqref{eq:variances_compare_class}, one notices the extra term given by the variance of $\mathrm{Im}\mathbb E_{\hat \rho}^{\ell/\mathrm{r}}(\hat X|\hat Y)$ which comes from the non-self-adjointness of the conditional expectation.  This term depends on the commutator $[\hat \rho, \hat X]$ and as such, its absence classically is expected. We will see in Section~\ref{s:Q_cond_exp_int} that it can vanish even if $\hat\rho$ and $\hat X$ do not commute and we will  use  results on the KD representations of quantum mechanics to show many examples where $\mathrm{Im}\mathbb E_{\hat \rho}^{\ell/\mathrm{r}}(\hat X|\hat Y)$ vanishes. We also further explore there  the link between $\mathrm{Im}\mathbb E_{\hat \rho}^{\ell/\mathrm{r}}(\hat X|\hat Y)$ and an appropriately defined Fisher information associated to a parameter estimation protocol naturally associated with a triple $\hat \rho$, $\hat X$, and $\hat Y$.

The third difference with the classical situation concerns what happens for pure states. In classical probability theory, the pure states are Dirac delta measures $\delta_{\omega_0}$, on $\Omega$, with $\omega_0\in\Omega$. If $X, Y$ are random variables, $\mathbb E_{\delta_{\omega_o}}(X|Y=y)=X(\omega_0)\delta_{Y(\omega_0)}(y)$, and is therefore also a Dirac delta measure, on $\R$. As a result, the classical analogues to Eq.~\eqref{eq:CondVarpure} and Eq.~\eqref{eq:Variance_pure} also hold, but with one major difference: since the Dirac delta measures are dispersion-free, both sides of this classical equivalent of Eq.~\eqref{eq:Variance_pure} then vanish. In the quantum case, pure states $\hat \rho=|\psi\rangle\langle \psi|$ are never dispersion-free and neither side vanishes as soon as $|\psi\rangle$ is not an eigenstate of  $\hat X$. So in quantum theory, the error term in the law of total variation vanishes for pure states, and the variance of $\hat X$ equals that of its conditional expectation $\mathbb E_{|\psi\rangle\langle \psi|}(\hat X|\hat Y)$.

Fourth, and finally, we note that the well-known relation that holds  in classical probability
\begin{equation*}
    \Delta^2(X|Y) = \mathbb{E}_{\mathbb{P}}(\left|X\right|^2|Y) - \left|\mathbb{E}_{\mathbb{P}}(X|Y)\right|^2,
\end{equation*}
does not hold in the quantum case.  Indeed, it follows from  Eq.~\eqref{eq:exp_quantum_cond_var_bis} that, for the left quantum conditional expectation, for example:
\begin{equation*}
    \Delta^{\ell,2}(\hat X|\hat Y) = \mathbb{E }^{\ell } _{\hat \rho } (\hat X ^\dagger \hat X | \hat Y ) - \mathbb{E }^{\ell } _{\hat \rho } (\hat X ^\dagger \mathbb{E }^{\ell } _{\hat \rho } (\hat X | \hat Y ) | Y )  \neq \mathbb{E }^{\ell } _{\hat \rho }(|\hat X|^2|Y) - \left|\mathbb{E }^{\ell } _{\hat \rho }(\hat X|\hat Y)\right|^2.
\end{equation*}
   
\noindent One can also define an alternative conditional variance by the final expression above, specifically,     
\begin{equation}   
\label{eq:alternative cond variance}
\tilde \Delta ^{2, \ell  } _{\hat \rho } (\hat X | \hat Y ) := \mathbb{E }^{\ell } _{\hat \rho }(|\hat X|^2|Y) - \left|\mathbb{E }^{\ell } _{\hat \rho }(\hat X|\hat Y)\right|^2 ,      
\end{equation}   
with an obvious right variant obtained by replacing $|\hat X |^2 = \hat X ^{\dagger } \hat X $ by $\hat X \hat X ^{\dagger } $ (the two coincide if $\hat X $ is self-adjoint).   
These will then still satisfy the law of total variance, in the form   
$$   
\Delta ^{2, \ell } (\hat X ) = \Delta ^{2, \ell } \bigl( \mathbb{E } ^{\ell } _{\hat \rho } (\hat X | \hat Y ) \bigr) + \mathbb{E }_{\hat \rho } \bigl( \tilde \Delta ^{2,  \ell  } (\hat X | \hat Y ) \bigr).  
$$   
Another notion of conditional variance that has been proposed in the literature is the {\it weak variance} introduced in  \cite{Ogawa_2021}, which with our notations is defined as     
\begin{equation}   
\sigma _{w, \hat \rho } (\hat X | \hat Y ) = \mathbb{E }^{\ell}_{\hat \rho } (\hat X ^2 | \hat Y ) - \mathbb{E }^{\ell}_{\hat \rho } ( \hat X | \hat Y )^2.   
\end{equation}  
   
This weak variance can be given an operational meaning in the context of weak measurements (cf. Appendix A, in particular formula~\eqref{eq:OgawaResult}) but does not satisfy the law of total variance, unlike (\ref{eq:alternative cond variance}) which can be given a similar operational significance: see Corollary \ref{cor:alt_expr_c2}. There exists in fact a whole range of plausible definitions of conditional variances, including variants which can be introduced in the context of quasiprobabilistic representations of quantum mechanics, whose relative merits still have to be investigated.

\subsection{The real part of \texorpdfstring{$\mathbb E_{\hat\rho}^{\ell/\mathrm{r}}(\hat X|\hat Y)$}{E[...]}  as a best estimator}\label{s:real part}

 As one might expect, the left and right self-adjoint quantum conditional expectations $\SACondExp{\hat \rho}{\ell/\mathrm{r}}{\hat X}{\hat Y}$ satisfy properties similar to the ones given in Theorem \ref{thm:UniqueEcond}. We specify them for the left self-adjoint quantum conditional expectation. They also hold for the right self-adjoint one, with adapted ordering. We denote by $\ReFuncObs{\hat Y}$ the set of self-adjoint functions of $\hat{Y}$:
\begin{equation}
    \ReFuncObs{\hat Y} = \left\{f(\hat{Y}) \mid f : \sigma(\hat{Y}) \to \R \right\}.
\end{equation}
Adapting slightly the proof of Theorem \ref{thm:UniqueEcond}, one easily shows that for $\hat{\rho}\in D_{\hat{Y}}$, $\SACondExp{\hat \rho}{\ell}{\cdot}{\hat Y}$ is the unique map with values in the set $\ReFuncObs{\hat Y}$ with the two following properties:
\begin{enumerate}
    \item For any $f:\sigma(\hat{Y})\to \mathbb{R}$ and any $\hat{X}\in \mathcal{L}(\mathcal{H})$,
    \begin{equation}
        \SACondExp{\hat \rho}{\ell}{f(\hat{Y})\hat{X}}{\hat Y}=f(\hat{Y})\SACondExp{\hat \rho}{\ell}{\hat{X}}{\hat Y}.
    \end{equation}
    \item For any $\hat{X}\in \mathcal{L}(\mathcal{H})$,
    \begin{equation}
        \mathbb{E}_{\hat{\rho}}(\SACondExp{\hat \rho}{\ell}{\hat{X}}{\hat Y})=\mathrm{Re}(\mathbb{E}_{\hat{\rho}}(\hat{X})).
    \end{equation}
\end{enumerate}
Finally, it is also easily seen that this left/right  conditional expectations can be obtained as the minimizer of the mean square error functional
\begin{equation} \label{eq:min_problem_E^R}
    f\mapsto \langle \hat{X}-f(\hat{Y}), \hat{X}-f(\hat{Y})\rangle_{\ell/\rm r},
\end{equation}
where the minimization is considered over the set of \textit{real valued} functions $f:\sigma(\hat{Y})\to\mathbb{R}$. The self-adjoint left conditional expectation of $\hat{X}$ is therefore the observable closest to $\hat{X}$ (with respect to the above degenerate left inner product) that is a self-adjoint function of $\hat{Y}$. 

For self-adjoint $\hat X$, this real minimization problem was previously addressed in ~\cite{dressel2015,brummelhuis2024,Brummelhuis2025a,Brummelhuis2025b,tsang2023}, where the link between the minimizer, viewed as a conditional expectation, and the real part of weak values was also established.
As we pointed out above, our analysis in Section~\ref{s:Q_cond_exp_def} shows that, in fact, the  conditional expectation $\mathbb E_{\hat\rho}^{\ell/\mathrm r}(\hat X|\hat Y)$ itself is a best estimator of $\hat X$ (self-adjoint or not), when the minimization is performed over all complex valued operator functions of $\hat Y$, and not only over the self-adjoint ones.

As in the proof of Proposition \ref{prop:variances_condexp1}, one finds that for any $\hat X\in \mathcal{L}(\mathcal{H})$
\begin{equation}
    \langle \hat X-\mathbb{E}_{\rho}^{\ell/\rm r,\rm sa}(\hat X|\hat Y),\hat X-\mathbb{E}_{\rho}^{\ell/\rm r,\rm sa}(\hat X|\hat Y)\rangle _{\ell/\rm r}=\mathbb{E}_{\hat \rho}(|\hat X|^2)-\mathbb{E}_{\hat{\rho}}\left(\mathbb{E}_{\hat \rho}^{\ell/\rm r,\rm sa}(\hat X|\hat Y)^2\right).
\end{equation}

If furthermore $\hat X$ is self-adjoint, we have $\mathbb{E}_{\hat \rho}(\mathbb{E}_{\hat \rho}^{\rm sa}(\hat X|\hat Y))=\mathrm{Re}\left(\mathbb{E}_{\hat{\rho}}(\hat X)\right)=\mathbb{E}_{\hat{\rho}}(\hat X)$. We deduce
\begin{equation}
    \label{eq: error variance self-adjoint}
    \Delta_{\hat \rho}^2\hat X = \Delta_{\hat \rho}^2\mathbb{E}_{\hat{\rho}}^{\rm sa}(\hat X|\hat Y) + \langle \hat X-\mathbb{E}_{\hat{\rho}}^{\rm sa}(\hat X|\hat Y),\hat X-\mathbb{E}_{\hat{\rho}}^{\rm sa}(\hat X|\hat Y)\rangle_{\ell/\rm r}.
\end{equation}

Comparing this with Eq~\eqref{eq:Variance}, we have that the error made when approximating the variance of $\hat X$ by the variance of $\mathbb{E}^{\rm sa}_{\hat \rho}(\hat X|\hat Y)$ is greater than the error  made when approximating the variance of $\hat X$ by the variance of $\mathbb{E}_{\hat{\rho}}^{\ell/\rm r}(\hat X|\hat Y)$. This is a consequence of the fact that  the mean squared error functional is minimized over a larger class of functions in the latter case than in the former. More precisely, from Eq~\eqref{eq: variance real and imaginary parts}, we have
\begin{equation*}
    \langle \hat X-\mathbb{E}_{\hat{\rho}}^{\rm sa}(\hat X|\hat Y),\hat X-\mathbb{E}_{\hat{\rho}}^{\rm sa}(\hat X|\hat Y)\rangle_{\ell/\rm r} = \Delta_{\hat \rho}^{2}\left(\mathrm{Im}\mathbb E_{\hat \rho}^{\ell/\mathrm{r}}(\hat X|\hat Y)\right) + \langle (\hat X-\mathbb E_{\hat \rho}^{\ell/\mathrm{r}}(\hat X|\hat Y)),(\hat X-\mathbb E_{\hat \rho}^{\ell/\mathrm{r}}(\hat X|\hat Y))\rangle_{\ell/\mathrm{r}}.
\end{equation*}

\section{Quantum conditional expectation via quasiprobability representation of quantum mechanics}\label{s:QCEQP_new}
As shown in the previous section, defining the quantum conditional expectation via a minimization problem as in Eq.\eqref{eq:Q_min_prob_left} or Eq.\eqref{eq:Q_min_prob_right} is a way of mimicking what happens classically, see Theorem~\ref{thm:classical_minimization}. It also allowed us to  compare the classical and quantum notions and to emphasize their differences.   One can alternatively try to imitate the classical conditional expectation by using Definition~\ref{def:condexpclass1}. This definition uses the joint probability distribution of the two random variables under consideration, which is naturally  defined in classical probability theory. In quantum mechanics,  as recalled in the introduction, a notion of joint probability distribution does not exist for non-commuting observables, but one can try to use a quasiprobability distribution instead. 

To that end, we first describe the class of quasiprobability representations of quantum mechanics that we consider. We will use the formalism of frames, as detailed in~\cite{ferrieetal2010,Fe11},
to do so.  Let $\mathcal H$ be a Hilbert space of dimension $d$, let $\Lambda$ be a finite set, with $\left|\Lambda\right|=d^2$, where $\left|A\right|$ is the cardinal of the finite set $A$. Let $(S_\lambda)_{\lambda\in\Lambda}$ be a basis of $\mathcal L(\mathcal H)$, satisfying
\begin{equation}
    \sum_\lambda S_\lambda =\id{d}.
\end{equation} 
Let $(T_\lambda)_{\lambda\in \Lambda}$ denote its unique dual basis, which satisfies:
\begin{equation}
\label{eq: dual frame}
    \Tr(T_\lambda^\dagger S_{\lambda'})=\delta_{\lambda, \lambda'}.
\end{equation} 
Given an operator $\hat C$ on $\mathcal H$, we define 
\begin{equation}\label{eq:defQPRFrame}
    Q_\lambda(\hat C)=\Tr(\hat C \hat S^\dagger_\lambda),\quad \tilde Q_\lambda(\hat C)=\Tr(\hat C T^\dagger_\lambda).
\end{equation}
It follows that the maps $Q, \tilde Q:\mathcal L(\mathcal H)\to \mathbb C^\Lambda$ are bijective and that 
\begin{equation}\label{eq:hatCdecompS}
    \hat C =\sum_\lambda \tilde Q_\lambda(\hat C) S_\lambda,\quad \hat C=\sum_\lambda Q_\lambda(\hat C) T_\lambda. 
\end{equation}
Given $\hat C, \hat D\in\mathcal L(\mathcal H)$, one then has 
\begin{equation}
    \Tr\hat C=\sum_\lambda Q_\lambda(\hat C),\quad \Tr(\hat C^\dagger \hat D)=\sum_\lambda \overline{\tilde Q_\lambda(\hat C)} Q_\lambda(\hat D).
\end{equation}
The pair $(Q,\tilde Q)$, which is completely determined by the  choice of $\Lambda$ and of the basis $(S_\lambda)_\lambda$,  is referred to as a quasiprobability representation of quantum mechanics. It associates to each density matrix $\hat \rho$ a quasiprobability $Q_\lambda(\hat \rho)$ on $\Lambda$, which is a complex-valued function that satisfies 
\begin{equation}
    1=\sum_\lambda Q_\lambda(\hat \rho),
\end{equation}
and to each observable $\hat C=\hat C^\dagger$ its symbol $\tilde Q_\lambda(\hat C)$, with
\begin{equation}\label{eq:overlap}
    \Tr(\hat \rho \hat C)=\sum_\lambda Q_\lambda(\hat\rho)\overline{\tilde Q_\lambda(\hat C).}
\end{equation}

Depending on context, one thinks of $\Lambda$ as an ontic space or as a classical phase space, on which the quantum states and observables are represented by quasiprobabilities and functions respectively. Many quasiprobability representations of quantum mechanics fitting in the above framework have been introduced and studied. A number of those~\cite{wootters1987, cohendetetal1988,leonhardt1995,leonhardt1996, bouzdb96, wootters2004,gross2006} aim at defining quasiprobability representations for quantum systems on finite dimensional Hilbert space reproducing many of the known properties of the Wigner-Weyl-Moyal representation \cite{wigner1932, moyal1949, cagl69a, cagl69b} associated with conjugate variables and the Fourier transform on $L^2(\mathbb R^2)$. Extensions of such constructions to locally compact Abelian groups have been  considered  more recently in~\cite{benyetal2025,NicolaRiccardi2025}. 
A more versatile family of quasiprobability representation is provided by the Kirkwood-Dirac representations: they are defined using two arbitrary  observables $\hat A$ and $\hat B$ and we shall describe them in detail in the next section.    Further examples of quasiprobability representations can be found in~\cite{Fe11}.

\begin{definition}\label{def:Borncompat} Let $\hat Y$ be a $\mathrm{CSCO}$ and let $(Q,\tilde Q)$ be a quasiprobability representation of quantum mechanics. Writing $\tilde{Q}_{\lambda } (\hat{Y } ) := (\tilde{Q }_{\lambda } (\hat{Y }_1 ) , \ldots, \tilde{Q }_{\lambda } (\hat{Y }_m ) ) $, we define, for all $y\in\C^m:$ 
$$
\Lambda_y^{\hat{Y}}=\{\lambda\in\Lambda\mid \tilde Q_\lambda(\hat Y)=y\}.
$$
We say the quasiprobability representation $(Q,\tilde Q)$ of quantum mechanics is $\hat Y$-compatible provided  
for all density matrices $\hat\rho$, one has:
\begin{eqnarray}\label{eq:marginal1}
    \sum_{\lambda\in\Lambda_y^{\hat{Y}}} Q_\lambda(\hat \rho)&=&\Tr(\hat \rho \hat{\Pi}_y^{
\hat Y}
    ),\quad \mathrm{when}\quad y\in \sigma(\hat Y),\\
    &=&0,\quad \mathrm{when}\quad y\not\in\sigma(\hat Y).\label{eq:marginal2}
\end{eqnarray}
\end{definition}   
 
If $Q_\lambda(\hat\rho)$ is a probability on $\Lambda$ (meaning it is non-negative for all $\lambda$), then what this condition means is that the joint probability law of the vector-valued random variable   $\tilde Q(\hat Y)=(\tilde Q(\hat Y_1),\dots, \tilde Q(\hat Y_m))\in (\mathbb{C }^m )^{\Lambda } $ induced by the probability $Q_\lambda(\hat\rho)$ on $\Lambda$ is identical to the quantum mechanical Born-probability law of $\hat Y$ when the system is in the state $\hat \rho$. 
More generally, to be $\hat Y$-compatible, a quasiprobability representation must reproduce the correct Born-probabilities for the observable $\hat Y$ and for all states $\hat\rho$, even those for which the quasiprobability $Q_\lambda(\hat\rho)$ takes on non-positive values and is therefore only a quasiprobability distribution.   
  
The following lemma expresses the fact that, if $(Q,\tilde Q)$ is $\hat Y$-compatible, then it agrees naturally with the functional calculus of $\hat Y$ in the sense that the symbol of $f(\hat Y)$ equals $\lambda\mapsto f(\tilde Q_\lambda(\hat Y))$.    
   
\begin{lemma}\label{lem:compatible} If $(Q,\tilde Q)$ is a $\hat Y$-compatible quasiprobability representation, then the family $(\Lambda_y^{\hat Y})_{y\in\sigma(\hat Y)}$ is a partition of $\Lambda$ and   
    \begin{equation}
    \forall f(\hat Y)\in \CompFuncObs{\hat Y},\ \forall \lambda\in \Lambda,\, \tilde Q_\lambda(f(\hat Y))=f(\tilde Q_\lambda(\hat Y)). 
\end{equation}
\end{lemma}
The lemma states that the symbol of a function $f$ of $\hat Y$ equals the same function $f$ of the symbol of $\hat Y$. Note that, generally, if $\hat X$ and $\hat Z$ are operators on $\mathcal H$, then it is not true that 
$\tilde Q(\hat X\hat Z)=\tilde Q(\hat X)\tilde Q(\hat Z)$. When both $\hat X$ and $\hat Z$ are functions of $\hat Y$, this is true, however, by the above lemma, for $\hat{Y} $-compatible quasiprobability representations.

\begin{proof}
From Eq.~\eqref{eq:marginal1}, one finds that for all $\hat\rho$ and for all $y\in\sigma(\hat Y)$
\begin{equation} \label{eq:proof_lemma_compatible}
\Tr\left(\hat\rho\sum_{\lambda\in\Lambda_y^{\hat{Y}}} S_\lambda^\dagger\right)=\Tr\left(\hat \rho \hat{\Pi}_y^{
\hat Y}\right).
\end{equation}
Therefore,
\begin{equation}
     \hat{\Pi}_y^{\hat Y}=\sum_{\lambda\in\Lambda_y^{\hat{Y}}} S_\lambda^\dagger=\sum_{\lambda\in\Lambda_y^{\hat{Y}}} S_\lambda.
\end{equation}
Eq.~\eqref{eq:hatCdecompS} then implies that 
\begin{equation}
    \tilde Q(\hat \Pi_y^{\hat Y})=\mathds{1}_{\Lambda_y^{\hat Y}},
\end{equation}
where $\mathds{1} _E $ denotes the indicator function of a set $E $, and therefore
\begin{equation}
1=\tilde Q(\id{d})=\sum_{y\in\sigma(\hat Y)} \tilde Q(\hat \Pi_y^{\hat Y})=\sum_{y\in\sigma(\hat Y)} \mathds{1}_{\Lambda_y^{\hat Y}}.
\end{equation}
Moreover, it is clear that $\Lambda_y^{\hat{Y}}\cap \Lambda_{y'}^{\hat{Y}}=\emptyset$ if $y\neq y'$. Since none of the $\Lambda ^{\hat{Y } } _y $ are empty, by Eq.~\eqref{eq:proof_lemma_compatible},
this implies that $(\Lambda_y^{\hat Y})_{y\in\sigma(\hat Y)}$ is a partition of $\Lambda$.
Consequently, one has that 
\begin{equation} \label{eq:proof_lemma_compatible2}   
    \tilde Q(\hat Y)=\sum_{y\in\sigma(\hat Y)} y \mathds{1}_{\Lambda_y^{\hat Y}} .   
\end{equation}   
Finally,   
\begin{equation}
\tilde Q(f(\hat Y))=\sum_{y\in\sigma(\hat Y)} f(y)\tilde Q(\hat \Pi_y^{\hat Y})=\sum_{y\in\sigma(\hat Y)} f(y)\mathds{1}_{\Lambda_y^{\hat Y}}=f(\tilde Q(\hat Y)).
\end{equation}

\end{proof}

Suppose now  that we have  a $(Q,\tilde Q)$ that is $\hat Y$-compatible. 
Given $\hat\rho\in D_{\hat Y}$, we then define the quasiprobability of $\lambda$, given $y \in \sigma(\hat Y)$, as  follows:
\begin{alignat}{3}
    \label{eq:DefCondQPR} Q_{\lambda|y}(\hat \rho)&=&\frac{Q_\lambda(\hat\rho)}{Q(\Lambda_y^{\hat Y})}=\frac{Q_\lambda(\hat \rho)}{\langle\varphi_y|\hat\rho|\varphi_y\rangle} &\quad \lambda\in\Lambda_y^{\hat{Y}},\\
    &=& 0 \qquad\qquad\qquad\quad& \quad \lambda\not\in \Lambda_y^{\hat{Y}}.
\end{alignat}
This allows us to define a notion of conditional expectation naturally associated to the quasiprobability representation $(Q,\tilde Q)$, as follows. 
\begin{definition}
\label{def:Qconditionalexpectation_Sigma}
     For $\hat{Y } $-compatible $(Q, \tilde{Q } ) $ we define the $Q$-conditional expectation of $\hat{X}\in \mathcal{L}(\mathcal{H})$ knowing $\hat Y$ in the state $\hat{\rho}\in D_{\hat Y}$ by
    \begin{equation}
        \QPEcond{Q}{\hat{X}}{\hat Y} = \sum_{y\in \sigma(\hat Y)}\sum_{\lambda\in \Lambda_y^{\hat{Y}}}\overline{\tilde{Q}_{\lambda}(\hat{X}^{\dagger})}Q_{\lambda|y}(\hat{\rho})\hat{\Pi}_y^{\hat Y}.
    \end{equation}
\end{definition}
Note that, whenever $\hat \rho$ is $Q$-positive, by which we mean that $Q_\lambda(\hat\rho)\geq 0$ for all $\lambda$, $Q_{\lambda|y}(\hat\rho)$ is equal to the conditional probability of $\lambda$, given that the random variable $\tilde Q(\hat Y)$ takes the value $y$: $\tilde Q(\hat Y)=y$. Nevertheless, even in that case, and even if $\hat X$ is self-adjoint, $\mathbb E_{\hat \rho}^Q(\hat X|\hat Y)$ is not necessarily self-adjoint. This will be the case, provided $\tilde Q_\lambda(\hat X)$ is real.

The following result is now immediate:
\begin{proposition}
\label{prop: iterated expectation_Y}
    The $Q$-conditional expectation has the following property: for any $\hat{\rho} \in D_{\hat Y}$ and any $\hat{X}\in \mathcal{L}(\mathcal{H})$,
    \begin{equation}
    \label{eq:Qiteratedexpectation_Sigma}
        \mathbb{E}_{\hat{\rho}}(\QPEcond{Q}{\hat{X}}{\hat Y})=\mathbb{E}_{\hat{\rho}}(\hat{X}).
    \end{equation}
    In particular, for any $\rho\in D_{\hat Y}$ and any $\lambda\in \Lambda$
\begin{equation}\label{eq:Qjoint_probab_from_cond_exp_Y}
        Q_{\lambda}(\hat{\rho})=\Tr(\hat{\rho} S^{\dagger}_{\lambda})=\mathbb{E}_{\hat{\rho}}(\QPEcond{Q}{S_{\lambda}^{\dagger}}{\hat{Y}}).
    \end{equation}
\end{proposition}

We can now formulate our first main result. It provides a characterization of all quasiprobability distributions that are compatible with the projective measurement of a given CSCO $\hat Y$ and the associated conditional expectation of which coincides with the one defined in terms of minimization, as in the previous section. Recall that $\lambda \mapsto \tilde Q_\lambda(\hat Y) $ sends $\Lambda $ onto the spectrum of $\hat{Y } $, by Eq.~\eqref{eq:proof_lemma_compatible2}. 
\begin{Theorem}\label{thm:main_hatY}
    Let $(Q,\tilde Q)$ be a $\hat Y$-compatible quasiprobability representation of quantum mechanics on $\mathcal H$. Then the following statements are equivalent:
    \begin{itemize}
        \item[(i)] $\forall \hat X\in\mathcal L(\mathcal H), \forall\hat\rho\in D_{\hat Y}, \mathbb E^Q_{\hat \rho}(\hat X|\hat Y)=\mathbb E_{\hat\rho}^\ell(\hat X|\hat Y)$;
        \item[(ii)]  $\forall \hat X \in\mathcal{L}(\mathcal{H}), \forall\hat\rho\in D_{\hat Y}, \ \forall f:\sigma(\hat Y)\to\mathbb C$, $\mathbb E_{\hat{\rho}}^Q(f(\hat Y) \hat X|\hat Y)=f(\hat Y)\mathbb E_{\hat{\rho}}^Q(\hat X|\hat Y)$.   
        \item[(iii)] $\forall\lambda \in\Lambda, \forall\hat\rho\in D_{\hat Y}, S_\lambda=S_\lambda \hat\Pi_{y}^{\hat Y}, \ T_\lambda=T_\lambda \hat\Pi_{y}^{\hat Y}$, with $y=\tilde Q_\lambda(\hat Y)\in\sigma(\hat Y)$.
    \end{itemize}   
\end{Theorem}
 Condition (iii) implies that both the frame operators $S_\lambda$ and their duals $T_\lambda$ are rank one operators, which is quite a restrictive condition. It is this property that will allow us to single out the KD distributions in Section~\ref{s:uniqueKD} below.
 
\begin{proof}
That (i) and (ii) are equivalent follows from Theorem~\ref{thm:UniqueEcond} and Eq.~\eqref{eq:Qiteratedexpectation_Sigma}. We now prove that (i) implies (iii). From (i), one finds that, for all $\hat X\in\mathcal L(\mathcal H)$, for all $y\in\sigma(\hat Y)$ and $\hat\rho\in D_{\hat Y}$,
\begin{equation}
    \sum_{\lambda\in\Lambda_y^{\hat{Y}}}\overline{\tilde Q_\lambda(\hat X^\dagger)}Q_{\lambda|y}(\hat{\rho})=\frac{\Tr(\hat \rho\hat\Pi_y^{\hat Y}\hat X)}{\Tr(\hat\rho \hat\Pi_y^{\hat Y})}.
\end{equation}
Inserting $\hat X=S_{\lambda'}^\dagger$ and using that $\tilde Q_{\lambda}(S_{\lambda'})=\delta_{\lambda,\lambda'}$, one finds that
\begin{equation}
\sum_{\lambda\in\Lambda_y^{\hat{Y}}}\delta_{\lambda\lambda'} Q_\lambda(\hat\rho)=\Tr(\hat\rho \hat\Pi_y^{\hat Y} S_{\lambda'}^\dagger).
\end{equation}
Since both sides are continuous in $\hat\rho$, this equality holds for all $\hat\rho$ by Lemma \ref{lemma: D_Y dense} and hence, for all $y\in\sigma(\hat Y), \lambda'\in\Lambda$,
\begin{equation}\label{eq:tool}
\sum_{\lambda\in\Lambda_y^{\hat{Y}}}\delta_{\lambda\lambda'}S_\lambda^\dagger=\hat\Pi_y^{\hat Y} S_{\lambda'}^\dagger.
\end{equation}
We know from Lemma~\ref{lem:compatible} that $\tilde Q_{\lambda'}(\hat Y)\in\sigma(\hat Y)$. Consequently,  Eq.~\eqref{eq:tool} implies that, if $\lambda'\not\in \Lambda_y^{\hat{Y}}$, then $S_{\lambda'}\hat\Pi_y^{\hat Y}=0$ and if $\lambda'\in\Lambda_y^{\hat Y}$, then $S_{\lambda'}=S_{\lambda'}\hat\Pi_y^{\hat Y}$. Since $\sum_{y\in\sigma(\hat Y)}\hat\Pi_y^{\hat Y}=\id{d}$, it follows that, for all $\lambda'\in\Lambda$
\begin{equation}
S_{\lambda'}=\sum_{y\in\sigma(\hat Y)}S_{\lambda'}\hat\Pi_y^{\hat Y}=S_{\lambda'}\hat\Pi_{\tilde Q_{\lambda'}(\hat Y)}^{\hat Y}+\sum_{y\in\sigma(\hat Y)\setminus\{\tilde Q_{\lambda'}(\hat Y)\}}S_{\lambda'}\hat\Pi_y^{\hat Y}=S_{\lambda'} \hat\Pi_{\tilde Q_{\lambda'}(\hat Y)}^{\hat Y}.
\end{equation}
Introducing, for $y\in\sigma(\hat Y)$,
\begin{equation}
    \mathcal L_y=\mathrm{span} \{S_\lambda\mid \lambda\in\Lambda_y^{\hat{Y}}\},
\end{equation}
it follows that these spaces are orthogonal with respect to the Hilbert Schmidt inner product, 
\begin{equation}
    \mathcal L_y\perp \mathcal L_{y'} \ \textrm{if}\ y\not=y'.
\end{equation}
Since the $\Lambda_y^{\hat Y}$ form a partition of $\Lambda$, one therefore concludes that
\begin{equation}
    \mathcal L(\mathcal H)=\bigoplus_{y\in\sigma(\hat Y)}^{\perp} \mathcal L_y, 
\end{equation}
where the sum is an orthogonal direct sum for the Hilbert-Schmidt inner product.  It follows from this that 
\begin{equation}
T_\lambda=T_\lambda\hat \Pi_{\tilde Q_\lambda(\hat Y)}^{\hat Y}.
\end{equation}
Indeed, $T_\lambda$ is by definition orthogonal to all $S_{\lambda'}$ with $\lambda'\not=\lambda$. It is therefore orthogonal to all $S_{\lambda'}$ with $\lambda'$ so that $\tilde Q_{\lambda'}(\hat Y)\not=\tilde Q_\lambda(\hat Y)$.
Hence $T_\lambda$ is orthogonal to each of the subspaces $\mathcal L_{y'}$ with $y'\not=\tilde Q_\lambda(\hat Y)$ and therefore
\begin{equation}
T_\lambda=\sum_{\lambda'\in\Lambda_{\tilde Q_\lambda(\hat Y)}}c_{\lambda'} S_{\lambda'}=\sum_{\lambda'\in\Lambda_{\tilde Q_\lambda(\hat Y)}}c_{\lambda'} S_{\lambda'}\hat \Pi_{\tilde Q_\lambda(\hat Y)}^{\hat Y}=T_\lambda\hat\Pi_{\tilde Q_\lambda(\hat Y)}^{\hat Y}.
\end{equation}
This implies (iii).

We finally show that (iii) implies (ii). For that purpose, we compute
\begin{eqnarray}
    \tilde Q_\lambda(\hat X^\dagger f(\hat Y)^\dagger)&=&\Tr(\hat X^\dagger f(\hat Y)^\dagger T_\lambda^\dagger)\nonumber\\
    &=&\Tr(\hat X^\dagger f(\hat Y)^\dagger \hat\Pi_{\tilde Q_\lambda(\hat Y)}^{\hat Y}T_\lambda^\dagger)\nonumber\\
&=&\overline{f(\tilde Q_\lambda(\hat Y))}\Tr(\hat X^\dagger T_\lambda^\dagger)\nonumber\\
&=&\overline{f(\tilde Q_\lambda(\hat Y))}\tilde Q_\lambda(\hat X^\dagger),
\end{eqnarray}
where we used for the third line that $f(\hat{Y } )^{\dagger } \hat\Pi_{\tilde Q_\lambda(\hat Y)}^{\hat Y} = \overline{f(Q_{\lambda }(\hat{Y } )) } \hat\Pi_{\tilde Q_\lambda(\hat Y)}^{\hat Y} $. One then computes:     
\begin{eqnarray*}   
\mathbb{E }_{\hat{\rho } } ^Q (f(\hat{Y } ) \hat{X } | \hat{Y } ) &=& \sum_{y\in \sigma(\hat Y)}\sum_{\lambda\in \Lambda_y^{\hat{Y}}} \left( f(\tilde{Q}_{\lambda } (\hat{Y } ) ) \overline{\tilde{Q}_{\lambda}(\hat{X}^{\dagger})}Q_{\lambda|y}(\hat{\rho})\right) \hat{\Pi}_y^{\hat {Y} } \\   
&=& \sum_{y\in \sigma(\hat Y)}f(y) \sum_{\lambda\in \Lambda_y^{\hat{Y}}} \left(  \overline{\tilde{Q}_{\lambda}(\hat{X}^{\dagger})}Q_{\lambda|y}(\hat{\rho})\right) \hat{\Pi}_y^{\hat{Y } } \\   
&=& f(\hat{Y } ) \mathbb{E }_{\hat{\rho } } ^Q ( \hat{X } | \hat{Y } ),   
\end{eqnarray*}
which establishes (ii).
\end{proof}

\section{Characterizing quasiprobability representations}

In Section \ref{s:QCEQP_new}, we have associated a notion of conditional expectation to every quasiprobability representation $(Q,\tilde{Q})$ that is Born-compatible with a CSCO $\hat Y$. In this section, we consider all quasiprobability representations that are compatible with two given complementary CSCO $\hat A$ and $\hat B$ (see Definition \ref{def:complementary} for the notion of complementarity of CSCO). 

In Section \ref{s:KDdef}, we recall the definition of the Kirkwood-Dirac representations Born-compatible with $\hat A$ and $\hat B$. In Section \ref{s:uniqueKD}, we show that those KD representations are the only quasiprobability representations for which the associated conditional expectation (given $\hat A$ or given $\hat B$) satisfies the ``pull-out formula'' (see Theorem \ref{thm:KD_unique_pullout}). This structural result provides a unique characterization of the KD quasiprobability representations that allows us to show 
Theorem~\ref{thm:KDunique1_bis}, which states that the (left/right) Kirkwood-Dirac representation is the only quasiprobability representation for which the associated conditional expectations (given $\hat A$ or given $\hat B$) agree with the (left/right) conditional expectations defined as best estimators in Section \ref{s:Q_cond_exp_min}.

In Section \ref{s:QPR give unique Econd}, we show that all $\hat A$ and $\hat B$ Born-compatible quasiprobability representations with ontic space $\Lambda=\sigma(\hat A)\times \sigma(\hat B)$ are completely determined by their associated conditional expectations (Theorem~\ref{thm:StrongerMain}). This result provides a strengthening of  Theorem~\ref{thm:KDunique1_bis} and allows for some further applications that we discuss. 

Sections \ref{s:uniqueKD} and \ref{s:QPR give unique Econd} provide  independent proofs of Theorem \ref{thm:KDunique1_bis} and can be read in any order.

\subsection{The Kirkwood-Dirac distribution}\label{s:KDdef}

The definition of a KD quasiprobability representation of quantum mechanics depends on the choice of two bases in the Hilbert space~\cite{Kirkwood33,Dirac45,arvidssonshukur2024properties,lostaglio2023kirkwood, debievreetal2025a, spriet2025}. Let $\hat A = (\hat A_1, \dots, \hat A_n)$ and $\hat B = (\hat B_1, \dots, \hat B_m)$ be two CSCO.   
We write  $(\varphi_a^{\hat A})_{a\in\sigma(\hat A)}, (\varphi_b^{\hat B})_{b\in \sigma(\hat B)}$ for the corresponding eigenbases:
\begin{equation}
    \hat A_i\varphi^{\hat A}_a=a_i\varphi^{\hat A}_a,\quad \hat B_j\varphi^{\hat B}_b= b_j\varphi^{\hat B}_b. 
\end{equation}
 We denote by $d^{\hat A}_{i}$ and $d^{\hat B}_{j}$, the number of eigenvalues of $\hat A_i$, and of $\hat B_j$, for each $i\in\IntEnt{1}{n}$, respectively $j\in\IntEnt{1}{m}$. For any $p=(p_1, \dots, p_n)\in\mathbb{N}^{n}$, we write
\begin{equation}
    \hat A^{p} := \hat A_1^{p_1}\dots \hat A_n^{p_n},
\end{equation}
and for any $a= (a_1,\dots, a_n)\in\sigma(\hat A)$, we write
\begin{equation}
     a^{p} := a_1^{p_1}\dots a_n^{p_n}.
\end{equation}
We define $\Gamma^{\hat A}$ as
\begin{equation}
    \Gamma^{\hat A} = \IntEnt{0}{d_1^{\hat A}-1} \times\dots\times\IntEnt{0}{d_{n}^{\hat A}-1},
\end{equation}
and $\Gamma^{\hat B}$ is obtained analogously.

We consider, as in Eq.~\eqref{eq:mathcalY}, the maximal commutative algebras 
\begin{equation}
\CompFuncObs{\hat{A}}=\{f(\hat A)|f:\sigma(\hat A)\to \mathbb C\}, \quad \CompFuncObs{\hat{B}}=\{f(\hat B)|f:\sigma(\hat B)\to \mathbb C\}.
\end{equation}
We will use several times the fact that, as we are working on a finite dimensional Hilbert space, any $f(\hat A)\in \CompFuncObs{\hat{A}}$ can be written as a polynomial $f(\hat A)=\sum_{p\in \Gamma^{\hat A}}c_p\hat A^p$, where $(c_p)_{p\in \Gamma^{\hat A}}\subset \mathbb{C}$. This is a direct consequence of Lagrange interpolation. Of course, a similar statement holds for the elements of $\CompFuncObs{\hat{B}}$.

 We have the following lemma: 
\begin{lemma}\label{lem:complementary}
    Let $\hat A,\hat B$ be two $\mathrm{CSCO}$, as above. Consider the statements:
    \begin{itemize}
    \item[(i)] $\forall (a,b)\in \sigma(\hat A)\times \sigma (\hat B), \langle \varphi_a^{\hat A},\varphi_b^{\hat B}\rangle\neq0;$
        \item[(ii)] $\mathrm{span}_{\mathbb C}\CompFuncObs{\hat{A}}\CompFuncObs{\hat{B}}=\mathcal L(\mathcal H)=\mathrm{span}_{\mathbb C}\CompFuncObs{\hat{B}}\CompFuncObs{\hat{A}}$;
        \item[(iii)] $\CompFuncObs{\hat{A}}\cap \CompFuncObs{\hat{B}} =\mathbb C \id{d}.$
    \end{itemize}
Then $(i)\Leftrightarrow (ii) \implies (iii)$.  
\end{lemma}

\begin{proof}
That $(i)$ implies $(ii)$ follows from the fact that $\hat\Pi_{a}^{\hat A}\hat\Pi_{b}^{\hat B}$, with $(a,b)\in\sigma(\hat A)\times \sigma(\hat B)$ form a basis of $\mathcal L(\mathcal H)$ and from the observation that $\hat\Pi_{a}^{\hat A}\in\CompFuncObs{\hat{A}}$, $\hat\Pi_{b}^{\hat B}\in \CompFuncObs{\hat{B}}$. 

To see that $(ii)$ implies $(i)$, let us note that 
\begin{equation}
f(\hat A)g(\hat B)=\sum_{(a,b)\in\sigma(\hat A)\times\sigma(\hat B)}f(a)g(b)\hat\Pi_{a}^{\hat A}\hat\Pi_{b}^{\hat B}.
\end{equation}
Hence
\begin{equation}
    \mathrm{span}_{\mathbb C}\CompFuncObs{\hat{A}}\CompFuncObs{\hat{B}}=\mathrm{span}_{\mathbb C}\{\hat\Pi_{a}^{\hat A}\hat\Pi_{b}^{\hat B} \mid \langle \varphi_a^{\hat A},\varphi_b^{\hat B}\rangle\not=0\},
\end{equation}
which implies the result. 

We now show that $(i)$ implies $(iii)$. Let $f(\hat A)\in\CompFuncObs{\hat{A}}$. We need to show that, if $f(\hat A)\in\CompFuncObs{\hat{B}}$, then it is a a multiple of the identity. But since $\CompFuncObs{\hat{B}} = \left(\CompFuncObs{\hat{B}}\right)'$, this means we need to show that, if $f(\hat A)\in \left(\CompFuncObs{\hat{B}}\right)'$, then it is a multiple of the identity. But $f(\hat A)\in \left(\CompFuncObs{\hat{B}}\right)'$ is equivalent to $[f(\hat A), \hat \Pi_{b}^{\hat B}]=0$, for all $b\in\sigma(\hat B)$, which is equivalent to 
\begin{equation}
    \forall a,a'\in\sigma(\hat A), \forall b\in \sigma(\hat B), \ 0=\langle \varphi_a^{\hat A},[f(\hat A), \hat\Pi_{b}^{\hat B}]\varphi_{a'}^{\hat A}\rangle=(f(a)-f(a'))\langle\varphi_a^{\hat A},\varphi_b^{\hat B}\rangle \langle \varphi_b^{\hat B} ,\varphi_{a'}^{\hat A}\rangle  .
\end{equation}
This shows $f(\hat A)=c\id{d}$, for some $c\in\mathbb C$. 
\end{proof}
We point out that it is easy to construct two bases in dimension $3$ for which condition $(iii)$ of Lemma~\ref{lem:complementary} is fulfilled but condition $(i)$ fails. This shows that those conditions are not equivalent.

\begin{definition}
\label{def:complementary}
    We say two complete sets of commuting observables $\hat A,\hat B$ are \emph{complementary} if they satisfy any one of conditions (i) and (ii) in Lemma~\ref{lem:complementary}.
\end{definition}
The term ``complementary'' is used for a variety of different notions in the literature~\cite{kraus1987, johansen2007, debievre2021, DeB23}.
 In the context of finite dimensional Hilbert space, it is sometimes used as synonymous to ``mutually unbiased'', in other words to $|\langle a|b\rangle|=d^{-1/2}$, for all $a\in\sigma(\hat A), b\in\sigma(\hat B)$. Our definition here is therefore a  relaxation of the latter definition. We refer to~\cite{DeB23} for the link of this definition to various notions of incompatibility.

We now introduce the left and right Kirkwood-Dirac distributions associated with two CSCO $\hat{A}$ and $\hat{B}$, assumed to be complementary.
    Two natural choices of Born-compatible frames for $\hat{A}$ and $\hat{B}$ are
    \begin{equation*}
       S^{\ell}_{a,b}=\hat{\Pi}_a^{\hat{A}}\hat{\Pi}_b^{\hat{B}}  \text{  and  } S^{\mathrm{r}}_{a,b}=\hat{\Pi}_b^{\hat{B}}\hat{\Pi}_a^{\hat{A}}.
    \end{equation*}
    It is then easily shown that the dual frames are
    \begin{equation*}
        T^{\ell}_{a,b}=\frac{1}{|\langle \varphi_a^{\hat{A}},\varphi_b^{\hat{B}}\rangle|^2}\hat{\Pi}_a^{\hat{A}}\hat{\Pi}_b^{\hat{B}} \text{  and  } T^{\mathrm{r}}_{a,b}=\frac{1}{|\langle \varphi_a^{\hat{A}},\varphi_b^{\hat{B}}\rangle|^2}\hat{\Pi}_b^{\hat{B}}\hat{\Pi}_a^{\hat{A}}.
    \end{equation*}
    The associated quasiprobability representations are the left and right KD, representations:
    \begin{align*}
        &\KDDist{\ell}_{a,b}(\hat{\rho})=\langle \varphi_b^{\hat{B}},\varphi_a^{\hat{A}}\rangle\langle\varphi_a^{\hat{A}},\hat{\rho}\varphi_b^{\hat{B}}\rangle, \:\: \KDSymb{\ell}_{a,b}(\hat{X})=\frac{1}{\langle \varphi_a^{\hat{A}},\varphi_b^{\hat{B}}\rangle}\langle \varphi_a^{\hat{A}},\hat{X}\varphi_b^{\hat{B}}\rangle\\
        &\KDDist{\mathrm{r}}_{a,b}(\hat{\rho})=\langle \varphi_a^{\hat{A}},\varphi_b^{\hat{B}}\rangle\langle \varphi_b^{\hat{B}},\hat{\rho}\varphi_a^{\hat{A}}\rangle, \:\: \KDSymb{\mathrm{r}}_{a,b}(\hat{X})=\frac{1}{\langle \varphi_b^{\hat{B}},\varphi_a^{\hat{A}}\rangle}\langle \varphi_b^{\hat{B}},\hat{X}\varphi_a^{\hat{A}}\rangle.
    \end{align*}
     For later reference, we note that, for all $\hat X\in\mathcal L(\mathcal H)$, one has that 
    \begin{equation}\label{eq:KDleftright}
        \overline{\tilde Q_{a,b} ^{\mathrm{KD},\ell}(\hat X)}=\tilde Q_{a,b}^{\mathrm {KD,r}}(\hat X^\dagger)
    \end{equation}
    They are called the left and right Kirkwood-Dirac representations because of the following fact.
    For $f:\sigma(\hat{A})\to \mathbb{C}$ and $g:\sigma(\hat{B})\to \mathbb{C}$, one gets
    \begin{equation}
        \KDSymb{\ell}_{a,b}(f(\hat{A})g(\hat{B}))=f(a)g(b)
    \end{equation}
    so the symbol is well-behaved when $f(\hat{A})$ is ``to the left'', and
    \begin{equation}
        \KDSymb{\mathrm{r}}_{a,b}(g(\hat{B})f(\hat{A}))=g(b)f(a),
    \end{equation}
    the symbol is well-behaved when $f(\hat{A})$ is ``to the right''. 

    We now analyze the right and left KD-conditional expectations of arbitrary observables, given $\hat A$ or $\hat B$.
    \begin{lemma} \label{lem:ABsymmetric} Given two complementary $\mathrm{CSCO}$ $\hat A$ and $\hat B$, we have that, for every $\hat X\in \mathcal L(\mathcal H)$,
     \begin{equation}
    \label{eq: left KD gives left Econd}
        \QPEcond{\KDDist{\ell}}{\hat{X}}{\hat{B}}=\mathbb{E}_{\hat{\rho}}^{\ell}(\hat{X}|\hat{B}),\quad
        \QPEcond{\KDDist{\mathrm{r}}}{\hat{X}}{\hat{B}}=\mathbb{E}_{\hat{\rho}}^{\mathrm{r}}(\hat{X}|\hat{B}).
    \end{equation}
 and
    \begin{equation}
        \QPEcond{\KDDist{\mathrm{\ell}}}{\hat{X}}{\hat{A}}=\mathbb{E}_{\hat{\rho}}^{\mathrm{r}}(\hat{X}|\hat{A}),\quad 
        \QPEcond{\KDDist{\mathrm{r}}}{\hat{X}}{\hat{A}}=\mathbb{E}_{\hat{\rho}}^{\ell}(\hat{X}|\hat{A}).
    \end{equation}
    \end{lemma}
    \begin{proof}
    Following Definition~\ref{def:Qconditionalexpectation_Sigma}, one computes
, for all $\hat{\rho}\in D_{\hat{B}}$ and any $\hat{X}\in \mathcal{L}(\mathcal{H})$,
    \begin{align*}
        \QPEcond{\KDDist{\ell}}{\hat{X}}{\hat{B}}&=\sum_{(a,b)\in \sigma(\hat{A})\times \sigma(\hat{B})}\overline{\frac{1}{\langle\varphi_a^{\hat{A}}, \varphi_b^{\hat{B}} \rangle}\langle \varphi_a^{\hat{A}}, \hat{X}^{\dagger}\varphi_b^{\hat{B}}\rangle}\frac{\langle \varphi_b^{\hat{B}}, \varphi_a^{\hat{A}} \rangle \langle \varphi_a^{\hat{A}}, \hat{\rho}\varphi_b^{\hat{B}} \rangle}{\langle \varphi_b^{\hat{B}}, \hat{\rho}\varphi_b^{\hat{B}} \rangle}\hat{\Pi}_b^{\hat{B}}\\
        &=\sum_{(a,b)\in \sigma(\hat{A})\times \sigma(\hat{B})}\frac{\langle \varphi_b^{\hat{B}}, \hat{X}\varphi_a^{\hat{A}}\rangle \langle \varphi_a^{\hat{A}},\hat{\rho}\varphi_b^{\hat{B}}\rangle}{\langle \varphi_b^{\hat{B}}, \hat{\rho} \varphi_b^{\hat{B}} \rangle}\hat{\Pi}_b^{\hat{B}}\\
        &=\sum_{b\in \sigma(\hat{B})}\frac{\langle \varphi_b^{\hat{B}}, \hat{
        X
        }\hat{\rho}\varphi_b^{\hat{B}}\rangle}{\langle\varphi_b^{\hat{B}}, \hat{\rho}\varphi_b^{\hat{B}} \rangle}\hat{\Pi}_b^{\hat{B}}.
    \end{align*}
    One recognizes the left conditional expectation given in Definition~\ref{def: left/right Econd}, proving the first equality in Eq.~\eqref{eq: left KD gives left Econd}. The other three equalities follow from a similar computation. 
    \end{proof}
    
    In conclusion, both the left and right KD representations are not only $\hat A$ and $\hat B$ compatible, but in addition, the associated conditional expectations given either $\hat A$ or $\hat B$ both can be interpreted as best estimators with respect to either the right or left sesquilinear forms introduced in Section~\ref{s:Q_cond_exp_min}. In particular, their associated conditional expectation satisfy the pull-out formula, see the first point in Theorem \ref{thm:UniqueEcond}. 

    \subsection{Characterization of the Kirkwood-Dirac quasiprobability representations via the pull-out property}\label{s:uniqueKD}

     We shall prove  in this section that the KD representations are the only $\hat A$ and $\hat B$ Born-compatible quasiprobability representations that give rise to $Q$-conditional expectations satisfying the pull-out property: this is the content of Theorem~\ref{thm:KD_unique_pullout}.  
    
The main  technical result we need is contained in the following proposition, which is formulated for the left KD representation. An analogous statement holds for the right KD representation. 

\begin{proposition}\label{prop:KDisbest_new}
     Let $\mathcal H$ be a $d$ dimensional Hilbert space and let $(Q,\tilde Q)$ be a  quasiprobability representation of quantum mechanics on $\mathcal H$ over a space $\Lambda$ with $\left|\Lambda\right|=d^2$. Suppose there exists a $\mathrm{CSCO}$ $\hat B$ so that
     \begin{enumerate}
     \item[(i)] $(Q,\tilde Q)$ is $\hat B$-compatible;
     \item[(ii)] $\forall \hat X\in\mathcal L(\mathcal H),\ \forall f(\hat B)\in \mathcal F_{\mathbb C, \hat B},\ \forall\hat\rho\in D_{\hat B}, \mathbb E^Q_{\hat \rho}(f(\hat B)\hat X|\hat B)=f(\hat B)\mathbb E_{\hat\rho}^Q(\hat X|\hat B)$;
     \end{enumerate}
     Then the following holds. 
   If there exists a second  $\mathrm{CSCO}$ $\hat A$, complementary to $\hat B$, and for which $(Q,\tilde Q)$ is $\hat A$-compatible, then there exists a bijective map $\Phi :\Lambda\to \sigma(\hat A)\times \sigma(\hat B)$ such that for all $(a,b)\in\sigma(\hat A)\times \sigma(\hat B)$
\begin{equation*}
    \forall \hat \rho\in D(\mathcal{H}), Q_{\Phi^{-1}(a,b)}(\hat \rho)= Q^{\mathrm{KD},\ell}_ {a,b}(\hat \rho) \mathrm{ \ and \ } \forall \hat X\in \mathcal{L}(\mathcal{H}), \tilde{Q}_{\Phi^{-1}(a,b)}(\hat X)= \tilde{Q}^{\mathrm{KD},\ell}_ {a,b}(\hat X).
\end{equation*}
\end{proposition}

We illustrate this proposition in Appendix~\ref{app:Wigner_KD} where we compare the conditional expectations arising from the Wigner and the KD  quasiprobability representations for a qutrit system and show that they are different. We show in particular that the conditional expectation associated to the Gross-Wigner representation does not satisfy the pull-out formula.

The pull-out formula ($(ii)$ above) is an intrinsic condition on the $Q$-conditional expectation that can be weakened in the presence of two complementary CSCO, as shown in the following lemma:
\begin{lemma}
\label{lem: A^n pull-out}Suppose that $\hat A$ and $\hat B$ are complementary CSCO on a Hilbert space $\mathcal H$ and that $(Q,\tilde Q)$ is a $\hat B$-compatible quasiprobability representation of quantum mechanics on $\mathcal H$. Then the following statements are equivalent, for all $\hat\rho\in D_{\hat B}$ :
      \begin{itemize}
     \item[(i)] $\forall p\in \Gamma^{\hat A},\ \forall q\in \Gamma^{\hat B}, \mathbb E^Q_{\hat \rho}(\hat B^q\hat A^p|\hat B)=\hat B^q\mathbb E_{\hat\rho}^Q(\hat A^p|\hat B)$;
     \item[(ii)] $\forall \hat X\in\mathcal L(\mathcal H),\ \forall f(\hat B)\in \mathcal F_{\mathbb C, \hat B},\  \mathbb E^Q_{\hat \rho}(f(\hat B)\hat X|\hat B)=f(\hat B)\mathbb E_{\hat\rho}^Q(\hat X|\hat B)$;
      \end{itemize}
  \end{lemma}

\begin{proof}
    We only need to prove that (i) implies (ii). By Lemma \ref{lem:complementary}, one can assume that $\hat X=g(\hat B)h(\hat A)$. By Lagrange interpolation and by linearity one can therefore assume $\hat X=g(\hat B)\hat A^p$, for $p\in \Gamma^{\hat A}$. By Lagrange interpolation again, one has that $f(\hat B)g(\hat B)=\sum_{q\in \Gamma^{\hat B}}c_q\hat B^q$. One has
    \begin{align*}
        \mathbb{E}_{\hat{\rho}}^Q(f(\hat B)\hat X|\hat B)&=\mathbb{E}_{\hat{\rho}}^Q\left(\sum_{q\in \Gamma^{\hat B}}c_q\hat B^qA^p|\hat B\right)=\sum_{q\in \Gamma^{\hat B}}c_q\hat B^q\mathbb{E}_{\hat{\rho}}^Q(\hat A^p|\hat B)\\
        &=f(\hat B)g(\hat B)\mathbb{E}_{\rho}(\hat A^p|\hat B)=f(\hat B)\mathbb{E}_{\rho}^Q(g(\hat B)\hat A^p|\hat{B})\\
        &=f(\hat B)\mathbb{E}_{\hat \rho}^Q(\hat X|\hat B),
    \end{align*}
    where we used (i) twice and the fact that $g(\hat B)\in \CompFuncObs{\hat B}$ can also be expressed as a polynomial. This proves the result. 
\end{proof}

As a direct consequence of Proposition \ref{prop:KDisbest_new} and Lemma \ref{lem: A^n pull-out}, we get the following characterization of the KD distribution, in term of the pull-out formula with respect to powers of the CSCO $\hat A$ and $\hat B$:
\begin{Theorem}\label{thm:KD_unique_pullout}
    Let $\mathcal H$ be a $d$ dimensional Hilbert space and let $(Q,\tilde Q)$ be a  quasi-probability representation of quantum mechanics on $\mathcal H$ over a space $\Lambda$ with $\left|\Lambda\right|=d^2$. Suppose there exist $\hat{A}$ and $\hat B$ two complementary CSCO so that
     \begin{enumerate}
     \item[(i)] $(Q,\tilde Q)$ is $\hat A$ and $\hat B$-compatible;
     \item[(ii)] $\forall p\in \Gamma^{\hat A},\ \forall q\in \Gamma^{\hat{B}},\ \forall\hat\rho\in D_{\hat B}, \mathbb E^Q_{\hat \rho}(\hat B^q\hat A^p|\hat B)=\hat B^q\mathbb E_{\hat\rho}^Q(\hat A^p|\hat B)$;
     \end{enumerate}
     Then there exists a bijective map $\Phi :\Lambda\to \sigma(\hat A)\times \sigma(\hat B)$ such that for all $(a,b)\in\sigma(\hat A)\times \sigma(\hat B)$
     \begin{equation*}
    \forall \hat \rho\in D(\mathcal{H}), Q_{\Phi^{-1}(a,b)}(\hat \rho)= Q^{\mathrm{KD},\ell}_ {a,b}(\hat \rho) \mathrm{ \ and \ } \forall \hat X\in \mathcal{L}(\mathcal{H}), \tilde{Q}_{\Phi^{-1}(a,b)}(\hat X)= \tilde{Q}^{\mathrm{KD},\ell}_ {a,b}(\hat X).
\end{equation*}
\end{Theorem}

We shall now prove Proposition~\ref{prop:KDisbest_new}.

\noindent{{\textit{Proof of Proposition~\ref{prop:KDisbest_new}}}}
As in the proof of Theorem \ref{thm:main_hatY}, conditions (i) and (ii) of the proposition imply that, for all $\lambda\in\Lambda$ and for all $b\in\sigma(\hat B)$, $S_\lambda=S_\lambda\hat\Pi^{\hat B}_{B(\lambda)}$, where we wrote $B(\lambda)=\tilde Q_\lambda(\hat B)$.  Consequently, $S_\lambda \hat \Pi^{\hat B}_{b'}=S_\lambda\hat\Pi^{\hat B}_{B(\lambda)} \hat \Pi^{\hat B}_{b'}=0$, when $b'\not=B(\lambda)$. 
Since $(Q,\tilde Q)$ is both $\hat B$ and $\hat A$ compatible, and since $\hat A$ and $\hat B$ are complementary and complete, one concludes, via Theorem~\ref{thm:main_hatY}, for all $(a,b)\in\sigma(\hat A)\times \sigma(\hat B)$, that

\begin{equation}\label{eq:pibpia}
    \hat \Pi_b^{\hat B}\hat \Pi_a^{\hat A}=\sum_{\lambda\in\Lambda_b^{\hat{B}}}\sum_{\lambda'\in\Lambda_a^{\hat{A}}} S_\lambda S_{\lambda'}^\dagger=\sum_{\lambda\in\Lambda_b^{\hat{B}}}\sum_{\lambda'\in\Lambda_a^{\hat{A}}}  S_\lambda \hat\Pi^{\hat B}_b S_{\lambda'}^\dagger=\sum_{\lambda\in\Lambda_b^{\hat{B}}}\sum_{\lambda'\in\Lambda_a^{\hat{A}}\cap \Lambda_b^{\hat{B}}}  S_\lambda \hat\Pi^{\hat B}_b S_{\lambda'}^\dagger=\sum_{\lambda\in\Lambda_b^{\hat{B}}}\sum_{\lambda'\in\Lambda_a^{\hat{A}}\cap \Lambda_b^{\hat{B}}}  S_\lambda S_{\lambda'}^\dagger.
\end{equation}
Since we assume that $\hat A$ and $\hat B$ are complementary, the left hand side of this equality does not vanish for any $(a,b)\in \sigma(\hat A)\times \sigma(\hat B)$. Hence, there exists $\lambda'\in \Lambda_a^{\hat{A}}\cap\Lambda_b^{\hat{B}}$ so that $\Lambda ^{\hat{A } } _a \cap \Lambda ^{\hat{B } } _b \neq \emptyset$. Consequently, writing as above $A(\lambda)=\tilde Q_\lambda(\hat A)$, the map    
\begin{equation}
    \Phi :=(A,B): \lambda\in\Lambda \mapsto (A(\lambda), B(\lambda))\in\sigma(\hat A)\times \sigma(\hat B)
\end{equation}
is surjective and since $\left|\Lambda\right|=d^2=\left|\sigma(\hat A)\times\sigma(\hat B)\right|$, it is a bijection. It follows that 
\begin{equation}
    \Lambda_a^{\hat{A}}=\bigcup
    _{b\in\llbracket 1, d\rrbracket} \Lambda_a^{\hat{A}}\cap\Lambda_b^{\hat{B}},\quad \left|\Lambda_a^{\hat{A}}\cap\Lambda_b^{\hat{B}}\right|=1, \quad \left| \Lambda_a^{\hat A}\right|=d.
\end{equation}
We can therefore identify $\Lambda$ with $\sigma(\hat A)\times \sigma(\hat B)$, by considering $(A(\lambda), B(\lambda))$ as coordinates on $\Lambda$.
We can now conclude as follows. From Eq.~\eqref{eq:pibpia}, we find
$$
\hat \Pi ^{\hat B}_{b} \hat \Pi^{\hat A} _a = \hat \Pi^{\hat B}_{b}\left( \sum_{\Lambda ^{\hat A}_a \cap \Lambda ^{\hat B}_b } S_{\lambda '} ^{\dagger }  \right) = \hat \Pi ^{\hat B}_b S_{\lambda (a, b ) } ^{\dagger } ,   
$$   
for the unique $\lambda (a, b ) \in \Lambda ^{\hat A} _a \cap \Lambda ^{\hat B} _b $. Since $\lambda (a, b ) \in \Lambda ^{\hat B} _b $, Theorem~\ref{thm:main_hatY}~(iii) implies that $S_{\lambda (a, b ) } = S_{\lambda (a, b ) } \hat \Pi ^{\hat B} _b $. This in turn implies that $S_{\lambda } = \hat \Pi ^{\hat A} _{A(\lambda ) } \hat \Pi ^{\hat B} _{B(\lambda ) }$. 

This concludes the proof with $\Phi(\lambda) = (A(\lambda),B(\lambda))$.
\qed

This first characterization of the left KD distribution relies on an intrisic property of its quantum conditional expectation: among the $\hat A$ and $\hat B$ compatible quasiprobability representations of quantum mechanics, the only one for which the associated quantum conditional expectation satisfies a left pull-out property is the left Kirkwood-Dirac distribution.

We now recall Theorem~\ref{thm:KDunique1_bis}.
\maintheorem*

A first proof of this Theorem can be done as follows: Lemma~\ref{lem:ABsymmetric} proves that $(iii)$ implies $(i)$ and $(iii)$ implies $(ii)$, after direct computations. 

We prove that $(i)$ implies $(iii)$: from Theorem~\ref{thm:main_hatY}, we know that $(i)$ of Theorem~\ref{thm:KDunique1_bis} implies that:
\begin{equation*}
     \forall \hat X\in\mathcal L(\mathcal H),\ \forall f(\hat B)\in \mathcal F_{\mathbb C, \hat B},\  \mathbb E^Q_{\hat \rho}(f(\hat B)\hat X|\hat B)=f(\hat B)\mathbb E_{\hat\rho}^Q(\hat X|\hat B).
\end{equation*}
Thus, as $(Q,\tilde{Q})$ satisfies a left pull out property (given just above) and is $\hat A$ and $\hat B$ compatible, $(Q,\tilde{Q})$ satisfies the hypothesis of Proposition~\ref{prop:KDisbest_new}. This proves $(iii)$. The fact that $(ii)$ implies $(iii)$ also follows from the same reasoning.

  \subsection{Characterization of quasiprobability representations via their quantum conditional expectations}
  \label{s:QPR give unique Econd}

In this subsection, we show that $\hat A$ and $\hat B$ Born-compatible quasiprobability representations are uniquely determined by their associated conditional expectations. To do so, we combine the framework we developed in Section~\ref{s:QCEQP_new} and Sections~\ref{s:KDdef}-\ref{s:uniqueKD} with an idea first developed in~\cite{jordan2026}.

\begin{Theorem}
\label{thm:QPR give unique Econd}
    Let $\hat A$ and $\hat B$ be two complementary CSCO. Let $(R,\tilde{R})$ be a $\hat A$ and $\hat B$-compatible quasiprobability representation of quantum mechanics on $\sigma(\hat A)\times \sigma(\hat B)$. Let $(Q,\tilde{Q})$ be a $\hat A$ and $\hat B$-compatible quasiprobability representation of quantum mechanics on $\Lambda$, with $\left|\Lambda\right|=d^2$. Then, the following statements are equivalent :
    \begin{itemize}
        \item[(i)] for all $\hat \rho\in D_{\hat B}$, for all $p\in\Gamma^{\hat A}$, 
        \begin{equation*}
        \mathbb E_{\hat\rho}^Q(\hat A^p|\hat B)=\mathbb E_{\hat\rho}^{R}(\hat A^p|\hat B);
        \end{equation*}
        \item[(ii)] for all $\hat \rho\in D_{\hat B}$, for all $f(\hat A)\in\CompFuncObs{\hat A}$, $\mathbb E_{\hat\rho}^Q(f(\hat A)|\hat B)=\mathbb E_{\hat\rho}^{R}(f(\hat A)|\hat B)$;
        \item[(iii)] for all $\hat \rho\in D_{\hat B}$, for all $(p,q)\in\Gamma^{\hat A}\times \Gamma^{\hat B}$, 
        \begin{equation*}
            \mathbb E_{\hat\rho}^Q(\hat A^p\hat B^q|\hat B)=\mathbb E_{\hat\rho}^{R}(\hat A^p\hat B^q|\hat B).
        \end{equation*}   
        \item[(iv)] for all $\hat \rho\in D_{\hat B}$, for all $\hat X \in\mathcal{L}(\mathcal H)$, 
        \begin{equation*}
            \mathbb E_{\hat\rho}^Q(\hat X|\hat B)=\mathbb E_{\hat\rho}^{R}(\hat X|\hat B).
        \end{equation*}
        \item[(v)] there exists a  bijective map $\Phi : \Lambda \to \sigma(\hat A)\times\sigma(\hat B)$ such that:
        \begin{equation*}
            \forall (a,b)\in \sigma(\hat A)\times\sigma(\hat B), R_{a,b} = Q_{\Phi^{-1}(a,b)} ;
        \end{equation*}
    \end{itemize}
\end{Theorem}

\begin{proof}
Let us prove first that $(v)\Rightarrow (iv)$. Let $S^R$ and $S^Q$ be the respective frames associated to $(R,\tilde{R})$ and $(Q,\tilde{Q})$. The fact that $R_{a,b}=Q_{\Phi^{-1}(a,b)}$ for all $(a,b)\in \sigma(\hat A)\times \sigma(\hat B)$ implies that $S^R_{a,b}=S^Q_{\Phi^{-1}(a,b)}$ for all $(a,b)\in \sigma(\hat A)\times \sigma(\hat B)$. Moreover from Eq.~\eqref{eq: dual frame}, the symbols also satisfy $\tilde{R}_{a,b}=\tilde{Q}_{\Phi^{-1}(a,b)}$ for all $(a,b)\in \sigma(\hat A)\times \sigma(\hat B)$. Recall from Lemma \ref{lem:compatible} that the fact that $(Q,\tilde{Q})$ is $\hat B$-Born compatible implies that $(\Lambda_{b}^{\hat B})_{b\in \sigma(\hat B)}$ is a partition of $\Lambda$. In particular, we can define the projection $\pi_0:\Lambda\to \sigma(\hat B)$ which verifies that for all $\lambda\in \Lambda$, $\lambda\in \Lambda_{\pi_0(\lambda)}^{\hat B}$. For $b\in \sigma(\hat B)$, we have, from $\hat B$-Born compatibility of $(R,\tilde{R})$,
\begin{align*}
    \hat \Pi_b^{\hat B}&=\sum_{a\in \sigma(\hat A)}S^R_{a,b}=\sum_{a\in \sigma(\hat A)}S^Q_{\Phi^{-1}(a,b)}=\sum_{\lambda\in \Phi^{-1}(\sigma(\hat A)\times \{b\})}S^Q_{\lambda}.
\end{align*}
Moreover, from $\hat B$-compatibility of $(Q,\tilde{Q})$,
\begin{equation*}
    \hat \Pi_b^{\hat B}=\sum_{\lambda\in \Lambda_b^{\hat B}}S^Q_{\lambda}.
\end{equation*}
Taking the Hilbert Schmidt inner product on these two expressions with the dual frame $T^Q$ yields that $\Phi$ is, for any $b\in \sigma(\hat B)$, a bijection between $\Lambda_b^{\hat B}$ and $\sigma(\hat A)\times \{b\}$. In particular, we have that $\pi_0(\Phi^{-1}(a,b))=b$ for all $(a,b)\in \sigma(\hat A)\times \sigma(\hat B)$.
Let $\hat X\in \mathcal{L}(\mathcal{H})$ and $\hat \rho \in D_{\hat B}$. One has
\begin{align*}
    \mathbb{E}^R_{\hat \rho}(\hat X|\hat B)&=\sum_{(a,b)\in \sigma(\hat A)\times \sigma(\hat B)}\overline{\tilde{R}_{a,b}(\hat X^{\dagger})}\frac{R_{a,b}(\hat \rho)}{\langle \varphi_b^{\hat B},\hat \rho\varphi_b^{\hat B}\rangle}\hat \Pi_{b}^{\hat B}\\
    &=\sum_{(a,b)\in \sigma(\hat A)\times \sigma(\hat B)}\overline{\tilde{Q}_{\Phi^{-1}(a,b)}(\hat X^{\dagger})}\frac{Q_{\Phi^{-1}(a,b)}(\hat \rho)}{\langle \varphi_{\pi_0(\Phi^{-1}(a,b))}^{\hat B},\hat \rho\varphi_{\pi_0(\Phi^{-1}(a,b))}^{\hat B}\rangle}\hat \Pi_{\pi_0(\Phi^{-1}(a,b))}^{\hat B}\\
    &=\sum_{\lambda\in \Lambda}\overline{\tilde{Q}_{\lambda}(\hat X^{\dagger})}\frac{Q_{\lambda}(\hat \rho)}{\langle \varphi_{\pi_0(\lambda)}^{\hat B},\hat \rho \varphi_{\pi_0(\lambda)}^{\hat B}\rangle}\hat \Pi_{\pi_0(\lambda)}^{\hat B}\\
    &=\sum_{b\in \sigma(\hat B)}\sum_{\lambda\in \Lambda_b^{\hat B}}\overline{\tilde{Q}_{\lambda}(\hat X^{\dagger})}\frac{Q_{\lambda}(\hat \rho)}{\langle \varphi_b^{\hat B},\hat \rho \varphi_b^{\hat B}\rangle}\hat \Pi_b^{\hat B}\\
    &=\mathbb{E}^Q_{\hat \rho}(\hat X|\hat B),
\end{align*}
proving $(iv)$.

It is clear that $(iv)\Rightarrow (iii)\Rightarrow (i)$. Moreover, as already mentioned, any $f(\hat A)\in\CompFuncObs{\hat A}$ can be written as a polynomial $P_{f}(\hat A)$ and thus, $(i)\Leftrightarrow (ii)$ by linearity. We now prove that $(i)\Rightarrow (v)$ to conclude the proof. We compute, for all  $\hat \rho\in D_{\hat B}$, for all $p\in\Gamma^{\hat A}$:
\begin{eqnarray*}
    \mathbb E_{\hat\rho}^Q(\hat A^p|\hat B) &=& \sum_{b\in\sigma(\hat B)} \sum_{\lambda\in\Lambda_{b}^{\hat B}} \overline{\tilde{Q}_{\lambda}((\hat A^p)^{\dagger})}Q_{\lambda | b}(\hat \rho)\hat \Pi_{b}^{\hat B} = \sum_{b\in\sigma(\hat B)} \sum_{\lambda\in\Lambda_{b}^{\hat B}} \overline{\tilde{Q}_{\lambda}(\hat A^p)}Q_{\lambda | b}(\hat \rho)\hat \Pi_{b}^{\hat B} \\
    &=& \sum_{b\in\sigma(\hat B)} \sum_{\lambda\in\Lambda_{b}^{\hat B}} \overline{\left(\tilde{Q}_{\lambda}(\hat A)\right)^p}Q_{\lambda | b}(\hat \rho)\hat \Pi_{b}^{\hat B} = \sum_{b\in\sigma(\hat B)} \sum_{a\in\sigma(\hat A)} \sum_{\lambda\in\Lambda_{a}^{\hat A}\cap\Lambda_{b}^{\hat B}} \overline{\left(\tilde{Q}_{\lambda}(\hat A)\right)^p}Q_{\lambda | b}(\hat \rho)\hat \Pi_{b}^{\hat B} \\
    &=& \sum_{b\in\sigma(\hat B)} \sum_{a\in\sigma(\hat A)} \sum_{\lambda\in\Lambda_{a}^{\hat A}\cap\Lambda_{b}^{\hat B}} \overline{a^p}Q_{\lambda | b}(\hat \rho)\hat \Pi_{b}^{\hat B} = \sum_{b\in\sigma(\hat B)} \sum_{a\in\sigma(\hat A)} a^{p} \sum_{\lambda\in\Lambda_{a}^{\hat A}\cap\Lambda_{b}^{\hat B}} Q_{\lambda | b}(\hat \rho)\hat \Pi_{b}^{\hat B}.
\end{eqnarray*}
where $\tilde{Q}_{\lambda}(\hat A^p) = \tilde{Q}_{\lambda}(\hat A)^p$ follows directly from Lemma~\ref{lem:compatible}. We also have that :
\begin{eqnarray*}
    \mathbb E_{\hat\rho}^R(\hat A^p|\hat B) &=& \sum_{b\in\sigma(\hat B)} \sum_{a\in\sigma(\hat A)} a^{p} R_{(a,b)| b}(\hat \rho)\hat \Pi_{b}^{\hat B}.
\end{eqnarray*}

Thus, for all polynomial $P$ with for all $i\in\IntEnt{1}{n} \deg_{i}(P) \leqslant d_{i}^{\hat A}-1$ and all $\hat\rho\in D_{\hat B}$, we have that :

\begin{equation*}
    \forall b\in\sigma(\hat B), \  \sum_{a\in\sigma(\hat A)} P(a) \sum_{\lambda\in\Lambda_{a}^{\hat A}\cap\Lambda_{b}^{\hat B}} Q_{\lambda | b}(\hat \rho) = \sum_{a\in\sigma(\hat A)} P(a) R_{(a,b)| b}(\hat \rho).
\end{equation*}
We fix $a'\in\sigma(\hat A)$. By Lagrange interpolation, there exists $P_{a'}$ such that $P_{a'}(a) = \delta_{a,a'}$ for all $a\in\sigma(\hat A)$ and  $\deg_{i}(P) \leqslant d_{i}^{\hat A}-1$ for all $i\in\IntEnt{1}{n}$. Thus, we have that, for all $\hat \rho\in D_{\hat B}$:
\begin{equation*}
    \forall b\in\sigma(\hat B), \sum_{\lambda\in\Lambda_{a'}^{\hat A}\cap\Lambda_{b}^{\hat B}} Q_{\lambda | b}(\hat \rho)  = R_{(a',b)| b}(\hat \rho).
\end{equation*}
Using Eq.~\eqref{eq:defQPRFrame} and Eq.~\eqref{eq:DefCondQPR} multiplied by $\Tr(\hat\rho \hat \Pi_b^{\hat B})$,
we obtain that, for all $(a,b)\in\sigma(\hat A)\times\sigma(\hat B)$ and for all $\hat \rho\in D_{\hat B}$
\begin{equation*}
    \mathrm{Tr}\left(\hat\rho \left(\sum_{\lambda\in\Lambda_{a}^{\hat A}\cap\Lambda_{b}^{\hat B}} S^{Q}_{\lambda}- S^{R}_{a,b}\right)^{\dagger} \right) = 0.
\end{equation*}
Consequently, we conclude that, for all $(a,b)\in\sigma(\hat A)\times\sigma(\hat B)$:
\begin{equation}\label{eq:Equality_frames}
    \sum_{\lambda\in\Lambda_{a}^{\hat A}\cap\Lambda_{b}^{\hat B}} S^{Q}_{\lambda}= S^{R}_{a,b}
\end{equation}
As $S^{R}_{a,b}\neq 0$, we have that for all $(a,b)\in\sigma(\hat A)\times\sigma(\hat B)$, $\Lambda_{a}^{\hat A}\cap\Lambda_{b}^{\hat B}\neq \emptyset$. Consequently, the map :
\begin{equation*}
    \Phi : \left\{\begin{array}{rcl}
         \Lambda& \to &\sigma(\hat A)\times\sigma(\hat B) \\
         \lambda& \mapsto & (Q_{\lambda}(\hat{A}),Q_{\lambda}(\hat{B}))
    \end{array}\right.
\end{equation*}
is surjective and as $\left|\Lambda\right| = \left|\sigma(\hat A)\times\sigma(\hat B)\right| = d^2$, $\Phi$ is a bijection. In particular, $\left|\Lambda_{a}^{\hat A}\cap\Lambda_{b}^{\hat B}\right|=1$ and this element is $\Phi^{-1}(a,b)$ for all $(a,b)\in \sigma(\hat A)\times\sigma(\hat B)$. We finally obtain, by Eq.~\eqref{eq:Equality_frames}, that for all $(a,b)\in\sigma(\hat A)\times\sigma(\hat B)$:
\begin{equation}
    S^{Q}_{\Phi^{-1}(a,b)} = S^{R}_{a,b}
\end{equation}
proving $(v)$. 
\end{proof}

This theorem shows that a quasiprobability representation of quantum mechanics on a Hilbert space $\mathcal H$ that is Born-compatible for two complementary CSCO  $\hat A$ and $\hat B$, is completely determined by its associated quantum conditional expectations of $\hat A ^p$ given $\hat B$ with $p\in\Gamma^{\hat A}$.  Combining Theorem \ref{thm:QPR give unique Econd} with Lemma~\ref{lem:ABsymmetric}, and setting  $(R,
\tilde R)=(Q^{\mathrm{KD},\ell}, \tilde Q^{\mathrm{KD},\ell})$, we immediately obtain the following theorem:

\begin{Theorem}\label{thm:StrongerMain}
Let $\hat A$ and $\hat B$ be complementary $\textrm{CSCO}$. Let $(Q,\tilde Q)$ be an $\hat A$ and $\hat B$-compatible quasiprobability representation of quantum mechanics defined on a set $\Lambda$ ($\left|\Lambda\right|=d^2)$. Then the following are equivalent:
\begin{itemize}
\item[(i)] $\forall \hat \rho\in D_{\hat B},\forall p\in\Gamma^{\hat A}, \quad \mathbb E^Q_{\hat \rho}(\hat A^p|\hat B)=\mathbb E^\ell_{\hat \rho}(\hat A^p|\hat B)$ 

\item[(ii)] $\forall \hat \rho\in D_{\hat A},\forall p\in\Gamma^{\hat B}, \quad  \mathbb E^Q_{\hat \rho}(\hat B^p|\hat A)=\mathbb E^{\mathrm{r}}_{\hat \rho}(\hat B^p|\hat A)$

\item[(iii)]   There exists a bijective map $\Phi :\Lambda\to \sigma(\hat A)\times \sigma(\hat B)$ such that for all $(a,b)\in\sigma(\hat A)\times \sigma(\hat B)$
\begin{equation*}
    \forall \hat \rho\in D(\mathcal{H}), Q_{\Phi^{-1}(a,b)}(\hat \rho)= Q^{\mathrm{KD},\ell}_ {a,b}(\hat \rho) \mathrm{ \ and \ } \forall \hat X\in \mathcal{L}(\mathcal{H}), \tilde{Q}_{\Phi^{-1}(a,b)}(\hat X)= \tilde{Q}^{\mathrm{KD},\ell}_ {a,b}(\hat X).
\end{equation*}
\end{itemize}
\end{Theorem}

Theorem~\ref{thm:StrongerMain} provides a direct proof of Theorem~\ref{thm:KDunique1_bis} as it is a strenghtened version thereof: indeed, Theorem~\ref{thm:StrongerMain} relies on the hypothesis that the conditional expectations coincides on powers of $\hat A$ or on powers of $\hat B$ whereas Theorem~\ref{thm:KDunique1_bis} relies on the equality of the conditional expectation on all operators in $\mathcal{L}(\mathcal{H}).$

The left Kirkwood-Dirac distribution is therefore the only quasiprobability representation of quantum mechanics that is $\hat A$ and $\hat B$ compatible and for which the quantum conditional expectation of $\hat A^p$ given $\hat B$ is given by the ``left weak-value'' formalism, meaning by the values of $\left(\frac{\mathrm{Tr}(\hat \Pi_{b}^{\hat B}\hat A^p\hat \rho)}{\mathrm{Tr}(\hat \Pi_{b}^{\hat B}\rho)}\right)_{b\in\sigma(\hat B),n\in\IntEnt{0}{d-1}}$. For the right Kirkwood-Dirac representation, it is characterized by the ``right weak-value formalism'', meaning by the values of $\left(\frac{\mathrm{Tr}(\hat \rho\hat A^p\hat \Pi_{b}^{\hat B})}{\mathrm{Tr}(\hat \Pi_{b}^{\hat B}\rho)}\right)_{b\in\sigma(\hat B),n\in\IntEnt{0}{d-1}}$.

Theorem~\ref{thm:QPR give unique Econd} can further be used to characterize any fixed $\hat A$ and $\hat B$ Born-compatible quasiprobability representation of quantum mechanics. If one briefly goes to the infinite dimensional setting for illustrative purposes, the Wigner representation is then uniquely determined by the associated conditional expectations of $\hat P^n$ given $\hat Q$ that will not coincide with the ones arising from the weak value formalism. We illustrate this fact in Appendix \ref{app:Wigner_KD} where we compare the KD and Wigner conditional expectations of $\hat P^n$ knowing $\hat Q$, for $n=1,2$, showing they are different when $n=2$. This also means that the conditional expectation given by the Wigner quasiprobability representation cannot be the one defined by minimization in Section~\ref{s:Q_cond_exp_min}.

\section{Quantum conditional expectation and parameter estimation}\label{s:Q_cond_exp_int}
In this section, we explore properties of the quantum conditional expectations $\mathbb E^{\ell/\mathrm r}_{\hat \rho}(\hat X|\hat Y)$ introduced in Section~\ref{s:Q_cond_exp_min} in relation to the following well-known parameter estimation problem. Let $\hat \rho$ be any quantum state and consider the state
\begin{equation}\label{eq:rho_theta}
\hat{\rho }_{\hat{X } } (\theta ):= e^{ - i \theta \hat{X } } \hat{\rho } e^{i \theta \hat{X }}
\end{equation}
evolved under the unitary flow generated by some observable $\hat X$. The problem is then to estimate the ``phase'' $\theta$ from  the measurement of an appropriately chosen observable  $\hat Y$ in $\hat \rho_{\hat X}(\theta)$~\cite{Giovaneti2004,hofmann2011,Giovanetti2011,Dressel_Jordan_2012_ImaginaryPartWeakValue,Smith2024}.

We first show that the variance of the imaginary part of the above quantum conditional expectation is equal to the classical Fisher information associated with this phase estimation problem (Proposition~\ref{prop:Fisher_imaginary}). This relation was previously established and elaborated upon in~\cite{hofmann2011, Dressel_Jordan_2012_ImaginaryPartWeakValue} within the context of weak value physics. We then apply the results of Section~\ref{s:Q_cond_exp_mean_var}, and in particular the law of iterated expectations and the law of total variance to relate the mean and the variance of $\hat X$ in the state  $\hat\rho_{\hat X}(\theta)$ to those of its best estimator $\mathbb E^{\ell/\mathrm r}_{\hat \rho_{\hat X}(\theta)}(\hat X|\hat Y)$ and of its conditional variance $\Delta^{\ell/\mathrm r,2}_{\hat\rho_X(\theta)}(\hat X|\hat Y)$ in the same state.
This sharpens a result of~\cite{hofmann2011} and corroborates the idea that $\mathbb E^{\ell/\mathrm r}_{\hat \rho}(\hat X|\hat Y)$ is a  ``best estimator'' of $\hat X$.

These developments allow us to introduce and then study, in
Section \ref{s:phase insensitivity}, the notion of phase insensitive state, which is a state for which the above Fisher information vanishes. We then show that, modulo some technical assumptions, KD-real states are phase insensitive (Proposition \ref{prop: phase insensitivity for KD real}). More precisely, we will observe that if the metrological protocol described above  involves an observable $\hat X$ and and a state $\hat \rho$ that are KD-real, with $\hat Y=\hat B$ or $\hat Y=\hat A$, then their Fisher information $\mathrm{I_F}(\hat Y;0)$ vanishes. Such  states and observables are therefore not useful for this particular phase estimation protocol. But, as noted in Section \ref{s:uncertainty}, they can provide simultaneous information on the Born probability distributions of non-commuting observables $\hat X,\hat Y$ through a weak measurement protocol.  This result further suggests that, if $\hat\rho$ and $\hat X$ are KD real, and if the associated quantum Fisher information is non-zero, then the logarithmic derivative $\hat L$ of $\hat \rho$ cannot be KD real.  We show this to be the case when $\hat \rho$ is pure and provide two families of examples where $\hat \rho$ is mixed and $\hat L$ is indeed not KD real.

\subsection{The imaginary part of \texorpdfstring{$\mathbb E^{\ell}_{\hat\rho}(\hat X|\hat Y)$}{E[...]}: Fisher information}\label{s:Fisherinformation}

If the observable $\hat{Y } $ is measured when the system is in the state $\hat \rho_{\hat X}(\theta)$, the outcome $y\in\sigma(\hat Y)$ is observed with probability  
\begin{equation} \label{eq:def_pytheta}
p(y; \theta ) := {\rm Tr } (\hat{\Pi }^{\hat{Y } } _y \hat{\rho }_{\hat{X } } (\theta) ).
\end{equation}
The Fisher information~\cite{fisher1925,braunsteincaves1994,helstrom1976,holevo1982}  of this $\theta$-dependent probability is  defined as the expected value of $(\partial _{\theta } \ln (p(y ; \theta ) ))^2 $:   
\begin{equation} \label{eq:FI1}   
\mathrm{I}_{\mathrm{F}} (\hat Y;\theta )   
:= \sum _y (\partial _{\theta } \ln (p(y ; \theta ))  )^2 p(y ; \theta ).     
\end{equation} 
Note that this Fisher information depends not only on $\hat Y$ and $\theta$, but also on $\hat \rho$ and $\hat X$, although we do not indicate this dependence in the notation. The Fisher information derives its importance principally from
the famous Cram\`er-Rao inequality which states that the variance $\mathbb V(\tilde\theta)$ of any unbiased estimator $\tilde{\theta } (\hat Y) $ of $\theta$ is bounded from below by the inverse of the Fisher information: 
\begin{equation}\label{eq:CRbound}
    \mathbb V(\tilde\theta)\geq (\mathrm{I_F}(\hat Y;\theta))^{-1}.
\end{equation}
The Fisher information $\mathrm{I}_{\mathrm{F}}(\hat Y;\theta)$ is related to the (imaginary part of) the conditional expectation $\mathbb E_{\hat\rho}^{\ell/\mathrm r}(\hat X|\hat Y)$, as follows~\cite{hofmann2011, Dressel_Jordan_2012_ImaginaryPartWeakValue}: 
\begin{proposition}\label{prop:Fisher_imaginary}  Let $\hat Y$ be a $\mathrm{CSCO}$, $\hat X^\dagger=\hat X\in\mathcal L(\mathcal H)$ and suppose that $\hat \rho_{\hat X}(\theta)\in D_{\hat Y}$. Then
    \begin{equation} \label{eq:FI}     
\mathrm{I}_{\mathrm{F}} (\hat Y;\theta ) = 
4 {\rm Tr } \left( \left( {\rm Im } \, \mathbb{E } ^{\ell/\mathrm r }_{\hat{\rho }_{\hat{X } } (\theta ) } (\hat{X } | \hat{Y } )\right) ^2 \hat{\rho }_{\hat{X } } (\theta ) \right) .      
\end{equation}
\end{proposition}
\begin{proof}
A straightforward computation yields~\cite{Dressel_Jordan_2012_ImaginaryPartWeakValue,hofmann2011}
\begin{equation}\label{eq:thetderivprob}
\partial _{\theta }\ln p (y ; \theta ) = \frac{- i {\rm Tr } ( \hat{\Pi }_y^{\hat Y} [\hat{X } , \hat{\rho }_{\hat{X } } (\theta )  ] ) }{{\rm Tr } (\hat{\Pi }_y^{\hat Y} \hat{\rho }_{\hat{X } } (\theta ) ) } = 2 {\rm Im } \left(\frac{{\rm Tr }  (\hat{\Pi }_y^{\hat Y} \hat{X } \hat{\rho }_{\hat{X } } (\theta ) ) }{{\rm Tr } (\hat{\Pi }_y^{\hat Y} \hat{\rho }_{\hat{X } } (\theta ) ) }\right),
\end{equation} 
where we used the self-adjointness of $\hat{X }$. 
This is the imaginary part of the coefficient of $\hat{\Pi }_y $ in $\mathbb{E }_{\hat{\rho}_{\hat X}(\theta) }^{\ell/\mathrm r} (\hat{X } | \hat{Y } ) $, see Definition~\ref{def: left/right Econd}.    
This proves Eq.~\eqref{eq:FI}. 
\end{proof}
Note that  $\mathbb E_{\hat\rho}(\mathrm{Im}(\mathbb E_{\hat\rho}^{\ell/r}(\hat X|\hat Y)))=0$. Hence  
Eq.~\eqref{eq:FI} states that the classical Fisher information of the quantum mechanical Born probabilities $p(y;\theta)$ equals (up to a factor $4$) the variance of the imaginary part of the (left/right) quantum conditional  expectation. This observation provides an operational meaning to the imaginary part of the conditional expectation $\mathbb E^{\ell}_{\hat \rho}(\hat X|\hat Y)$. Since in classical probability theory the conditional expectation of a real random variable is always real, the presence of an imaginary part to $\mathbb E_{\hat\rho}^{\ell}(\hat X|\hat Y)$ can be considered as a typical quantum feature.  This raises the question precisely when $\mathrm{Im} \mathbb{E}^{\ell}_{\hat{\rho}}(\hat{X}|\hat{Y})$ vanishes, depending on $\hat \rho,\hat X$, and $\hat Y$. From Eq.~\eqref{eq:left_cond_exp} one obtains that
\begin{equation}\label{eq:imag_cond_exp}
   \mathrm{Im} \mathbb{E}^{\ell}_{\hat{\rho}}(\hat{X}|\hat{Y})=-\frac{1}{2}\sum_{y\in \sigma(\hat{Y})}\frac{\langle \varphi_y^{\hat{Y}},[\hat{X},\hat{\rho}]\varphi_y^{\hat{Y}}\rangle}{\langle \varphi_y^{\hat{Y}},\hat{\rho}\varphi_y^{\hat{Y}}\rangle}\hat{\Pi}_y^{\hat{Y}}=-\mathrm{Im} \mathbb{E}^{\mathrm r}_{\hat{\rho}}(\hat{X}|\hat{Y}),
\end{equation}
Hence, if
\begin{equation}
\label{eq:commutators}
    [\hat\rho, \hat X]=0,\quad \mathrm{or}\ [\hat \rho, \hat \Pi^{\hat Y}_y]=0, \quad \mathrm{or}\
[\hat \Pi_y^{\hat Y},\hat X]=0, 
\end{equation}
then $\textrm{Im}\mathbb E^\mathrm{\ell/\textrm r}(\hat X|\hat Y)=0$. This, in turn, implies that the Fisher information $\mathrm{I_F}(\hat Y; 0)$ vanishes. One may therefore conclude that the absence of incompatibility between $\hat\rho, \hat X$ and $\hat Y$ -- which is a form of ``classicality'' --  is a sufficient condition for the absence of an imaginary part to the quantum conditional expectation and hence of the vanishing of the Fisher information $\mathrm{I_F}(\hat Y; 0)$. We will see below that it is not a  necessary condition, as the Fisher information can vanish in cases where the commutators of Eq.~\eqref{eq:commutators} are all non-zero. An example is given in Eq.~\eqref{example:anomalous_weak}. Generally, self-adjointness of the conditional expectation $\mathbb E_{\hat\rho}^{\ell}(\hat X|\hat Y)$ does not preclude the manifestation of other nonclassical features associated to the triplet $(\hat\rho, \hat X, \hat Y)$: as we will see in the next section, anomalous values of  $\mathbb E_{\hat\rho}^{\ell}(\hat X|\hat Y=y)$ point to the possibility of weak value amplification, which is a typical quantum phenomenon as well. 

In order to address the question under what circumstances the Fisher information vanishes, we introduce the notion of phase insensitive states:
\begin{definition}\label{def:phase_invariant}
   Given an observable $\hat X$ and a  $\mathrm{CSCO}$ $\hat Y$, we say that a state $\hat \rho\in D_{\hat Y}$ is phase insensitive for a measurement of $\hat Y$ if $\mathrm{I}_{\mathrm{F}} (\hat Y;0 )=0$.
\end{definition}  
Phase insensitivity of a state $\hat \rho$ depends both on the  generator $\hat X$ associated to the phase and on the observable $\hat Y$. We will simply say ``$\hat\rho$ is phase insensitive'' when the choice of $\hat X$ and $\hat Y$ is clear from the context; otherwise, we will say the triplet $(\hat \rho, \hat X, \hat Y)$ is phase-insensitive.  The operational meaning of phase insensitivity is discussed in some detail in Appendix~\ref{app:vanishingFI}. We recall there that the Fisher information can be viewed as the squared relative rate of change of the probabilities $p(y;\theta)$, when the phase $\theta$ varies by a small amount $\delta\theta$.
Comparing this quantity to the mean squared
relative error $\|\delta p\|^2_{\mathrm F}$ of the experimentally determined probabilities, one obtains an estimate on the minimal variation $\delta \theta_{\mathrm min}$ that is  experimentally detectable:
$$
\delta\theta^2\geq \delta\theta_{\mathrm min}^2:=\frac{\parallel \delta p\|^2_{\mathrm F}}{I_{\mathrm F}(\hat Y;\theta)}.
$$ 
For small, and a fortiori, for vanishing Fisher information, this minimal phase variation becomes very large, thereby justifying the definition of phase insensitivity. We refer to Appendix~\ref{app:vanishingFI} for more details. We will show how to identify  phase insensitive triplets $(\hat \rho, \hat X, \hat Y)$ using the notion of  KD-reality in Section~\ref{s:KD_phase_insensitive}, where we will also further explore the link with the quantum Fisher information.

\subsection{An additive uncertainty principle}\label{s:uncertainty}
As a direct application of the results of Section~\ref{s:Q_cond_exp_min}, and in particular of the quantum law of total variance derived there, we can now relate the variance of $\hat X$ to the variances of the real and imaginary parts of its best estimator $\mathbb E_{\hat\rho_{\hat X}(\theta)}^{\ell/r}(\hat X|\hat Y)$  and to the expected value of its conditional variance $\Delta^{\ell/\mathrm r,2}_{\hat \rho_{\hat X}(\theta)}(\hat X|\hat Y)$. In what follows, for $Z\in\mathcal L(\mathcal H)$, we write $|Z|=\sqrt{Z^\dagger Z}$. Since $\Delta_{\hat \rho_{\hat X}(\theta)}^2\hat X=\Delta_{\hat{\rho}}^2\hat X$ for any value of $\theta$, it follows from  the law of total variance in Proposition \ref{prop:variances_condexp1} combined with Eq~\eqref{eq: variance real and imaginary parts} and Proposition \ref{prop:Fisher_imaginary} that, for any $\hat X=\hat X^\dagger\in \mathcal L(\mathcal H)$ and $\mathrm{CSCO}$ $\hat Y$,
\begin{equation}\label{eq:Delta_IF_mixed_left}
    \Delta^{2}_{\hat \rho} \hat X=\Delta_{\hat \rho_{X}(\theta)}^{2}\mathbb E_{\hat \rho_{X}(\theta)}^{\mathrm{sa}}(\hat X | \hat Y)+\frac14 \mathrm{I}_{\mathrm{F}} (\hat Y;\theta )+ \Tr(\hat\rho_{\hat X}(\theta)\mid \hat X-\mathbb E^{\ell}_{\hat \rho_{X}(\theta)}(\hat X | \hat Y)\mid^2),
\end{equation}   
and
\begin{equation}\label{eq:Delta_IF_mixed_right}
    \Delta^{2}_{\hat \rho} \hat X=\Delta_{\hat \rho_{X}(\theta)}^{2}\mathbb E_{\hat \rho_{X}(\theta)}^{\mathrm{sa}}(\hat X | \hat Y)+\frac14 \mathrm{I}_{\mathrm{F}} (\hat Y;\theta )+ \Tr(\hat\rho_{\hat X}(\theta)\mid \hat X-\mathbb E^{\mathrm r}_{\hat \rho_{X}(\theta)}(\hat X | \hat Y)^\dagger\mid^2),
\end{equation}
where we recall the notation $\mathbb E^{\mathrm{sa}}_{\hat\rho_{\hat X}(\theta)}(\hat X|\hat Y):=\mathbb E^{\ell/\mathrm r,\mathrm{sa}}_{\hat\rho_{\hat X}(\theta)}(\hat X|\hat Y)$. 
 As we saw in Section~\ref{s:Q_cond_exp_mean_var}, the error term (last term) can be understood as the expected value of the conditional variance of $\hat X$. When $\hat\rho$ is pure, this error term vanishes and one finds
    \begin{equation}\label{eq:Delta_IF_pure}
    \Delta^{2}_{\hat \rho} \hat X=\Delta_{\hat \rho_{X}(\theta)}^{2}\mathbb E_{\hat \rho_{X}(\theta)}^{\mathrm{sa}}(\hat X | \hat Y)+\frac14\mathrm{I}_{\mathrm{F}} (\hat Y;\theta ).
\end{equation}
We therefore have, for all $\hat\rho$
\begin{equation}\label{eq:Delta_IF_uncrel}
     \Delta^{2}_{\hat \rho} \hat X\geq\Delta_{\hat \rho_{X}(\theta)}^{2}\mathbb E_{\hat \rho_{X}(\theta)}^{\mathrm{sa}}(\hat X | \hat Y)+\frac14 \mathrm{I}_{\mathrm{F}} (\hat Y;\theta ).
\end{equation}

Equation~\eqref{eq:Delta_IF_pure} and Eq.~\eqref{eq:Delta_IF_uncrel}  were first derived in~\cite{hofmann2011} (at $\theta=0$), using a formulation  in terms of weak values instead of in terms of conditional expectations. The identification of the error term in Eq.~\eqref{eq:Delta_IF_mixed_left}-\eqref{eq:Delta_IF_mixed_right}, the concurrent link with the quantum law of total variance, and the interpretation of the error term as the expected value of a quantum conditional variance are, to the best of our knowledge, new here.

It is  pointed out in~\cite{hofmann2011} that one can think of Eq.~\eqref{eq:Delta_IF_pure} (at $\theta=0$) as an (additive) uncertainty relation. Indeed, both terms in the right hand side depend on $\hat Y$, whereas the left hand side does not. So if one term is small, the other must be large, and vice versa; this result has the following interpretation.  If, for a given $\hat \rho$ and $\hat X$, $\hat Y$ can be chosen so that the Fisher information is large, meaning close to (four times) the variance of $\hat X$, then, according to the Cramer-Rao bound, the variance on the estimation of $\theta$ can be made small, meaning one can obtain a good estimate on the phase $\theta$ from measurements of $\hat Y$ in the state $\hat\rho_{\hat X}(\theta)$.  
In that case,   however, $\mathbb E_{\hat \rho_{\hat X}(\theta)}^{\mathrm{sa}}(\hat X|\hat Y)$ will have a small variance that  will be far from the variance of $\hat X$. So then the fluctuations in the conditional expectation $\mathbb E_{\hat \rho_{\hat X}(\theta)}^{\mathrm{sa}}(\hat X|\hat Y)$ do not provide much information on the probability distribution of $\hat X$ for the state $\hat\rho$. For pure states $\hat \rho$, the extreme case occurs when $\hat Y$ is taken to be equal to the symmetric logarithmic derivative $\hat L$ of $\hat \rho_{\hat X(\theta)}$ at $\theta=0$, because then the Fisher information is maximal.  This maximum Fisher information, which is the quantum Fisher information $I_{\rm QF } (0) $, then equals $4 \Delta _{\hat \rho } ^2 \hat X $ (See Appendix~\ref{app:Fisher} for details).  On the other hand, if $\hat Y$ is such that the Fisher information vanishes or is small, one can obtain only little information on the phase $\theta$ from measurements of $\hat Y$ in the state $\hat\rho_{\hat X}(\theta)$.  However, in that case,  the variance of $\mathbb E_{\hat{\rho }_ {X}(\theta)}^{\mathrm{sa}}(\hat X|\hat Y)$ will be close or equal to  the one of $\hat X$. Specifically, when the Fisher information vanishes and $\hat\rho$ is pure, the former  is equal to the latter (see Eq.~\eqref{eq:Delta_IF_pure}).  In conclusion, if $\hat\rho$ is pure, we find that 
\begin{equation}
    \min_{\hat Y} \Delta_{\hat \rho}^{2}\mathbb E_{\hat \rho}^{\mathrm{sa}}(\hat X | \hat Y)= \Delta_{\hat \rho}^{2}\mathbb E_{\hat \rho}^{\mathrm{sa}}(\hat X | \hat L)=\Delta^{2}_{\hat \rho} \hat X -\frac14I_{\mathrm{QF}}(0)=0,\quad  \Delta^{2}_{\hat \rho} \hat X= \max_{\hat Y} \Delta_{\hat \rho}^{2}\mathbb E_{\hat \rho}^{\mathrm{sa}}(\hat X | \hat Y),
\end{equation}
where the maximum on the right is reached on phase insensitive triplets $(\hat \rho, \hat X, \hat Y)$. When $\hat \rho$ is mixed, one has
\begin{equation}
    \min_{\hat Y} \Delta_{\hat \rho}^{2}\mathbb E_{\hat \rho}^{\mathrm{sa}}(\hat X | \hat Y) \leq \Delta_{\hat \rho}^{2}\mathbb E_{\hat \rho}^{\mathrm{sa}}(\hat X | \hat L)\leq\Delta^{2}_{\hat \rho} \hat X -\frac14I_{\mathrm{QF}}(0)\leq \Delta^{2}_{\hat \rho} \hat X= \max_{\hat Y} \Delta_{\hat \rho}^{2}\mathbb E_{\hat \rho}^{\mathrm{sa}}(\hat X | \hat Y),
\end{equation}
where the maximum is reached when $\hat Y=\hat X$, for example.

We note that the real and imaginary parts of the conditional expectations $\mathbb E_{\hat \rho}^{\ell, \mathrm r}(\hat X | \hat Y)$ can be experimentally accessed via a weak measurement procedure that we describe in Appendix~\ref{app:weakvalues}.
 We conclude that there is therefore a necessary trade-off between obtaining information on $\theta$ from $p(y;\theta)$, for which a large Fisher information is needed, or on the probability distribution of $\hat X$ in $\hat \rho$  through the weak measurement procedure of $\hat X$, conditioned on $\hat Y$, for which the Fisher information should optimally vanish, so that the triplet $(\hat\rho, \hat X, \hat Y)$ is phase insensitive. Phase insensitivity will be related to KD reality in Section~\ref{s:KD_phase_insensitive}.

We finally point out that, for all $\hat \rho$ and $\theta$ 
\begin{equation}\label{eq:IF_upperbound}
  0\leq  \mathrm{I}_{\mathrm{F}} (\hat Y;\theta )
\leq   I_{\mathrm{QF}}(\theta)\leq 4\Delta^{\ell,2}_{\hat \rho} \hat X\leq (x_{\max}-x_{\min})^2 ,  
\end{equation}   
where $x_{\max}$, respectively $x_{\min}$, is the largest, respectively smallest, eigenvalue of $\hat X$ and we used {\it Popoviciu's inequality}, which states that the variance of a random variable taking values in some bounded interval $[a , b ] $ is bounded from above by $\frac{1 }{4 } (b - a )^2 . $ See for example~\cite{Bhatia01042000}, where this inequality is derived as a consequence of a stronger inequality which states that 
$$  
\frac{1}{4}\mathrm{I}_{\mathrm{F}} (\hat Y;\theta ) \leq \Delta^{\ell,2}_{\hat \rho } \hat X\leq (x_{\max } - \mathbb{E }_{\hat{\rho } } (\hat{X } ) )(\mathbb{E }_{\hat{\rho } }(\hat{X }) - x_{\min } ) ,   
$$   
but for which one has to know the expected value of $\hat{X } . $   

\subsection{An application: KD reality implies  phase insensitivity}\label{s:KD_phase_insensitive}
\label{s:phase insensitivity}

 In this subsection, we exploit the link between the KD representations of quantum mechanics and the conditional expectations $\mathbb E^{\mathrm \ell/\mathrm r}_{\hat \rho}(\hat X|\hat Y)$ (Theorem \ref{thm:KDunique1_bis}) with $\hat Y=\hat A$ or $\hat Y=\hat B$ in order to construct phase-insensitive triplets using the KD-real sector of a KD representation of quantum mechanics.

 We have seen that, if any two of $\hat \rho, \hat X$, and $\hat Y$ are compatible (commute), then the associated Fisher information vanishes and hence $\hat\rho$ is phase insensitive.
 However, these conditions are very strong, and in fact rather trivial. For example, if $[\hat\rho, \hat X]=0$, then $\hat\rho_{\hat X}(\theta)=\hat\rho$  is independent of $\theta$ and if $[\hat X,\hat\Pi_y^{\hat Y}]=0$, the two observables involved are compatible. We will now show that non-trivial examples of phase insensitivity arise naturally within  the KD-real sector of KD representations.

As a first set of examples, consider the following situation. Let $\hat A$ and $\hat B$ be two complementary CSCO and suppose 
\begin{equation}
    \hat X=f_1(\hat A) +g_1(\hat B),\quad \hat \rho= f_2(\hat A)+g_2(\hat B),
\end{equation}
where $f_1, g_1$ are real-valued and $f_2, g_2$ are positive. Then 
\begin{equation}
    [\hat X,\hat \rho]=[f_1(\hat A), g_2(\hat B)]+[g_1(\hat B), f_2(\hat A)].
\end{equation}
Note that this commutator will not vanish if the functions $f_i, g_i$ are non-trivial. It nevertheless follows straightforwardly from Eq.~\eqref{eq:imag_cond_exp} that $\mathrm{Im}\mathbb E_{\hat\rho}^\ell(\hat X|\hat B)=0=\mathrm{Im}\mathbb E_{\hat\rho}^r(\hat X|\hat B)$ and hence that $\mathrm{I_F}(\hat B;0) = 0 $ and, similarly, $\mathrm{I_F}(\hat A;0) = 0$. In the following, we shall extend this family of examples by using the KD representation of quantum mechanics associated to $\hat A$ and $\hat B$.

For that purpose,  we define the space of left/right KD-real operators as 
\begin{equation}
    \VR^{\ell/\mathrm r}=\left\{\hat C\in\mathcal{L}(\mathcal H) \mid \forall (a,b)\in\sigma(\hat A)\times\sigma(\hat{B}), \ \tilde{Q}^{\mathrm{KD,\ell/\mathrm r}}_{a,b}(\hat C)\in\R\right\}.
\end{equation}
So, $\hat C\in \VR^{\ell/\mathrm r}$ if and only if $\hat C$ has a real left/right KD-symbol. In view of Eq.~\eqref{eq:KDleftright}, one has that $\hat C\in\VR^{\ell}$ if and only if $\hat C^\dagger\in\VR^{\mathrm r}$. Consequently
\begin{equation}
V_{\mathrm{KD},\mathrm{real}}^{\mathrm{sa}}=\VR^{\ell}\cap \mathcal L^{\textrm{sa}}(\mathcal H)=\VR^{\mathrm r}\cap \mathcal L^{\textrm{sa}}(\mathcal H).
\end{equation}
 In short, $V_{\mathrm{KD},\mathrm{real}}^{\mathrm{sa}}$ is the set of all observables with a real KD symbol. We will refer to $V_{\mathrm{KD},\mathrm{real}}^{\mathrm{sa}}$ as the KD-real sector of quantum mechanics on $\mathcal H$ and we will say a state $\hat \rho$ or an observable $\hat X$ is KD real when it belongs to  $V_{\mathrm{KD},\mathrm{real}}^{\mathrm{sa}}$.

  Before turning to the study of KD-real pairs $(\hat \rho, \hat X)$, let us take a brief look at the special case where $\hat \rho$ is KD positive and $\hat X$ is KD real. In analogy with the discussion in Section~\ref{s:Q_cond_exp_mean_var}, KD-positive states can be thought of as ``classical'', at least to the extent that they allow for a joint probability distribution for $\hat A$ and $\hat B$. Moreover, we know from Theorem~\ref{thm:KDunique1_bis} that the associated conditional expectations $\mathbb E_{\hat\rho}^{Q^\mathrm{KD}}(\hat X|\hat B)$ equal $\mathbb E_{\hat\rho}^{\ell}(\hat X|\hat B)$, for all $\hat X$ and can be thought as a classical conditional expectation (see Definition~\ref{def:Qconditionalexpectation_Sigma}). When in addition $\hat X=\hat A$  and $\hat \rho$ is KD positive, we are in the situation of Lemma~\ref{lem:nogo}: the KD distribution of $\hat \rho$ is then a joint probability for $\hat A$ and $\hat B$ and satisfies (ii) of Lemma~\ref{lem:nogo}, so that in particular, $\mathbb E^{\mathrm{sa}}_{\hat\rho}(\hat A|\hat B)$ does not have anomalous values. 
 In addition, such positive states, when injected into a KD-positivity preserving quantum circuit, can be efficiently simulated classically~\cite{thio2025,pashayan2015,Delfosse2015}. Note however the following. Even if $\hat\rho$ is KD-positive, and $\hat X$ is KD-real, so that $\mathbb{E}_{\hat\rho}^{Q^\mathrm{KD}}(\hat X|\hat B)=\mathbb{E}_{\hat\rho}^{\mathrm{sa}}(\hat X|\hat B)$  is self-adjoint, it only satisfies the operator bound
 \begin{equation}
    \min Q^{\mathrm{KD}}(\hat X)\leq \mathbb E_{\hat\rho}^{\mathrm{sa}}(\hat X|\hat B)\leq \max Q^{\mathrm{KD}}(\hat X);
\end{equation}
it is therefore still possible for $E_{\hat\rho}^{\mathrm{sa}}(\hat X|\hat B)$ to take on anomalous values, since nothing guarantees that $x_{\min}\leq \tilde Q(\hat X)\leq x_{\max}$.
 This is then still a form of nonclassicality, despite the positivity of the KD distribution of $\hat \rho$. In addition, there is, for general $\hat X$ no straightforward notion of ``marginals''. In conclusion, even if KD positive states have some classical features, they may still display quantum characteristics, depending on the problem considered.

 We now turn to KD-real pairs $(\hat\rho, \hat X)$. Our first result, which is an immediate consequence of Lemma~\ref{lem:ABsymmetric}, shows that the KD-real states (and not only the KD-positive ones) are phase-insensitive provided $\hat X$ is KD real and $\hat Y=\hat B$: 
 
\begin{proposition}
\label{prop: phase insensitivity for KD real}
    Let $\hat A, \hat B$ be two complementary $\mathrm{CSCO}$ and let $Q^{\mathrm{KD},\ell/\mathrm r}$ be the corresponding left/right KD quasiprobability representations. Let $\hat X=\hat X^\dagger$ belong to $V_{\mathrm{KD},\mathrm{real}}^{\mathrm{sa}}$ and $\hat\rho$ belong to $V_{\mathrm{KD},\mathrm{real}}^{\mathrm{sa}}\cap D_{\hat{B}}$. Then the left/right conditional expectations of $\hat X$, given $\hat B$, are self-adjoint so that: 
    \begin{equation}\label{eq:realsectorB}
    \mathbb E_{\hat\rho}^{\mathrm{sa}}(\hat X|\hat B)=\mathbb E^{\ell}_{\hat\rho}(\hat X|\hat B)=\mathbb E^{\mathrm r}_{\hat\rho}(\hat X|\hat B).
    \end{equation}
     Similarly  if $\hat\rho$ belongs to $V_{\mathrm{KD},\mathrm{real}}^{\mathrm{sa}}\cap D_{\hat{A}}$,
    \begin{equation}\label{eq:realsectorA}
    \mathbb E_{\hat\rho}^{\mathrm{sa}}(\hat X|\hat A)=\mathbb E^{\ell}_{\hat\rho}(\hat X|\hat A)=\mathbb E^{\mathrm r}_{\hat\rho}(\hat X|\hat A).
    \end{equation}
    Consequently, $\hat\rho$ is phase insensitive for the measurement of $\hat B$ and for the measurement of $\hat A$.
    \end{proposition}
\begin{proof}    $\hat\rho\in V_{\mathrm{KD},\mathrm{real}}^{\mathrm{sa}} $ implies that $Q_{a, b } ^{\mathrm{KD } , \ell } (\hat{\rho } ) $ and therefore also $Q_{a | b } ^{\mathrm{KD } , \ell } (\hat{\rho } ) := Q_{a, b } ^{\mathrm{KD } , \ell } (\hat{\rho } ) / \langle \varphi ^{\hat{B } } _b , \hat{\rho } \varphi ^{\hat{B } } _b \rangle $ are real. If $\hat{X } = \hat{X }^{\dagger } $ it follows from Lemma \ref{lem:ABsymmetric} that   
$$   
\mathbb E^{\ell}_{\hat\rho}(\hat X|\hat B) = \sum _b \left( \sum _{ {a} } \overline{\tilde Q_{a,b} ^{\mathrm{KD},\ell}(\hat X)} Q_{a | b } ^{\mathrm{KD } , \ell } (\hat{\rho } ) \right) \hat{\Pi}^{\hat{B } } _b . 
$$
 This operator is self-adjoint if $\hat{X } $ has real KD-symbol. Moreover, $\mathbb E^r _{\hat\rho}(\hat X|\hat B) = \mathbb E^{\ell}_{\hat\rho}(\hat X^{\dagger } |\hat B )^{\dagger } = \mathbb E^{\ell}_{\hat\rho}(\hat X|\hat B )$, proving Eq.~\eqref{eq:realsectorB}.
    Finally, Eq.~\eqref{eq:FI} implies that $\mathrm{I}_{\mathrm{F}}(\hat A; 0) = \mathrm{I}_{\mathrm{F}}(\hat B; 0)= 0$, meaning that $\hat\rho_{\hat X}(\theta)$ is phase insensitive for the measurement of $\hat B$ and of $\hat A$ at $\theta=0$.
\end{proof} 
The proposition provides an interpretation of the KD-real sector, as follows. If $\hat\rho$ and $\hat X$ belong to the KD-real sector, then the conditional expectations $\mathbb E_{\hat\rho}^{\ell/\mathrm r}(\hat X|\hat A)$ and $\mathbb E_{\hat\rho}^{\ell/\mathrm r}(\hat X|\hat B)$ are self-adjoint, which implies that the Fisher informations  $\mathrm{I_F}(\hat A;0)$ and  $\mathrm{I_F}(\hat B;0)$ vanish so that the state $\hat\rho$ is phase insensitive for measurements of $\hat A$ or of $\hat B$.

Consequently, in order to extract information about the phase $\theta$ from information about the family of states $\hat\rho_{\hat X}(\theta)$, one needs to perform a different measurement than the one of $\hat A$ or of $\hat B$. This result is of interest because the space $V_{\mathrm{KD},\mathrm{real}}^{\mathrm{sa}}$ can in a number of cases be described quite explicitly. As already pointed out above, the operators $f(\hat A)+ g(\hat B)$, with $f:\sigma(\hat A)\to\R$ and $g:\sigma(\hat B)\to\R$, always belong to  $V_{\mathrm{KD},\mathrm{real}}^{\mathrm{sa}}$. It was proven in~\cite{langrenezetal2024} that all operators in $V_{\mathrm{KD},\mathrm{real}}^{\mathrm{sa}}$ are of this form with probability one, when the eigenbases of $\hat A$ and $\hat B$  are chosen randomly from the uniform  (Haar) distribution. 
In specific cases, the space $V_{\mathrm{KD},\mathrm{real}}^{\mathrm{sa}}$ can be considerably larger, such as for the KD representation associated to the natural position-momentum Heisenberg representation of qdits  (with $d$ not a prime number) and of finite and certain second countable LCA groups~\cite{debievreetal2025a,spriet2025}. 
Let us finally note that  the space $V_{\mathrm{KD},\mathrm{real}}^{\mathrm{sa}}$ is particularly simple when the transition matrix between $\hat{A}$ and $\hat{B}$ only has real entries: $V_{\mathrm{KD},\mathrm{real}}^{\mathrm{sa}}$ is then the space of all real symmetric matrices~\cite{langrenez2023characterizing2}. This, occurs, for example, for $n$-qubit systems~\cite{thio2025}.  In fact, we note in passing that, quite generally, if $\hat X$ and $\hat \rho$ both have real symmetric matrices in the $\hat Y$ basis, then $\mathrm{I_F}(\hat Y;0)=0$, always, as can be seen directly from Eq.~\eqref{eq:imag_cond_exp}.

Suppose $\hat X$ and $\hat \rho$ are fixed and belong to the KD-real sector. To estimate $\theta$, a measurement different from $\hat A$ or $\hat B$ then has to be chosen. As is known, the measurement $\hat L$ yielding the largest Fisher information, referred to as the quantum Fisher information,
\begin{equation}
    {\rm{I}_{QF}}=\mathrm{I_F}(\hat L,0)= 4 {\rm Tr } \left( \left( {\rm Im } \, \mathbb{E } ^{\ell/\mathrm r }_{\hat{\rho }_{\hat{X } } (\theta ) } (\hat{X } | \hat{L } )\right) ^2 \hat{\rho }_{\hat{X } } (\theta ) \right),
\end{equation}
is provided by the symmetric logarithmic derivative $\hat L$ of $\hat \rho_{\hat X}(\theta)$ at $\theta=0$.  We refer to Appendix~\ref{app:Fisher} for a self-contained discussion on the quantum Fisher information and its properties; see in particular Corollary \ref{cor:Braunstein Caves}. 
In view of what precedes, it is tempting to conjecture that, if $\hat \rho$ and $\hat X$ are both KD real, the optimal observable $\hat L$ must not be KD-real itself, provided $\rm I_{QF}$ does not vanish. For pure states, this conjecture is true, as proved in the proposition below. To that effect, we first prove the following Lemma.

\begin{lemma}
    The following statements are equivalent:
\begin{enumerate}
    \item $I_{\mathrm{QF}}(0) = 0$,
    \item $\hat \rho$ commutes with $\hat X$.
\end{enumerate}
\end{lemma}
In other words, if $\hat \rho$ and $\hat X$ do not commute, then the triplet $(\hat \rho,\hat X, \hat L)$ is not phase insensitive.
\begin{proof}
    We denote by $\hat L$ the symmetric logarithmic derivative of $\hat\rho_X{(\theta)}$ at $\theta=0$.
    If $I_{\mathrm{QF}}(0) = 0$ then by definition, we have that
    \begin{equation}
        I_{\mathrm{QF}}(0) = \mathrm{Tr}(\hat\rho \hat L^2)=0.
    \end{equation}
    We thus obtain that  $0=\Tr(\hat \rho \hat L^2)=\Tr((\hat\rho^{1/2}\hat L)^{\dagger} \hat\rho^{1/2} \hat L)$. Thus $\hat \rho^{1/2}\hat L=0$, which implies $\hat \rho\hat L=0$. The same argument proves that $\hat L\hat \rho=0$. 
    Consequently, we obtain that :
    \begin{equation}
        \partial_{\theta}\hat\rho(\theta)|_{\theta=0} = \frac{1}{2}\{\hat L, \hat\rho(0)\}= \frac{1}{2}(\hat L \hat\rho+\hat\rho\hat L) =0
    \end{equation}
    and by definition of $\rho(\theta)$, we obtain that $-i[\hat X,\hat\rho]=\partial_{\theta}\hat\rho(\theta)|_{\theta=0}=0$ and thus, $\hat X$ and $\hat \rho$ commute.
    Conversely, if $[\hat X,\hat\rho]=0$ then we still have that
    \begin{equation}
        0= -i[\hat X,\hat\rho] = \partial_{\theta}\hat\rho(\theta)|_{\theta=0} = \frac{1}{2}(\hat L \hat\rho+\hat\rho\hat L) 
    \end{equation}
    and consequently,
    \begin{equation}
         I_{\mathrm{QF}}(0) = \mathrm{Tr}(\hat\rho \hat L^2) = \mathrm{Tr}\left( \frac{1}{2}\left(\hat\rho\hat L + \hat L \hat \rho\right)\hat L\right) = 0, 
    \end{equation}
    ending the proof.
\end{proof}

\begin{proposition}\label{prop:LnotKDreal}
    Let $\hat A$ and $\hat B$ be two complementary CSCO.
Suppose that $\hat \rho = \cket{\psi}\bra{\psi}\in D_{\hat B}$ is a left KD-real pure state  and that $\hat X = \hat X ^\dagger $ is a left KD-real observable
such that $I_{\mathrm{QF}}(0) > 0$ ( or equivalently, that $\cket{\psi}$ is not an eigenvector of $\hat X)$.  Let $\hat L$ be the symmetric logarithmic derivative of $\hat\rho_X{(\theta)}$ at $\theta=0$. 

Then the conditional expectation of $\hat L$ given $\hat B$, $\mathbb{E }_{\hat \rho }^{\ell} (\hat L | \hat B ) $, is non-zero and anti-Hermitian. In particular, $\hat L$ is not left KD real.
\end{proposition}
\begin{proof}
Suppose, by contradiction, that $\mathbb{E }_{\hat \rho }^{\ell} (\hat L | \hat B )=0$. Then, as $\hat \rho\in 
    D_{\hat B}$, we have that, for all $b\in\sigma(\hat B)$:
    \begin{equation*}
        \langle\varphi_{b}^{\hat B}, \hat L \psi\rangle=0,
    \end{equation*}
    meaning that $\hat L\psi =0$ as $\hat B$ is a CSCO.
    Moreover, by definition of $\hat L$, we have that:
    \begin{equation}\label{eq:defSLD}
        \frac{1}{2}\{\hat L,\hat\rho\} = \frac{1}{2}\{\hat L,\hat\rho_{\hat X}(0)\} = \partial_{\theta}\rho_{\hat X}(\theta)|_{\theta=0} = -i[\hat X,\hat \rho].
    \end{equation}
    Applying this equality with $\hat \rho=|\psi\rangle\langle \psi|$, we obtain 
    \begin{equation*}
    -i\left(\hat X\psi - \langle \psi, \hat X\psi\rangle\psi\right) = \frac{1}{2}\left(\hat L\psi + \langle \psi, \hat L\psi\rangle\psi\right) = 0, 
    \end{equation*}
    as $\hat L\psi =0$. This implies that 
    \begin{equation*}
        \hat X\psi = \langle \psi, \hat X\psi\rangle\psi,
    \end{equation*}
    proving that $\psi$ is an eigenvector of $\hat X$, which is a contradiction.
    Thus, $\mathbb{E }_{\hat \rho }^{\ell} (\hat L | \hat B ) \neq 0$.

According to Proposition~\ref{prop: phase insensitivity for KD real}, as $\hat \rho$ and $\hat X$ are left KD real, we have that $\mathbb{E}^{\ell}_{\hat\rho}(\hat X|\hat B)$ is self-adjoint, meaning that, for all $b\in\sigma(\hat B)$:
\begin{equation*}
    \mathrm{Im}(\mathrm{Tr}(\hat \Pi_{b}^{\hat B}\hat X\hat\rho))=0.
\end{equation*}
Moreover, from Eq~\eqref{eq:defSLD}, we compute that, for all $b\in\sigma(\hat B)$:
\begin{equation*}
    \mathrm{Re } \left(\mathrm{Tr} (\hat \Pi ^{\hat B } _b \hat{L } \hat \rho )\right) = {\rm Im } \, {\rm Tr } (\hat \Pi ^{\hat B } _b \hat X \hat \rho ).
\end{equation*}
This shows that $\mathbb{E}^{\mathrm{sa},\ell}_{\hat\rho}(\hat L|\hat B)=0$, proving that $\mathbb{E}^{\ell}_{\hat\rho}(\hat L|\hat B)$ is anti-hermitian.

Finally, as $\mathbb{E}^{\ell}_{\hat\rho}(\hat L|\hat B)$ is both non-zero and anti-hermitian, and as $\hat \rho$ is left KD real, Proposition~\ref{prop: phase insensitivity for KD real} implies that $\hat L$ cannot be left KD real. This concludes the proof.   
\end{proof}

We now show that this conjecture also holds for mixed states in two examples.

 For any state $\hat \rho$, there exists a CSCO $\hat A$ such that $\hat \rho=\rho(\hat A)$ for some $\rho:\sigma(\hat A)\to \mathbb{R}$. If $\hat \rho$ has simple eigenvalues, one can take $\hat{\rho}=\hat A$. Let $\hat B$ be a CSCO complementary to $\hat A$. $\hat \rho$ is then automatically KD real. For $a\in \sigma(\hat A)$, we denote by $\rho_a$ the eigenvalue of $\hat \rho$ associated to the vector $\varphi_a^{\hat A}$.
From Eq~\eqref{eq:log derivative}, one easily computes that a logarithmic derivative of $\hat \rho_{\hat X}(\theta)$ at $\theta=0$, $\hat L$, is given by
\begin{equation*}
    \hat L_{a,a'}=\begin{cases}
        i\frac{(\rho_a-\rho_{a'})}{\rho_a+\rho_{a'}}\langle \varphi_a^{\hat A},\hat X\varphi_{a'}^{\hat A}\rangle,\quad \text{if }\rho_a+\rho_{a'}\neq 0\\
        0 \quad \text{otherwise,}
    \end{cases}
\end{equation*}
where $\hat L_{a,a'}$ are the matrix coefficients of $\hat L$ in the eigenbasis of $\hat A$. The KD symbol of $\hat{L}$ is then given by
\begin{equation}
\label{eq:symbol L}
    \tilde{Q}^{\rm KD,\ell}_{a,b}(\hat L)=i\sum_{\substack{a',b'\\ \rho_a+\rho_{a'}\neq 0}}\frac{\rho_a-\rho_{a'}}{\rho_a+\rho_{a'}}\tilde{Q}_{a,b'}^{\rm KD,\ell}(\hat X)\frac{\langle \varphi_a^{\hat A},\varphi_{b'}^{\hat B}\rangle}{\langle \varphi_a^{\hat A},\varphi_b^{\hat B}\rangle}\langle \varphi_{b'}^{\hat B},\varphi_{a'}^{\hat{A}}\rangle\langle  \varphi_{a'}^{\hat A},\varphi_b^{\hat B}\rangle,
\end{equation}
$(a,b)\in \sigma(\hat A)\times \sigma(\hat B)$. From this expression we can give at least two situations in which the KD symbol of $\hat L$ is purely imaginary:
\begin{enumerate}
    \item If the transition matrix between $\hat A$ and $\hat B$ has real coefficients and $\hat X$ is KD real, then, the right hand side of Eq~\eqref{eq:symbol L} is purely imaginary. The condition that the transition matrix between $\hat A$ and $\hat B$ has real coefficients is for example fulfilled in systems of $n$ qbits, where the transition matrix between $\hat A$ and $\hat B$ is given by the matrix of the Fourier transform on the abelian group $\left(\mathbb{Z}/2\mathbb{Z}\right)^n$ \cite{debievreetal2025a, thio2025}.\\
    \item If $\hat X$ is of the form $\hat X= f(\hat A)+\hat \Pi_{b_0}^{\hat B}$, where $f:\sigma(\hat A)\to \mathbb{R}$ and $b_0\in \sigma(\hat B)$. It is clear in this case that $\hat X$ belongs to the KD real sector. Moreover it is easily computed in this case that for any $a\in \sigma(\hat A)$,
    \begin{equation*}
        \tilde{Q}_{a,b_0}^{\rm KD,\ell}(\hat L)=i\sum_{\substack{a'\\\rho_a+\rho_{a'}\neq 0}}\frac{\rho_a-\rho_{a'}}{\rho_a+\rho_{a'}}\Big|\langle \varphi_{b_0}^{\hat B},\varphi_{a'}^{\hat A}\rangle \Big|^2,
    \end{equation*}
    which is also purely imaginary.
\end{enumerate}

One may conclude from this analysis that, in order to test the phase sensitivity of a KD-real state $\hat \rho$ under the unitary flow generated by a KD-real observable $\hat X$, one cannot limit oneself to measuring the observable $\hat B$: indeed, under these hypotheses, the triplet $(\hat \rho, \hat X, \hat B)$ is phase-insensitive. This analysis does not rule out the possibility that some KD-real observable $\hat C$ exists for which $I_{\mathrm F}(\hat C, 0)>0$. Nevertheless, we proved that the optimal choice $\hat L$ of measurement observable is not KD real, provided $\hat \rho$ is pure. We leave the question whether this remains true for all mixed states open.
Another way to put the above result is to say that, if a pure state $\hat \rho$ and $\hat X$ do not commute,  then there does not exist a KD representation defined by two CSCO $\hat A$ and $\hat B$ for which $\hat L=g(\hat B)$ and for which both $\hat \rho$ and $\hat X$ are KD real.

We end this section with  a short remark on post-selected phase estimation, as studied in~\cite{Gladstein2022, Arvidsson-Shukur2020}. It is proven in those works that a quantum advantage can be obtained in that case, even though the state is KD real, but not KD positive. This does not contradict our findings for two reasons. First of all, the phase estimation problem considered there differs from the one considered here because of the post-selection process. As a result, the Fisher information is different as well. In addition, the relevant KD distribution used to obtain the results of~\cite{Gladstein2022, Arvidsson-Shukur2020} is different from the one considered here.

\section{Discussion and conclusions}

There is elegance in defining concepts through optimization: the principles of least action and of maximal entropy are eminent examples. There is also  convenience in defining concepts in quantum mechanics by analogy with concepts in classical mechanics and probability theory. We have combined both approaches in order to define the notion of conditional expectation of an observable $\hat X$ given an observable $\hat Y$ in quantum mechanics: it is  the best predictor of $\hat X$ by a function of $\hat Y$ in the sense that it is the function of $\hat Y$ that minimizes a quadratic error in a given state $\hat \rho$. In the first part of this paper, we have shown that, because of the non-commutativity between observables inherent in quantum mechanics, several different choices of quadratic error mimicking the classical choice exist, and lead to different definitions. We have explained how two of these choices are particularly natural. First, they are intimately linked to weak value physics, thereby sharpening the known links between weak values and quantum conditional expectations. Second, our approach allows us to establish that the quantum conditional expectations so defined are characterized fully by a ``pull out'' property, and obey a law of iterated expectations as well as a law of total variance, similar to the original notion in probability theory. In this manner, the definition of the quantum conditional expectation through minimization of a quadratic error provides a unified picture within which the essential properties of the quantum conditional expectation naturally emerge.  In addition, by working in analogy with classical probability theory, this approach allows us to highlight precisely under what circumstances the quantum object differs from its classical counterpart: in other words, it sheds light  on aspects of the classical-quantum boundary.  Notably, the conditional expectation of an observable is not necessarily an observable, but can fail to be self-adjoint, a feature that has no equivalent in the classical theory. Its imaginary part is thus a typical quantum feature. Its variance provides an extra term in the law of the total variance, absent in the classical theory. This imaginary part has an interpretation as the Fisher information associated to a phase estimation problem associated to the triplet $\hat\rho$, $\hat X$, $\hat Y$. We also show how the quantum conditional expectation can be used to establish a simple but efficient state-dependent no-go theorem for the existence of joint probabilities in quantum mechanics. 

We conclude from this analysis that the proposed approach to quantum conditional expectations via minimization of a quadratic error provides a satisfying unifying picture that allows to analyse efficiently its properties and to pinpoint its similarities and -- most importantly -- its differences with the classical notion. 

In the second part of the paper, we establish a  relation between these results and the theory of quasiprobability representations of quantum mechanics.  The origin of the difficulties with the notion of conditional expectations in quantum mechanics is well known to be that joint probabilities of incompatible observables cannot be naturally defined. Quasiprobability representations of quantum mechanics circumvent this difficulty by allowing for nonpositive and even nonreal distributions to represent quantum states. This has led to a definition of joint quasiprobabilities and, through an analog of Bayes' rule, to an alternative notion of conditional expectation from the one through minimization, that however depends very strongly on the choice of quasiprobability representation. 
Our main result in this paper is then that, among all possible quasiprobability representations that produce the correct Born probabilities associated to two chosen observables $\hat A$ and $\hat B$, only the Kirkwood-Dirac quasiprobability representations reproduce the above conditional expectations defined through miminization of a quadratic error. This provides therefore a defining feature for the Kirkwood-Dirac representations, singling them out among all Born-compatible quasiprobability representations. 

In the last part of the paper, we have applied our insights to provide an interpretation of the KD-real sector of quantum mechanics in terms of phase-insensitivity. 

Our work leaves a number of questions unanswered. First, whereas it is certainly elegant and natural to require a quantum conditional expectation to obey a ``pull-out'' formula, as we did, we did not find a direct operational meaning to such a pull-out formula. Such an interpretation would certainly strengthen the case for the approach promoted here. As we pointed out, other approaches exist; see, for example,~\cite{tsang2022, tsang2023,PRXQuantum.4.020334}, where no pull-out property is required and~\cite{umegaki1954}, where in a more algebraic context, on the contrary, both the left and right pull-out properties are required, thereby effectively restricting the states for which the notion makes sense. Note, however, that giving up the ``pull-out'' formula implies giving up the link with weak-value physics, as we proved. 
Second, although a quantum conditional variance  appears naturally within our approach, we did not identify an operational meaning for this quantity, contrary to what is known for the conditional expectations, which are linked to weak value physics. Indeed, although the expected value of this conditional variance plays an analogous role to the so-called ``unexplained variation'' familiar from statistics, no such interpretation seems to arise in an evident manner in quantum mechanics. It would be clearly of interest to develop such an interpretation. In addition, the definition of the conditional variance does not seem to be unique, several possibilities presenting themselves, that appear natural from different points of view.

Finally, to avoid technical complications and to allow for mathematical precision, we have restricted ourselves in this paper  to finite probability spaces $\Omega$ and to finite dimensional Hilbert spaces $\mathcal H$.  While many of our results extend, at least on a formal level, to the infinite dimensional setting, technical difficulties do then arise that we have not tried to address: see \cite{Brummelhuis2025a} for quantum conditional expectations given a self-adjoint operator when the underlying state  is pure.

\emph{Acknowledgments:} This work was supported in part by the CNRS through the MITI interdisciplinary programs. S. De Bièvre, C. Langrenez and M. Spriet acknowledge the support of the CDP C2EMPI, as well as the French State under the France-2030 programme, the University of Lille, the Initiative of Excellence of the University of Lille, the European Metropolis of Lille for their funding and support of the R-CDP-24-004-C2EMPI project. The authors thank David Arvidsson-Shukur, Justin Dressel, and Andrew Jordan for stimulating and insightful discussions on the subject of this paper. They also thank Arthur Parzygnat as well as two anonymous referees for their constructive comments on (a previous version of) this paper. 

\bibliographystyle{ieeetr}
\bibliography{Bibliography}

\appendix

\section{Conditioned measurements in von Neumann's measurement scheme}   \label{app:weakvalues}   

\medskip 
\subsection{Quantum measurements according to von Neumann} In von Neumann's model for the measurement of a quantum mechanical observable $\hat{X } = \hat{X } ^{\dagger } $ acting on some Hilbert space $\mathcal{H }$~\cite{Neumann18},     
$\hat{X } $ is coupled to a measuring device or meter with pointer position $q \in \mathbb{R }. $ Both the system and the meter are treated quantum mechanically, and we let $\hat{Q } $ and $\hat{P } = i^{-1 } \partial _q $ be the pointer's position and momentum operator, acting on $L^2 (\mathbb{R } ) $, where Planck's constant $\hbar $ is set equal to 1. If the system is in an eigenstate of $\hat{X } $ with eigenvalue $x \in \sigma (\hat{X } ) $, the measuring device is assumed to cause the pointer to undergo a displacement $\gamma x $, where the constant $\gamma $ is proportional to the interaction strength and to the duration of the system-meter interaction. The effect of the measurement is therefore modeled by the unitary transformation   
$$   
\hat{U }_{\gamma } := e^{- i \gamma \hat{X } \otimes \hat{P } } ,    
$$   
acting on the tensor product Hilbert space $\mathcal{H } \otimes L^2 (\mathbb{R } ) . $ Let $\hat{\rho } $ be the initial state of the system and assume that the initial state of the pointer is a pure state defined by a normalized wave-function $\varphi = \varphi (q) \in L^2 (\mathbb{R } ) . $ In the weak value literature, $\varphi $ is often taken to be a Gaussian, but we will allow it to be an arbitrary Schwartz function here, sometimes restricted to be real-valued to simplify formulas. (This condition can be weakened to requiring that $\varphi $ belong to the domain of suitable powers of $\hat P $ and $\hat{Q } $, depending on the order of approximation in $\gamma$.) 

The initial state of the compound system is the tensor product state $\hat{\rho } \otimes | \varphi \rangle \langle \varphi | $ and its state after the system-meter interaction is given by   
$$   
\hat{\sigma }_{\gamma } := \hat{\sigma }_{\gamma } (\hat{\rho } , \varphi ) := \hat{U }_{\gamma } (\hat{\rho } \otimes 
| \varphi \rangle \langle \varphi | ) \hat{U }_{\gamma }^{\dagger } .    
$$   
The quantum mechanical expectation of the pointer position $\hat{Q } $, post-interaction, is then found to be     
\begin{equation} \mathbb E_{\hat \sigma_\gamma}(\hat Q)=\label{eq:unconditional_vN}   
{\rm Tr } ( (I_{\mathcal{H } } \otimes \hat{Q } ) \hat{\sigma }_{\gamma } )
= \langle \varphi , \hat{Q } \varphi \rangle + \gamma {\rm Tr } (\hat{X } \hat{\rho } ) ,   
\end{equation}
and its variance post-interaction is given by 
\begin{equation} \label{eq:unconditional_Q_variance}
    {\rm var }_{\gamma } (\hat Q ) := \Delta _{\hat \sigma _{\gamma } } (\hat Q ) = {\rm var }_{\varphi } (\hat Q ) + \gamma ^2 {\rm var }_{\hat \rho } (\hat X )
\end{equation}
where ${\rm var }_{\varphi}$ is the variance associated to the state $\varphi$ (denoted $\Delta^{2}_{\varphi}$ in the main text),
 as a direct computation starting from Eq.~\eqref{eq:sigma_gamma} shows. The first term on the right is the initial mean position of the pointer, which we can always assume to be 0 by selecting the pointer's zero reading, and ${\rm Tr } (\hat{X } \hat{\rho } ) $ can in principle be inferred from measurements of $\hat{Q }$, provided $\gamma$ is known. In other words, a first point to make  here is that some information about the observable $\hat X$ of the system, while it is in the state $\hat \rho$, is obtained from a measurement of $\hat Q$ on the meter. Note however that, if the variance $\Delta \hat Q$ of $\hat Q$ in the state $\hat \rho$ is large, and  $\gamma {\rm Tr } (\hat{X } \hat{\rho })$ is small, then the signal-to-noise ratio will be unfavourable and, supposing $\gamma$ is known, an accurate determination of $\Tr(\hat X\hat \rho)$ will require a very large number of individual measurements.

 Similarly, if $\gamma$ is not known, but $\Tr(\hat X\hat\rho)$ is, this procedure allows for the determination of $\gamma$, but with the same caveat when $\gamma\Tr(\hat X\hat \rho)$ is small. In that case, a conditional measurement of $\hat Q$ can improve the signal-to-noise ratio, and lead to what is referred to as weak-value amplification, as we explain in the next subsection. The conditioning is based on the following procedure.
 
 If $\hat{Y } $ is a second observable on $\mathcal{H }$, then $\hat{Y } \otimes I_{L^2 (\mathbb{R } ) } $ and $\hat{Q } =   
I_{\mathcal{H } } \otimes \hat{Q } $ commute as operators on $\mathcal{H } \otimes L^2 (\mathbb{R } ) $, unlike (in general) $\hat{Y } $ and $\hat{X } $, and can be simultaneously measured using standard projective measurements. This then leads to a well-defined joint probability distribution of those two observables. In particular, simply writing $\hat{Q } $ for    
$I_{\mathcal{H } } \otimes \hat{Q } $ and $\hat{Z } $ for $\hat{Z } \otimes I_{L^2 (\mathbb{R } ) } $ for operators $\hat{Z } $ on $\mathcal{H } $, the {\it classical} conditional expectation, post interaction, $\hat{Q } = I_{\mathcal{H } } \otimes \hat{Q } $ given that $\hat{Y } = y $ with $y \in \sigma (\hat{Y } ) $  is, as we saw (see Remark \ref{rmk:QCE}(vi)), equal to the quantum conditional expectation $\mathbb E_{\hat\sigma_\gamma}(\hat Q|\hat Y)$ of $\hat Q$, given $\hat Y$, and is given by   
\begin{equation} \label{eq:CCE_gamma}
\mathbb{E } _{\hat\sigma_\gamma } (\hat{Q } | \hat{Y } = y ) = \frac{{\rm Tr } (\hat{\Pi }^{\hat{Y } } _y  \hat{Q } \hat{\sigma }_{\gamma } ) }{{\rm Tr } ( \hat{\Pi }^{\hat{Y } }_y \hat{\sigma }_{\gamma } ) }.   
\end{equation} 

This conditional expectation can be experimentally determined by doing successive measurements of $\hat{Q } $ and $\hat{Y }$ on the state $\hat\sigma_\gamma$ and by post-selecting on $\hat{Y } = y $ or, what is equivalent since we are dealing with  commuting operators, by only measuring $\hat{Q } $ when $\hat{Y } = y $, which can be implemented by passing the joint system through a filter which filters out the appropriate eigenstate of $\hat{Y } $ before reading the meter position. We show in the next subsection that in the limit of small $\gamma $ the real and imaginary parts of $\mathbb{E }_{\hat{\rho } } (\hat{X } | \hat{Y } = y ) $   determine the first order correction to the value of $\mathbb E_{\hat\sigma_\gamma}(\hat Q|\hat Y)$, for $\gamma = 0 $. When $\mathbb E_{\hat\rho}(\hat X|\hat Y)$ is anomalous, meaning that it lies outside the range of the spectrum of $\hat X$, this can enhance the signal-to-noise ratio referred to above, a phenomenon referred to as weak-value amplification~\cite{Aharonov88, Leggett_1989, Peres_1989,Duck89,Tamir2013,Jozsa2007,Dressel14,arvidssonshukur2024properties,dresselContextualvalueApproachGeneralized2012, Williams_Jordan_2008,Dressel_Jordan_2012,Dressel_Jordan_2012_ImaginaryPartWeakValue}.  This phenomenon is considered of typical quantum nature since such anomalous values cannot occur classically.

\subsection{Conditioned masurements in the weak interaction limit} 

In the analysis presented in this section only the left quantum conditional expectation intervenes. In order to lighten the notation, we therefore drop the ``$\ell$'' upper-script.
Let us define the quantum mechanical covariance of two observables $\hat C_1 $ and $\hat C_2 $ in a state $\hat{\sigma } $, all defined on some unspecified Hilbert space, by   
\begin{eqnarray}   
{\rm cov }_{\hat{\sigma }} (\hat{C }_1 , \hat{C }_2 ) &:= & {\rm Re } \, {\rm Tr } (\hat{C }_1 \hat{C }_2  \hat{\sigma } ) - {\rm Tr } (\hat{C }_1 \hat{\sigma } ) {\rm Tr } (\hat{C }_2 \hat{\sigma } ) \\   
&=& \frac{1}{2}{\rm Tr } (\{ \hat{C }_1 , \hat{C }_2 \} \hat{\sigma } ) - {\rm Tr } (\hat{C }_1 \hat{\sigma } ) {\rm Tr } (\hat{C }_2 \hat{\sigma } ) \nonumber   
\end{eqnarray}   
where we recall that $\{ \hat{C }_1 , \hat{C }_2 \} =  \hat{C }_1 \hat{C }_2 + \hat{C }_2 \hat{C }_1  $ is the anti-commutator. If $\hat{C }_1 = \hat{C }_2 = \hat{C } $, we write ${\rm cov }_{\hat{\sigma } } (\hat{C } , \hat{C } ) = {\rm var }_{\hat{\sigma } } (\hat{C } ) = \Delta^{2}_{\hat\sigma}(\hat C) $, the variance of $\hat{C }$. If the state is a pure state
$| \varphi \rangle \langle \varphi | $ we will simply write ${\rm cov }_{\varphi } $ instead of ${\rm cov }_{| \varphi \rangle \langle \varphi | } $, and similarly for the variance.   
\smallskip

\subsubsection{Small \texorpdfstring{$\gamma$}{E[...]} expansions of conditional expectations of pointer variables}  

We first compute the first order expansion in $\gamma$ of the conditional quantum expectations of $\hat Q$ and of $\hat P$. Recall our standing assumption that $\varphi $ is Schwartz class.   
\begin{proposition}\label{prop:EspGammaOrder1}
     As $\gamma \to 0 $, we have that:
     \begin{equation} \label{eq:thm_CCE_WVa_Q_2}
\mathbb{E } _{\hat\sigma_\gamma } (\hat{Q } | \hat{Y } = y ) = \langle \varphi, \hat{Q } \varphi \rangle+ \gamma \cdot \left({\rm Re } \, \mathbb{E }_{\hat{\rho } } (\hat{X} | \hat{Y } = y ) +    
2 {\rm Im } \mathbb{E }_{\hat{\rho } } (\hat{X } | \hat{Y } = y ) \cdot   
{\rm cov } _{\varphi } (\hat{Q } , \hat{P } ) \right) + O(\gamma ^2 ) ,   
\end{equation}
where ${\rm cov } _{\varphi } (\hat{Q } , \hat{P } ) := {\rm Re } \langle \hat{Q } \varphi , \hat{P } \varphi \rangle - \langle \varphi, \hat{Q } \varphi \rangle \langle \varphi , \hat{P } \varphi \rangle $ is the quantum mechanical covariance of the pointer position and momentum operators $\hat{Q } $ and $\hat{P } $ in the initial pointer state. Moreover, we also find that:

\begin{equation} \label{eq:thm_CCE_WVa_P_2}   
\mathbb{E }_{\hat\sigma_\gamma } (\hat{P } | \hat{Y } = y ) = \langle \varphi , \hat{P } \varphi \rangle + 2 {\rm Im } \, \mathbb{E }_{\hat{\rho } } (\hat{X } | \hat{Y }  = y ) {\rm var }_{\varphi } (\hat{P } ) \cdot \gamma  + O(\gamma ^2 ) ,   
\end{equation}   
where ${\rm var }_{\varphi } (\hat{P } ) = {\rm cov }_{\varphi } (\hat{P } , \hat{P } ) = \langle \hat{P } \varphi , \hat{P } \varphi \rangle - \langle \varphi , \hat{P } \varphi \rangle ^2$ is the quantum mechanical variance of the pointer momentum operator in the initial pointer state $\varphi$.
\end{proposition}

The proof will show that more generally, for functions of the position operator $\hat Q$, if $f $ is a real-valued differentiable function of at most polynomial growth, then     
\begin{eqnarray} \label{eq:thm_CCE_WV_bis_Q_2}    
\frac{d }{d \gamma }  \mathbb{E }_{\hat\sigma_\gamma } (f(Q ) | Y = y ) |_{\gamma = 0 }    
&=& {\rm Re } \, \mathbb{E }_{\hat{\rho } } (\hat{X} | \hat{Y } = y ) \cdot \langle  \varphi, f' (\hat{Q } ) \varphi \rangle \\   
&\ & + \, 2 {\rm Im } \mathbb{E }_{\hat{\rho } } (\hat{X } | \hat{Y } = y ) \cdot   
{\rm cov } _{\varphi } (f(\hat{Q } ) , \hat{P } ) . \nonumber   
\end{eqnarray}   
For the momentum operator $\hat P$, we find more generally that   
\begin{equation}\label{eq:thm_CCE_WV_bis_P_2}   
\frac{d }{d\gamma } \mathbb{E }_{\hat\sigma_\gamma } (f(\hat{P } ) | \hat{Y } = y ) |_{\gamma = 0 } = 2 {\rm Im } \, \mathbb{E }_{\hat{\rho } } (\hat{X } | \hat{Y }  = y ) \cdot {\rm cov }_{\varphi } (f (\hat{P } ) , \hat{P } ) .      
\end{equation}       

\begin{proof} For the proof of Eq.~\eqref{eq:thm_CCE_WVa_Q_2}, we will simply compute the derivative at $\gamma = 0 $ and show that   
\begin{equation} \label{eq:thm_CCE_WV}   
\frac{d }{d \gamma } \mathbb{E } _{\hat\sigma_\gamma } (\hat{Q } | \hat{Y } = y ) |_{\gamma = 0 } = {\rm Re } \, \mathbb{E }_{\hat{\rho } } (\hat{X} | \hat{Y } = y ) +    
2 {\rm Im } \mathbb{E }_{\hat{\rho } } (\hat{X } | \hat{Y } = y ) \cdot   
{\rm cov } _{\varphi } (\hat{Q } , \hat{P } ) .   
\end{equation}
If $p_{\gamma } (q, y ) $ is the joint probability (density) of having pointer position $\hat{Q } = q $ while $\hat{Y } = y $, then the conditional pointer position density is given by     
\begin{equation} \label{eq:cond_pointer_density}   
p_{\gamma } (q | y ) := \frac{p_{\gamma } (q , y ) }{p_{\gamma } (y) } , \, p_{\gamma } (y ) = \int _{\mathbb{R } } p_\gamma(q, y ) dq ,    
\end{equation}   
and   
\begin{equation} \label{eq:proof_CCE_WV1}   
\mathbb{E } _{\hat\sigma_\gamma} (\hat{Q } | \hat{Y } = y ) = \int _{\mathbb{R } } q p_{\gamma } (q | y ) dq .          
\end{equation}   
We compute $p_{\gamma } (q, y ) $ and its derivative with respect to $\gamma . $ Since $e^{- i \gamma \hat{X } \otimes \hat{P }  } = \sum _{x \in \sigma (\hat{X } ) } e^{ - i \gamma x \hat P } \hat{\Pi }^{\hat{X } } _x $, we have   
\begin{equation} \label{eq:sigma_gamma}   
\hat \sigma _{\gamma } = \sum _{x, x' \in \sigma (\hat{X } ) } \hat{\Pi }  ^{\hat {X } } _x \hat{\rho } \hat{\Pi }^{\hat{X } } _{x' } \otimes | \varphi ( \cdot - \gamma x ) \rangle \langle \varphi (\cdot - \gamma x' ) | .   
\end{equation}   
It follows that   
\begin{equation} \label{eq:proof_CCE_WV2}   
p_{\gamma } (q, y ) = \sum _{x , x' } {\rm Tr } (\hat{\Pi }^{\hat Y } _y \hat{\Pi }^{\hat{X } } _x \hat{\rho } \hat{\Pi }^{\hat{X } } _{x' } )  \varphi (q - \gamma x ) \overline{\varphi (q - \gamma x' ) } .   
\end{equation}   
The derivative with respect to $\gamma $ in $\gamma = 0 $ is   
\begin{eqnarray*}   
\partial _{\gamma } p_{\gamma } (q, y ) |_{\gamma = 0 } 
&=& - \sum _{x, x' \in \sigma (\hat{X } ) }\left( x {\rm Tr } (\hat{\Pi }^{\hat{Y } } _y \hat{\Pi }^{\hat{X } } _x \hat{\rho } \hat{\Pi }^{\hat{X } } _{x' } )  \partial _q \varphi (q ) \overline{\varphi (q ) } + x' \, {\rm Tr } (\hat{\Pi }^{\hat{Y } } _y  \hat{\Pi }^{\hat{X } } _x \hat{\rho }  \hat{\Pi }^{\hat{X } } _{x' } )  \varphi (q ) \overline{\partial _q \varphi (q ) }\right) \\      
&=& - {\rm Tr } (\hat{\Pi }^{\hat{Y } } _y \hat{X } \hat{\rho } ) \, \partial _q \varphi (q ) \overline{\varphi (q ) } - {\rm Tr } (\hat{\Pi }^{\hat{Y } } _y \hat{\rho } \hat{X } ) \varphi (q ) \, \overline{\partial _q \varphi (q ) } \\   
&=& -  2 {\rm Re } \, \left( {\rm Tr } (\hat{\Pi }^{\hat{Y } } _y \hat{X } \hat{\rho } ) \, \overline{\varphi (q) } \partial _q \varphi (q) \right) ,       
\end{eqnarray*}  
and since $\overline{\varphi (q) } \partial _q \varphi (q)   = \frac{1 }{2 } \partial _q |\varphi (q) |^2 + i {\rm Re } \, \overline{\varphi (q) } \hat P\varphi (q) $, this equals
\begin{equation}   \label{eq:partial_gamma_pgamma}
\partial _{\gamma } p_{\gamma } (q, y ) |_{\gamma = 0 } = - {\rm Re } ( {\rm Tr } (\hat{\Pi }^{\hat{Y } } _y \hat{X } \hat{\rho } ) ) \, \partial _q |\varphi (q) |^2 + 2 {\rm Im } ({\rm Tr } (\hat{\Pi }^{\hat{Y  } } _y \hat{X } \hat{\rho } ) ) \, {\rm Re } ( \overline{\varphi (q) } \hat P\varphi (q) ) .   
\end{equation}   
Integrating with respect to $q $,   
\begin{equation} \label{eq:post_weak_interaction_y}   
\partial _{\gamma } p_{\gamma } (y) = \int \partial _{\gamma } p (q, y ) dq = 2 {\rm Im } ({\rm Tr } (\hat{\Pi }^{\hat{Y  } } _y \hat{X } \hat{\rho } ) ) \langle \varphi ,\hat P \varphi \rangle ,   
\end{equation}
Finally, differentiating (\ref{eq:cond_pointer_density}) while using that $p_0 (q, y )  = |\varphi (q) |^2 {\rm Tr } (\hat{\Pi }^{\hat{Y } } _y \hat{\rho } ) $ and $p_0 (y) = {\rm Tr } (\hat{\Pi }^{\hat{Y } } _y \hat{\rho } ) $, we find   
\begin{eqnarray} \label{eq:proof_CCE_WV3}   
\partial _{\gamma }  p(q | y ) |_{\gamma = 0 } &=& - {\rm Re } \, \mathbb{E }_{\hat{\rho } } (\hat{X} | \hat{Y } = y ) \partial _q |\varphi |^2 +    
2 {\rm Im } \mathbb{E }_{\hat{\rho } } (\hat{X } | \hat{Y } = y ) {\rm Re } (\overline{\varphi } \hat P \varphi ) \\   
&\ & - 2 {\rm Im } \mathbb{E }_{\hat{\rho } } (\hat{X } | \hat{Y } = y ) \langle \varphi , \hat{P } \varphi \rangle |\varphi |^2 . \nonumber   
\end{eqnarray}   
and (\ref{eq:thm_CCE_WV}) follows from (\ref{eq:proof_CCE_WV1}) after an integration by parts. Equation (\ref{eq:thm_CCE_WV_bis_P_2}) is proven similarly by differentiating $\mathbb{E }_{\hat\sigma_\gamma } ( f(\hat{Q } ) | \hat{Y } = y ) = \int f(q) p_{\gamma } (q | y ) dq . $    

We next turn to the proof of Eq.~\eqref{eq:thm_CCE_WVa_P_2}. This equation is shown as before by computing the derivative with respect to $\gamma$, a computation that we will carry in momentum representation. If we (somewhat idiosyncratically, but the letter $p $ is already used to designate probability densities) denote the conjugate (or dual) variable of $q $ by $k $ and let $\hat{\varphi } (k ) $ be the Fourier transform of $\varphi $ (normalized so as to be unitary on $L^2 (\mathbb{R } )$), then the probability density that $\hat{P } = k $ while $\hat{Y } = y $ is given by   
$$   
p_{\gamma } (k , y ) = 
\sum _{x, x' } {\rm Tr } (\hat{\Pi }^{\hat{Y } } _y \hat{\Pi }^{\hat{X } } _x \hat{\rho } \hat{\Pi }^{\hat{X } } _{x' }) e^{- i \gamma (x - x' ) k } |\hat{\varphi } (k ) |^2 ,   
$$   
and   
$$   
\partial _{\gamma } p_{\gamma } (k , y ) |_{\gamma = 0 } = 2 {\rm Im } {\rm Tr } (\hat{\Pi }^{\hat{Y } } _y \hat{X } \hat{\rho } ) \, k |\hat{\varphi } (k ) |^2 . $$   
If $p_{\gamma } (k | y ) = p_{\gamma } (k , y ) / p_{\gamma } (y ) $, with $p_{\gamma } (y) = \int p_{\gamma } (k , y ) dk $, then we find that  $$   
\partial _{\gamma } p_{\gamma } (k | y ) = 
{\rm Im } \, \mathbb{E }_{\hat{\rho } } (\hat{X } | \hat{Y } = y ) \bigl( k |\hat{\varphi } (k ) |^2 - \langle \varphi , \hat{P } \varphi \rangle |\hat{\varphi } (k ) |^2)   
$$   
(note that $\partial _{\gamma } p_{\gamma } (y) |_{\gamma = 0 } = 2 {\rm Im } {\rm Tr } (\hat{\Pi }^{\hat{Y } } _y \hat{X } \hat{\rho } ) \int k |\hat{\varphi } (k ) |^2 dk $). The theorem follows by differentiating 
$\mathbb{E }_{\hat\sigma_\gamma } (\hat{P } | \hat{Y } = y ) = \int k p_\gamma(k | y ) dk $ under the integral sign. Similarly for (\ref{eq:thm_CCE_WV_bis_P_2}) by differentiating $\int f(k) p_\gamma(k | y ) dk . $
\end{proof}
As we saw, therefore, the imaginary part of the weak values/quantum conditional expectations manifest themselves in the conditional $\hat Q$ and $\hat{P }$ measurements of the meter, provided the latter is prepared in appropriate states~\cite{Dressel_Jordan_2012_ImaginaryPartWeakValue}.

Equation~(\ref{eq:thm_CCE_WVa_Q_2}), with 
$\mathbb{E }_{\hat{\rho } } (\hat{X } | \hat{Y } = y ) $ replaced by its explicit expression ${\rm Tr } (\hat{\Pi }^{\hat{Y } } _y \hat{X } \hat{\rho } ) / {\rm Tr } (\hat{\Pi }^{\hat{Y } } _y \hat{\rho } ) $ 
goes back to the paper by Aharonov {\it et al.}~\cite{Aharonov88}, where the term  weak value is introduced. Equation~\eqref{eq:thm_CCE_WVa_P_2} has also been noted before~\cite{Dressel_Jordan_2012_ImaginaryPartWeakValue}. The measurements of $\hat Q$ or of $\hat P$ are said to be ``weak'' when the post-interaction state $\hat\sigma_\gamma$ is not much perturbed from the initial state. This will clearly happen in the extreme case when $|\varphi\rangle=|p=0\rangle$ is the momentum eigenstate with zero momentum, independently of how large $\gamma$ is. Since the momentum eigenstates are not physical, in practice, one mostly considers a centered Gaussian state with a large $\hat Q$-variance $\Delta$, which is therefore sharply localized in momentum about $p=0$. This leads then to a small perturbation of the initial state $\hat \rho\otimes |\varphi\rangle\langle \varphi|$ of the system plus probe under $\hat U_\gamma$, provided $\gamma$ is not too large, depending on $\Delta$. For a detailed analysis of the conditions needed, we refer to~\cite{Duck89}: a rule of thumb is that one needs $\gamma(x_{\max}-x_{\min})\ll \Delta$. For our purposes here, we only consider Taylor expansions of the relevant observers, to leading order, without explicitly controlling the error terms.

The definition of weak value as introduced in the literature depends on the measurement protocol described above. It has, \emph{a posteriori}, been interpreted as a conditional expectation because of its appearance in the low order expansions in $\gamma$ of the \emph{classical} conditional expectations associated to the joint probability distribution of the commuting observables $\hat Q$ and $\hat Y$. Our perspective is different: we have defined the quantum conditional expectation  $\mathbb E_{\hat\rho}(\hat X|\hat Y)$ as the solution to an appropriate minimization problem, in direct analogy with the conditional expectation in classical probability.  We then showed that the conditional expectation so defined has the natural properties expected from a conditional expectation. Finally, Eq.~\eqref{eq:thm_CCE_WVa_Q_2} and~\eqref{eq:thm_CCE_WV_bis_P_2}  give an operational meaning to this abstract definition.

The ``weak measurement'' terminology is somewhat confusing since the two measurements that are performed, of the meter position $\hat Q$ and of the system variable $\hat Y$ are both projective, hence  ``strong'' measurements, but it is the interaction between system and meter that is weak. Also, as a result, the information about the probability distribution $p(x)=\Tr(\hat\Pi_x^{\hat X}\hat\rho)$ obtained in this manner is partial, and the effect of the measurement on the system limited, or ``weak''.

We can go one step further and compute the term of order $\gamma^2$ for the conditional $\hat Q$ measurements. We define :
\begin{equation}\label{eq:def_E_cal_2}
    \mathcal{E }_2 (\hat X ; \hat \rho , y ) := \frac{{\rm Tr } \bigl( \hat{\Pi }^{\hat{Y } } _y \hat{X } \hat{\rho } \hat{X } \bigr) }{{\rm Tr } (\hat{\Pi }^{\hat{Y } } _y \hat{\rho } ) }\geq 0.
\end{equation}
Note that $ \mathcal{E }_2 (\hat X ; \hat \rho , y )$ already appears in~\cite{Ogawa_2021}.  
Its physical/probabilistic interpretation is not clear, except in the two following examples: first, when $[\hat X,\hat\rho]=0$ or when $[\hat X,\hat Y]=0$, then
\begin{equation*}
    \mathcal{E }_2 (\hat X ; \hat \rho , y ) = \mathbb{E}_{\hat\rho}(\hat X^2 | Y=y)
\end{equation*}
Second, when $\hat \rho = |\psi \rangle \langle \psi | $ is a pure state, and $\hat{Y } $ is a CSCO. In that case, $\hat \Pi ^{\hat Y } _y $ is the projection onto the joint eigenvector $\varphi _y $, and      
\begin{equation} \label{eq:E_cal_pure_2}   
\mathcal{E }_2 (\hat X ; \hat \rho , y ) = 
\frac{{\rm Tr } \,  \bigl( | \varphi _y \rangle \langle \varphi _y | \hat X | \psi \rangle \langle \psi | \hat X \bigr) }{| \langle \varphi _y , \psi \rangle |^2 } =   
\frac{| \langle \varphi _y , \hat X \psi \rangle |^2 }{|\langle \varphi _y , \psi \rangle |^2 } = \left | \mathbb{E }_{| \psi \rangle \langle \psi | } (\hat X | \hat Y = y ) \right |^2 ,  
\end{equation}   
as already observed in~\cite{Ogawa_2021}. Note that this is no longer true if $\hat \Pi ^{\hat Y } _y $ is of rank 2 or larger.
\begin{proposition} \label{prop:2nd_order_exp_E1}   
    If $\varphi$ is real-valued and as $\gamma \to 0$, we have that:
    \begin{eqnarray}\label{eq:SecondOrderQCE_Q}
    \mathbb{E }_{\hat\sigma_\gamma } (\hat Q | \hat Y = y ) &=& \langle \varphi , \hat Q \varphi \rangle + {\rm Re }( \mathbb{E}_{\hat\rho}(\hat X | \hat Y = y)) \cdot \gamma \\
    && \nonumber-\gamma^2\cdot\mathrm{Re}\left(\mathbb{E}_{\hat\rho}(\hat X^2 | \hat Y = y)-\mathcal{E }_2 (\hat X ; \hat \rho , y )\right) \mathrm{cov}_{\varphi}(\hat P ^2,\hat Q) + O(\gamma ^3)
\end{eqnarray}
\end{proposition}
Note that, when $[\hat X,\hat\rho]=0$ or when $[\hat X,\hat Y]=0$, the second order coefficient in  Eq.~\eqref{eq:SecondOrderQCE_Q} is zero, meaning that  Eq.~\eqref{eq:thm_CCE_WVa_Q_2} is in fact a second order approximation in this case. We also note that, when $\hat \rho = |\psi \rangle \langle \psi | $ is a pure state and $\hat{Y } $ is a CSCO, the second order coefficient in Eq.~\eqref{eq:SecondOrderQCE_Q} can be interpreted as the real part of a conditional variance, see Eq.~\eqref{eq:alternative cond variance}.

\begin{proof}
    We have already seen that for real-valued $\varphi $,   
    \begin{equation}
        \partial_{\gamma} p_{\gamma}(q,y)|_{\gamma = 0 } = -\mathrm{Re}\left(\mathrm{Tr}(\Pi_{y}^{\hat Y}\hat X\hat \rho)\right)\partial_{q}\varphi^2(q) = -p_{0}(y)\mathrm{Re}\left(E_{1}\right)\partial_{q}\varphi^2(q) ,   
    \end{equation}   
    since $p_0 (y) = {\rm Tr } (\hat{\Pi }^{\hat{Y } } _y \hat{\rho } ) . $ 
   Differentiating (\ref{eq:proof_CCE_WV2}) twice gives 
    \begin{eqnarray} \nonumber   
        \partial_{\gamma}^{2} p_{\gamma}(q,y)|_{\gamma = 0 } &=& 2\left(\mathrm{Re}\left(\mathrm{Tr}(\Pi_{y}^{\hat Y}\hat X^2\hat \rho)\right)\varphi(q)\partial_{q}^2\varphi(q)+\mathrm{Tr}(\Pi_{y}^{\hat Y}\hat X\hat \rho\hat X) \left(\partial_{q}\varphi(q)\right)^2 \right) \\   
        &=& 2p_{0}(y)\left(\mathrm{Re}\left(E_{2}\right)\varphi(q)\partial_{q}^2\varphi(q)+\mathcal{E}\left(\partial_{q}\varphi(q)\right)^2 \right) 
    \end{eqnarray}   
    where $E_{\nu } = \mathbb{E }_{\hat \rho } (\hat X ^{\nu } | \hat Y = y ) $ and $\mathcal{E } = \mathcal{E }_2 (\hat X ; \hat \rho , y )$. The second order Taylor expansion in $\gamma $ of $p_{\gamma}(q,y)$ at $\gamma=0$
    is thus given by:
    \begin{equation} \label{eq:2nd_order_p_gamma}   
        p_{\gamma}(q,y) = p_{0}(y)\left(\varphi^2(q)-\gamma\cdot\mathrm{Re}\left(E_{1}\right)\partial_{q}\varphi^2(q) + \gamma^2\left(\mathrm{Re}\left(E_{2}\right)\varphi(q)\partial_{q}^2\varphi(q)+\mathcal{E}\left(\partial_{q}\varphi(q)\right)^2 \right)\right) + O(\gamma^3).
    \end{equation}
    Integrating this expression with respect to $q$, we find   
    \begin{equation} \label{eq:2nd_order_p_gamma(y)}
         p_{\gamma}(y) = \int_{\R}  p_{\gamma}(q,y)dq =  p_{0}(y)\left(1 + \gamma^2\left(\mathcal{E}-\mathrm{Re}\left(E_{2}\right) \right)\langle \varphi,P^2\varphi\rangle\right) + O(\gamma^3) ,   
    \end{equation}  
    since   
    \begin{equation*}
        \int_{\R} \partial _q \varphi^2(q)dq=0 \mathrm{\ and \ } \int_{\R} \varphi(q)\partial_{q}^2\varphi(q)dq = - \int_{\R} (\partial_{q}\varphi(q))^2dq = -\langle P\varphi,P\varphi\rangle = -\langle \varphi,P^2\varphi\rangle.
    \end{equation*}
    We next compute the Taylor expansion of $\int_{\R}  qp_{\gamma}(q,y)dq$:
    \begin{equation}
        \int_{\R}  qp_{\gamma}(q,y)dq = p_{0}(y)\left(\langle \varphi, Q\varphi\rangle + \gamma\cdot\mathrm{Re}(E_1)+\gamma^2\cdot(\mathcal{E}-\mathrm{Re}(E_2)\langle\varphi,PQP\varphi\rangle\right)+O(\gamma^3)
    \end{equation}
    where we used, as before, that   
    \begin{equation*}
        \int_{\R} q\partial_{q}\varphi^2(q)dq=-\int_{\R} \varphi^2(q)dq = -1 ,
    \end{equation*}
   while   
    \begin{equation*}
    \int_{\R} q\varphi(q)\partial_{q}^2\varphi(q)dq = - \int_{\R} q(\partial_{q}\varphi(q))^2dq = -\langle P\varphi,QP\varphi\rangle = -\langle \varphi,PQP\varphi\rangle.
    \end{equation*}
    Putting everything together, we finally conclude that:
    \begin{eqnarray*}
        \mathbb{E } _{\hat\sigma_\gamma} (\hat{Q } | \hat{Y } = y )&=& \frac{\langle \varphi, Q\varphi\rangle + \gamma\cdot\mathrm{Re}(E_1)+\gamma^2\cdot(\mathcal{E}-\mathrm{Re}(E_2)\langle\varphi,PQP\varphi\rangle+O(\gamma^3)}{1 + \gamma^2\left(\mathcal{E}-\mathrm{Re}\left(E_{2}\right) \right)\langle \varphi,P^2\varphi\rangle + O(\gamma^3)} \\
        &=& \langle \varphi, Q\varphi\rangle + \gamma\cdot\mathrm{Re}(E_1) + \gamma^2\left(\mathcal{E}-\mathrm{Re}\left(E_{2}\right)\right)\left(\langle \varphi,PQP\varphi\rangle-\langle \varphi,P^2\varphi\rangle\langle \varphi,Q\varphi\rangle\right)\\
        &&+ O(\gamma^3) \\
        &=& \langle \varphi, Q\varphi\rangle + \gamma\cdot\mathrm{Re}(E_1) + \gamma^2\left(\mathcal{E}-\mathrm{Re}\left(E_{2}\right)\right)\mathrm{cov}_{\varphi}(P^2,Q)+ O(\gamma^3)
    \end{eqnarray*}
    where we used that :
    \begin{equation*}
        \langle \varphi,PQP\varphi\rangle = \frac{1}{2}\left(\langle \varphi,(P^2Q+QP^2)\varphi\rangle - \langle \varphi,[P,[P,Q]]\varphi\rangle\right) = \mathrm{Re}\left(\langle\varphi,P^2Q\varphi\rangle\right).
    \end{equation*}
    This ends the proof.
    \end{proof}

We note that if $\varphi $ is even, and in particular if it is a centered Gaussian, ${\rm cov } _{\varphi } (\hat P ^2 , \hat Q ) = 0 $ and the approximation of the mean pointer position by $\gamma {\rm Re } \, \mathbb{E }_{\hat \rho } (\hat X | \hat Y = y ) $ is of third order in $\gamma . $

\subsubsection{Small \texorpdfstring{$\gamma$}{E[...]} behavior of the conditional variance of the pointer position}   
   
We examine the small-$\gamma $ behaviour of the conditional variance of $\hat{Q } $, which is given by       
\begin{equation} \label{eq:def_cond_variance_gamma}   
{\rm var }_{\hat\sigma_\gamma } (\hat{Q } | \hat{Y } = y ) := \mathbb{E }_{\hat\sigma_\gamma } (\hat{Q }^2 | \hat{Y } = y ) - \mathbb{E }_{\hat\sigma_\gamma } (\hat{Q } | \hat{Y } = y ) ^2 ,   
\end{equation}     
Equation~\eqref{eq:thm_CCE_WV_bis_Q_2} allows us to quickly compute the first order expansion in $\gamma $ of this conditional variance.       
Its derivative   
$$   
\frac{d }{d \gamma } {\rm var }_{\hat\sigma_\gamma } (\hat{Q } | \hat{Y } = y ) = \frac{d }{d \gamma } \mathbb{E }_{\hat\sigma_\gamma } (\hat{Q }^2 | \hat{Y } = y ) - 2 \mathbb{E }_{\hat\sigma_\gamma } (\hat{Q } | \hat{Y } = y ) \frac{d }{d \gamma } \mathbb{E }_{\hat\sigma_\gamma } (\hat{Q } | \hat{Y } = y )   
$$   
in $\gamma = 0 $ becomes, on using~\eqref{eq:thm_CCE_WV_bis_Q_2},  
\begin{eqnarray*}   
&{\rm Re }(E_1) \cdot 2 \langle \varphi , \hat{Q } \varphi \rangle + 2 {\rm Im }(E_1) \cdot {\rm cov }_{\varphi } (\hat{Q }^2 , \hat{P } ) - 2 \underbrace{\langle \varphi , \hat{Q } \varphi \rangle }_{\mathbb{E }_{\hat\sigma_0} (\hat{Q } | \hat{Y } = y ) } \left( {\rm Re }(E_1)  + 2 {\rm Im }(E_1 ){\rm cov }_{\varphi } (\hat{Q } , \hat{P } ) \right) \\   
&= 2 {\rm Im }(E_1) \cdot \left( {\rm cov }_{\varphi } (\hat{Q }^2 , \hat{P } ) - 2 \langle \varphi , \hat{Q } \, \varphi \rangle {\rm cov }_{\varphi } (\hat{Q } , \hat{P } ) \right),   
\end{eqnarray*}
where we recall that $E_{1}$ stands for for the quantum conditional expectation  $\mathbb{E } (\hat X| \hat Y = y ) $.
We therefore have the following corollary, providing yet another interpretation for the imaginary part of the weak value:   
   
\begin{proposition}\label{cor:VarGammaOrder1} Assuming for simplicity that the initial pointer position $\langle \varphi , \hat{Q } \varphi \rangle = 0 $, the pointer's conditional variance has the weak-interaction expansion     
\begin{equation}   
{\rm var }_{\hat\sigma_\gamma } (\hat{Q } | \hat{Y } = y ) = \langle \varphi , \hat{Q }^2 \varphi \rangle + 2 \gamma {\rm Im } \, \mathbb{E }_{\hat{\rho } } (\hat{X } | \hat{Y } = y ) \cdot {\rm cov }_{\varphi } (\hat{Q }^2 , \hat{P } ) + O(\gamma ^2 )   
\end{equation}   
\end{proposition}   

The corresponding expression when $\langle \varphi , \hat{Q } \varphi \rangle \neq 0 $ follows on replacing $\hat{Q } $ by $\hat{Q } - \langle \varphi , \hat{Q } \varphi \rangle . $   
This corollary is interesting in that the (empirical) conditional variance of $\hat{Q } $ can be directly computed from experimental pointer position data alongside with the conditional expectation of $\hat{Q } $ itself. It provides an estimator for the imaginary part of the weak value, provided that ${\rm cov }_{\varphi } (\hat{Q }^2 , \hat{P } ) \neq 0 . $ This covariance is 0 when $\varphi $ is real but, perhaps somewhat  surprisingly, also for all Gaussian wave functions $\varphi $, even if these are complex, as a direct computation shows. In fact, if $\varphi = R e^{iS } $ then   
\begin{eqnarray*}
{\rm cov }_{\varphi } (\hat{Q }^2 , \hat{P } ) &=& \int q^2 {\rm Re } (\overline{\varphi } P \varphi ) dq - \int q^2 |\varphi |^2 dq \int {\rm Re } (\overline{\varphi } P \varphi ) \\   
&=& \int q^2 \left(\partial _q S \right)R^2 dq -\int q^2 R^2 ds \int \left( \partial _q S \right) R^2 dq ,   
\end{eqnarray*}   
since ${\rm Re } (\overline{\varphi } P \varphi ) = {\rm Im } (\overline{\varphi } \partial _q \varphi ) = (\partial _q S) R^2 . $ This is clearly 0 if $\varphi $ is real-valued or if $S $ is linear and $\partial _q S $ is a constant. A complex Gaussian wave-function can be written in the form   
$$   
\varphi = C e^{- \frac{1 }{2 } \alpha (q - q_0 )^2 + i p _0 q } ,   
$$   
with ${\rm Re } \, \alpha > 0 $, where $C = C_{{\rm Re } \alpha } $ is a normalization constant and $q_0 = \langle \varphi , \hat{Q } \varphi \rangle $ and $p _0 = \langle \varphi , \hat{P } \varphi \rangle $ are the mean position respectively momentum. 
If $q_0 = 0 $ then $\partial _q S = - {\rm Im } \, \alpha \, q + p_0 $, $R^2 = e^{- {\rm Re } \alpha q^2 } $ and ${\rm cov }_{\varphi } (\hat{Q }^2 , \hat{P } ) $ is easily found to be 0. We note in passing, with reference to (\ref{eq:thm_CCE_WVa_Q_2}), that ${\rm cov }_{\varphi } (\hat{Q } , \hat{P } ) = - {\rm Im } \alpha / 2 {\rm Re }  \alpha $ for such a Gaussian.   
\medskip   

If ${\rm cov }_{\varphi } (\hat{Q }^2 , \hat{P } ) = 0 $, then ${\rm var }_{\hat\sigma_\gamma } (\hat{Q } | \hat{Y } = y ) = {\rm var }_{\varphi} (\hat{Q } ) + O(\gamma ^2 ) $, showing that  to first order the system-meter interaction does not alter the variance of the position measurements $\hat{Q } $; in particular, it does not increase it and the $O(\gamma ) $ shift of the conditional mean of $\hat{Q } $ is not accompanied by a decrease in  precision. 
   
If we can prepare the meter in an initial state for which ${\rm cov }_{\varphi } (\hat{Q }^2 , \hat{P } ) < 0 $ then the conditional variance of $\hat{Q } $ would, to first order, {\it decrease} with $\gamma $, making the  measurement of $\hat{Q } $ (marginally) more accurate after the interaction than before. 

Using the same method and Equation~\eqref{eq:thm_CCE_WV_bis_Q_2}, one can derive the  small-$\gamma $ asymptotics of order $1$ for the conditional variance ${\rm var }_{\hat\sigma_\gamma }  (\hat{P } | \hat{Y } = y )$:  we leave the details to the reader.  
\medskip   

 We will now compute the second order correction to the conditional variance of the position operator $\hat Q$.   We allow general $\varphi $ but will assume these to be real, to simplify the calculations somewhat. As we have seen, this implies in particular that ${\rm cov }_{\varphi } (\hat{Q }^2 , \hat{P } ) = 0 $.

\begin{proposition} \label{prop:2nd_order_exp_variance}   
\label{thm:var_gamma_order2} Suppose $\varphi $ is real-valued. Then   
$$   
{\rm var }_{\hat\sigma_\gamma } (\hat Q | \hat Y = y ) = {\rm var }_{\varphi } (\hat Q ) + c_2 \gamma ^2 + O(\gamma ^3 ) 
$$   
with   
\begin{eqnarray} \label{eq:var_gamma_order2}   
c_2 &=& \ \ \mathcal{E }_2 (\hat X ; \hat \rho , y ) - {\rm Re } (\mathbb{E }_{\hat \rho } (\hat X | \hat Y = y ) )^2 \ + \\     &\ & \Bigl( \mathcal{E }_2 (\hat X; \hat \rho , y ) - {\rm Re } (\mathbb{E }_{\hat \rho } (\hat X ^2 | \hat Y = y  ) ) \Bigr) \cdot\left( {\rm cov }_{\varphi }  (\hat P ^2 , \hat Q^2 )-2\mathrm{cov}_{\varphi }  (P^2,Q)\langle\varphi,Q\varphi\rangle \right) \nonumber  
\end{eqnarray}

\end{proposition}   

\begin{proof}   

The second order Taylor expansion (\ref{eq:2nd_order_p_gamma}) of $p_{\gamma } (q, y ) $ implies that   
\begin{eqnarray}   
\int q^2 p_{\gamma } (q, y ) dq &=& p_0 (y) \left( \langle \varphi, \hat{Q }^2 \varphi \rangle + 2 \langle \varphi , \hat Q \varphi \rangle {\rm Re }(E_1 ) \cdot \gamma \right. \\   
&\ & \left . + ({\rm Re }(E_2 ) + (\mathcal{E } - {\rm Re }(E_2 )) ||\hat{Q } \hat{P } \varphi ||^2 ) \cdot \gamma ^2 + O(\gamma ^3 ) \right) , \nonumber   
\end{eqnarray}   
where we used that   
$$   
\int q^2 \partial _q \varphi ^2 dq = - 2 \int q \varphi ^2 dq = - 2 \langle \varphi , \hat Q \varphi \rangle ,   
$$   
and   
$$   
\int q^2 \varphi \partial _q ^2 \varphi dq = - \int 2 q \varphi \partial _q \varphi dq - \int q^2 (\partial _q \varphi )^2 dq = \int \varphi ^2 dq - || q \partial _q \varphi ||^2 = 1 - || \hat{Q } \hat{P } \varphi ||^2 .      
$$   
Dividing by $p_{\gamma } (y) = p_0 (y) \bigl( 1 + (\mathcal{E } - {\rm Re }(E_2 ) ) \, || \hat P \varphi ||^2 \cdot \gamma ^2 + O(\gamma ^3 ) \bigr) $ (cf. (\ref{eq:2nd_order_p_gamma(y)}) we see that   
\begin{eqnarray*}   
\mathbb{E }_{\hat\sigma_\gamma } (\hat{Q }^2 | \hat{Y } = y ) &=& \langle \varphi, \hat{Q }^2 \varphi \rangle + + 2 \langle \varphi , \hat Q \varphi \rangle {\rm Re }(E_1 ) \cdot \gamma + \\   
&\ & \left( {\rm Re }(E_2 ) + (\mathcal{E } - {\rm Re }(E_2 ) )( ||\hat{Q } \hat{P } \varphi ||^2 -  \langle \varphi, \hat{Q }^2 \varphi \rangle \langle \varphi , \hat{P }^2 \varphi \rangle ) \right) \cdot \gamma ^2 \\   
&\ & + O(\gamma ^3 ) .   
\end{eqnarray*}   
Since $|| \hat Q \hat P \varphi ||^2 = \langle \varphi, \hat P \hat Q ^2 \hat P \varphi \rangle $ and    
\begin{eqnarray*}   
\frac{1 }{2 } (P^2 Q^2 + Q^2 P^2 ) &=& \frac{1 }{2 } \left( P (Q^2 P + [P, Q^2 ] ) + (P Q^2 P + [Q^2 , P ] ) P \right) = P Q^2 P + \frac{1 }{2 } [P, [P, Q^2 ] ] \\   
&=& P Q^2 P - 1   
\end{eqnarray*}   
we find that the coefficient of $\gamma ^2 $ can be re-written as $\mathcal{E } + (\mathcal{E } - {\rm Re } (E_2 ) ) {\rm cov } _{\varphi } (\hat P ^2 , \hat Q ^2 ) . $ 
The proposition now follows from the definition (\ref{eq:def_cond_variance_gamma}) of conditional variance and the second order expansion of $\mathbb{E }_{\hat \sigma _{\gamma } } (\hat X | \hat Y = y ) $ of proposition \ref{prop:2nd_order_exp_E1}.   
\end{proof}   
   
Ogawa {\it et al.} in~\cite{Ogawa_2021} computed the second order correction to 
${\rm var }_{\hat\sigma_\gamma } (\cos \alpha \, \hat Q + \sin \alpha \, \hat P ) $ for arbitrary $\alpha $, but for Gaussian meter states $\varphi $ only.  
To make the connection with~\cite{Ogawa_2021}, let us again write      
$E_{\nu } $ for $\mathbb{E } (\hat X ^{\nu } | \hat Y = y ) $ and $\mathcal{E } $ for $\mathcal{E } _2 (\hat X ; \hat \rho , y) $, and put      
\begin{equation}   
\Gamma _{\varphi } := {\rm cov }_{\varphi }  (\hat P ^2 , \hat Q^2 )-2\mathrm{cov}_{\varphi }  (P^2,Q)\langle\varphi,Q\varphi\rangle   
\end{equation}  
Ogawa {\it et al.} define the {\it weak variance of $\hat{X } $} in the post-selected state $\varphi _y $ and the preselected state $\hat \rho $, by    
\begin{equation} \label{eq:Ogawa_wv}   
\sigma ^2 _w  := \sigma ^2 _{w, \hat \rho } (\hat X | \hat Y = y ) := \mathbb{E }^{\ell } _{\hat \rho } (\hat X ^2 | \hat Y = y ) - \mathbb{E }^{\ell } _{\hat \rho } (\hat X | \hat Y = y ) ^2 = E_2 - E_1 ^2 .    
\end{equation}   
The coefficient $c_2 $ of the second order term can then be re-written in terms of this weak variance as
\begin{equation}
    c_2 = \left( \mathcal{E } - ({\rm Re } \, E_1 )^2 \right) (\Gamma _{\varphi } + 1 ) + ({\rm Im } E_1 )^2 \Gamma _{\varphi } - {\rm Re } (\sigma _w ^2 )  \Gamma _{\varphi } 
\end{equation}   
If $\hat \rho $ is pure, $\mathcal{E } = |E_1 |^2 $ by (\ref{eq:E_cal_pure_2}) and   
\begin{equation}   
c_2 = {\rm Im } (E_1 )^2 \, (2 \Gamma _{\varphi } + 1 ) - {\rm Re } (\sigma _w ^2 ) \Gamma _{\varphi } .   
\end{equation}   
If we now specialize to $\varphi (q) = \pi ^{-1/4 } e^{- q^2 / 2 } $, a centered Gaussian, then $\Gamma _{\varphi } = {\rm cov }_{\varphi } (\hat P ^2 , \hat Q ^2 ) = - \frac{1 }{2 } $ by direct computation, and we recover \cite{Ogawa_2021}'s result that   
\begin{equation}\label{eq:OgawaResult}
    {\rm var }_{\hat\sigma_\gamma } (\hat Q | \hat Y = y ) = {\rm var }_{\varphi } (\hat Q ) + \frac{1 }{2 } {\rm Re } \, \sigma ^2 _w (\hat X ; \hat \rho , y ) \ \gamma ^2 + 0(\gamma ^3 ) .   
\end{equation}   
In fact, since $\Gamma _{\varphi } $ is invariant under dilations $\varphi (q)  \to \alpha ^{1/4 } \varphi (\alpha q ) $, $\alpha > 0 $ and can be written as   
$\Gamma _{\varphi } = {\rm cov }_{\varphi } \bigl( \hat P ^2 , (\hat Q - \langle \varphi , \hat Q \varphi \rangle )^2 \bigr) $, the result holds for arbitrary Gaussians $\varphi (q) = \pi ^{-1/4 } \alpha ^{1/4 } e^{- \alpha (q - \beta )^2 / 2 } $, where we can even allow complex $\alpha $ in the right half plane, ${\rm Re } \, \alpha > 0 $, by analytic continuation.    
The weak variance (\refeq{eq:Ogawa_wv}) does not satisfy the law of total variance of section 3: if we introduce a "weak variance operator" by   
\begin{equation}   
\hat \sigma ^2_{w, \hat \rho } (\hat X | \hat Y ) = \sum _{y \in \sigma (\hat Y ) } \sigma ^2 _{w, \hat \rho } (\hat X | \hat Y = y ) \, \hat \Pi ^{\hat Y } _y ,   
\end{equation}
then its expectation will in general not be equal to ${\rm var }(X) - {\rm var }_{\hat \rho } (\mathbb{E }_{\hat \rho } (\hat X | \hat Y ) $ (except when $\mathbb{E }_{\hat \rho } (\hat X | \hat Y ) $ happens to be Hermitian), unlike the conditional variances of section 3. It is natural to try and express the coefficient $c_2 $ in terms of the latter. We cannot use $\Delta ^{2, \ell } _{\hat \rho } (\hat X | \hat Y ) $, since this is 0 for pure states, but this is easily done in terms of $\tilde \Delta _{\hat \rho } ^{2, \ell } (\hat X | \hat Y ) . $  Dropping the $\ell $ from the notation, and writing    
\begin{equation}   
\tilde \Delta ^2 := \tilde{\Delta }^2 _{\hat \rho } (\hat X | \hat Y = y ) = \mathbb{E }_{\hat \rho } (\hat X ^2 | \hat Y = y ) - | \mathbb{E }_{\hat \rho } (\hat X | \hat Y = y ) | ^2 = E_2 - |E_1 |^2 .    
\end{equation}     
we find on replacing ${\rm Re }(E_2) $ by ${\rm Re }(\tilde \Delta ) + |E_1 |^2 $ in (\ref{eq:var_gamma_order2}) that   
\begin{equation}   
c_2 = (\mathcal{E } - {\rm Re } (E_1 ) ^2 ) (\Gamma _{\varphi } + 1 ) - {\rm Im } (E_1 )^2 \Gamma _{\varphi } - {\rm Re } (\tilde \Delta ^2 ) .   
\end{equation}   
In particular, for pure states we arrive at the following corollary of proposition \ref{prop:2nd_order_exp_variance}:   

\begin{corollary} \label{cor:alt_expr_c2} If $\hat \rho $ is pure, then   
\begin{equation} \nonumber   
    {\rm var }_{\hat\sigma_\gamma } (\hat Q | \hat Y = y ) = {\rm var }_{\varphi } (\hat Q ) + c_2 \gamma ^2 + O(\gamma ^3 ) ,   
\end{equation}   
with   
\begin{equation} \label{eq:alt_expr_c2}   
    c_2 = c_2(\hat X , \hat \rho , y ) = ({\rm Im } \, \mathbb{E }^{\ell } _{\hat \rho } (\hat X | \hat Y = y ) )^2 - {\rm Re } \bigl( 
    \tilde \Delta ^2 _{\hat \rho } (\hat X | \hat Y = y ) \bigr) \cdot \Gamma _{\varphi }
\end{equation}
\end{corollary}    
\medskip   
   
An interesting class of examples generalizing the Gaussians is that of the Hermite functions $h_n = h_n (q) $, which are the eigenfunctions of the Harmonic oscillator $\hat P ^2 + \hat Q ^2 . $ Using the well-known recurrence relations for the derivative of $h_n $ and for the multiplication of $h_n $ by $q $, 
$$   
h_n ' = \sqrt{\frac{n }{2 } } h_{n - 1 } - \sqrt{\frac{n + 1 }{2 } } h_{n + 1 } , \ \ q h_n = \sqrt{\frac{n }{2 } } h_{n - 1 } + \sqrt{\frac{n + 1 }{2 } } h_{n + 1 } ,   
$$   
together with their orthonormality, one easily shows that $\langle h_n , \hat Q h_n \rangle = 0 $, $\langle h_n , \hat Q ^2 h_n \rangle = \langle \varphi, \hat P ^2 \varphi \rangle = n + \frac{1 }{2 } $ and      
$$   
\Gamma _{h_n } = {\rm cov }_{h_n } (\hat P ^2 , \hat Q ^2 ) = - \frac{1 }{2 } (n^2 + n + 1 ) .        
$$   
This implies that, for pure states again,     
\begin{eqnarray*}      
    {\rm var }_{\hat\sigma_\gamma } (\hat Q | \hat Y = y ) &=&  n + \frac{1 }{2 } + \left( {\rm Im } (E_1 )^2  + \frac{1 }{2 } (n^2 + n + 1 ) \, {\rm Re } \, \tilde \Delta _2 \right) \cdot \gamma ^2 + O(\gamma ^3 ) .       
\end{eqnarray*}

\subsection{Conditioned measurements in the case of strong system-meter interactions}

We turn to the large $\gamma $- or strong interaction limit of the conditional expectations with respect to $\sigma _{\gamma } . $ It turns out that these are still given by quantum conditional expectations, but with respect to a new state, which is the post-interaction system state:  the system not only acts on the meter, but there is also a back-action of the meter on the system. We assume for simplicity that the initial reader-wave function is compactly supported, though this is by no means necessary.   
\begin{proposition} Suppose that $\varphi $ is compactly supported. Then for sufficiently large $\gamma $,   
\begin{equation} \label{eq:_LI_CE}
\mathbb{E }_{\hat\sigma_\gamma } (\hat{Q } | \hat{Y } = y ) = \langle \varphi, \hat{Q } \varphi \rangle + \gamma \mathbb{E }_{\hat{\rho } ^{\hat{X } } } (\hat{X } | \hat{Y } = y ) ,   
\end{equation}   
where   
\begin{equation}   
\hat{\rho }^{\hat{X } } := \sum _{x \in \sigma (\hat{X } ) } \hat{\Pi }^{\hat{X } } _x \hat{\rho } \hat{\Pi }^{\hat{X } }_x = {\rm Tr }_{L^2 (\mathbb{R } ) } (\hat{\sigma } _{\gamma } ),    
\end{equation}      
is the reduced post-interaction state of the system.   
\end{proposition}   

\noindent Note that $\mathbb{E }_{\hat{\rho } ^{\hat{X } } } (\hat{X } | \hat{Y } = y ) $ is real since $\hat{X } $ and $\hat{\rho } ^{\hat{X } } $ commute. This theorem shows that the weak value as usually defined can also naturally appear in scenarios where the interaction is not weak and corresponds to a quantum conditional expectation, as in the weak regime.  
 
\begin{proof} Since $\varphi $ is compactly supported, the supports of $\varphi (q - \gamma x ) $ and $\varphi (q - \gamma x' ) $ for $x \neq x' $ will be disjoint if $\gamma $ is sufficiently large, and therefore   
\begin{equation} \label{eq:LI_CCE_WV}   
p_{\gamma } (q, y ) = \sum _{x \in \sigma (\hat{X } ) } {\rm Tr } (\hat{\Pi }^{\hat Y } _y \hat{\Pi }^{\hat{X } } _x \hat{\rho } \hat{\Pi }^{\hat{X } } _x )  |\varphi (q - \gamma x ) |^2   
\end{equation}   
Integrating over $q $,   
$$   
p_{\gamma } (y) = \sum _{x \in \sigma (\hat{X } ) } {\rm Tr } (\hat{\Pi }^{\hat Y } _y \hat{\Pi }^{\hat{X } } _x \hat{\rho } \hat{\Pi }^{\hat{X } } _x ) = {\rm Tr } (\hat{\Pi } ^{\hat{Y } } _y \hat{\rho } ^{\hat{X } } )   
$$   
while   
\begin{eqnarray*}   
\int q \, p_{\gamma } (q, y ) dq &=& \sum _{x \in \sigma (\hat{X } ) } {\rm Tr } (\hat{\Pi }^{\hat Y } _y \hat{\Pi }^{\hat{X } } _x \hat{\rho } \hat{\Pi }^{\hat{X } } _x) \int (q + \gamma x ) |\varphi (q) |^2 dq \\   
&=& {\rm Tr } (\hat{\Pi } ^{\hat{Y } } _y \hat{\rho } ^{\hat{X } } ) \langle \varphi, \hat{Q } \varphi \rangle + \gamma \sum _x x {\rm Tr } (\hat{\Pi }^{\hat Y } _y \hat{\Pi }^{\hat{X } } _x \hat{\rho } \hat{\Pi }^{\hat{X } } _x ) \\   
&=& {\rm Tr } (\hat{\Pi } ^{\hat{Y } } _y \hat{\rho } ^{\hat{X } } ) \langle \varphi, \hat{Q } \varphi \rangle + \gamma {\rm Tr } (\hat{\Pi }^{\hat Y } _y \hat{X } \hat{\rho }^{\hat{X } } ) 
\end{eqnarray*}   
Taking the quotient proves (\ref{eq:_LI_CE}).   
Concerning the interpretation of $\hat{\rho }^{\hat{X } } $, formula (\ref{eq:sigma_gamma}) for $\hat{\sigma }_{\gamma } $ implies that for sufficiently large $\gamma $,    
\begin{eqnarray*}
{\rm Tr } _{L^2 (\mathbb{R } ) }(\hat \sigma_{\gamma}) &=& \sum _{x, x' \in \sigma (\hat{X } ) } \hat{\Pi }  ^{\hat {X } } _x \hat{\rho } \hat{\Pi }^{\hat{X } } _{x' } \int _{\mathbb{R } } \varphi ( q- \gamma x ) \overline{\varphi (q - \gamma x' ) } dq \\   
&=& \sum _{x \in \sigma (\hat{X }) }  \hat{\Pi }  ^{\hat {X } } _x \hat{\rho } \hat{\Pi }^{\hat{X } } _x \int _{\mathbb{R } } |\varphi ( q- \gamma x ) |^2 dq \\   
&=& \sum _{x \in \sigma (\hat{X }) }  \hat{\Pi }  ^{\hat {X } } _x \hat{\rho } \hat{\Pi }^{\hat{X } } _x   
\end{eqnarray*}   
\end{proof}   

If $\varphi $ is not compactly supported, but if $\varphi (q) $ tends to 0 at $\pm \infty $  while $\varphi (q) $ and $q \varphi (q) $ are bounded, the theorem remains true in the form      
$$   
\lim _{\gamma \to \infty } \frac{ \mathbb{E }_{\hat\sigma_\gamma } (\hat{Q } | \hat{Y } = y ) - \langle \varphi, \hat{Q } \varphi \rangle}{\gamma } = \mathbb{E }_{\hat{\rho }_{\hat{X } } } (\hat{X } | \hat{Y } = y ) .     
$$

We also mention two other interpretations of $\hat{\rho }^{\hat{X } } $: it is Umegaki conditional expectation of $\hat{\rho } $ relative to $\mathcal{F }_{\mathbb{C } , \hat{X } } $ and also the conditional expectation of $\hat{\rho } $ given $\hat{X } $ with respect to the maximally mixed state.   
\medskip   

The large $\gamma $-behaviour of the pointer momentum $\hat{P } $ is very different: if $\gamma $ is sufficiently large, then   
\begin{equation}
\mathbb{E }_{\hat\sigma_\gamma } (\hat{P } | \hat{Y } = y ) = \langle \varphi , \hat{P } \varphi \rangle ,   
\end{equation}   
independently of $\gamma $, $\hat{X } $ and $\hat{Y } $, 
as follows by taking the trace of $\hat{P } \hat{\Pi }^{\hat{Y} } \hat{\sigma }_{\gamma } $ with $\hat{\sigma }_{\gamma } $ given by (\ref{eq:sigma_gamma}) and using the disjointedness of the supports of the different translates of $\varphi . $ This result is the same as for unconditional $\hat{P } $-measurements, for which $\mathbb{E }_{\hat\sigma_\gamma } (\hat{P } ) = \langle \varphi , \hat{P } \varphi \rangle = \mathbb{E }_{\sigma_0} (\hat{P } ) $, indicating that there is no net momentum transfer from the system to the meter and vice versa.

\section{Quantum conditional expectation \textit{versus} von Neumann algebraic conditional expectation}\label{app:vonNeumann}

 According to the theory of von Neumann algebras, following Umegaki~\cite{umegaki1954}, a conditional expectation of $\mathcal{L}(\mathcal{H } ) $ given a subalgebra $\mathcal{A } $ is a trace-preserving projection map $\mathcal{E }_{\mcl{A}} : \mathcal{L}(\mathcal{H } ) \to \mathcal{A } $ which satisfies $\mathcal{E } _{\mcl{A}} (A_1 X A_2 ) = A_1 \mathcal{E }_{\mcl{A}}  (X) A_2 $ for all $A_1, A_2 \in \mathcal{A }$ (this notion is in fact introduced for type II von Neumann algebras in~\cite{umegaki1954}). The map $\mcl{E}_{\mcl{A}} $ is unique and, in the finite dimensional case, given by the orthogonal projection with respect to the Hilbert-Schmidt inner product, as can be easily verified. If $\mathcal{A } = \CompFuncObs{\hat Y}$ for a CSCO $\hat Y $, it is given by:
\begin{equation}\label{eq:VNexpectation}
     \mathcal{E}_{\hat Y}(\hat X):=\mcl{E}_{\CompFuncObs{\hat Y}} (\hat X)= \sum _{y \in \sigma (\hat{Y } ) } \hat{\Pi } ^{\hat{Y } } _y \hat{X } \hat{\Pi }^{\hat{Y } } _y.
\end{equation}

The left quantum conditional expectation $\mathbb{E }^{\ell}_{\hat{\rho } } (\hat{X } | \hat{Y } ) $ can be expressed in terms of the von Neumann algebraic conditional expectation $\mathcal{E }_{\hat{Y } } : \mathcal{L}(\mathcal{H } ) \to \mathcal{F }_{\mathbb{C } ,\hat{Y } } $ as   
\begin{equation}\label{eq:leftvsVNexp}   
\mathbb{E }^{\ell}_{\hat{\rho } } (\hat{X } | \hat{Y } ) = \mathcal{E }_{\hat{Y } } ( \hat{X } \hat{\rho } ) \mathcal{E }_{\hat{Y } } (\hat{\rho } )^{-1 }   
\end{equation}   
where $\mathcal{E }_{\hat{Y } }(\hat{\rho}) $ is invertible since $\hat{\rho } \in D_{\hat{Y } } $ (cf. \cite{Brummelhuis2025a} for an analogous formula for $\mathbb{E }^{\rm sa }_{\hat{\rho } } (\hat{X } | \hat{Y } ) $ when $\hat{X } $ is self-adjoint). This expression can be verified by direct computation or by using Theorem \ref{thm:UniqueEcond}: the right hand side  of Eq.\eqref{eq:leftvsVNexp} clearly satisfies Eq.~\eqref{eq:left invariance}, and Eq.~\eqref{eq:iterated expectation} is checked as follows:    
\begin{align*}   
{\rm Tr } (\mathcal{E }_{\hat{Y } } ( \hat{X } \hat{\rho } ) \mathcal{E }_{\hat{Y } } (\hat{\rho } )^{-1 } \hat{\rho } ) &=   
{\rm Tr } (\mathcal{E }_{\hat{Y } }  \bigl(\mathcal{E }_{\hat{Y } } ( \hat{X } \hat{\rho } ) \mathcal{E }_{\hat{Y } } (\hat{\rho } )^{-1 } \hat{\rho } ) \bigr) \\
&= {\rm Tr } \bigl( \mathcal{E }_{\hat{Y } } ( \hat{X } \hat{\rho } ) \mathcal{E }_{\hat{Y } } (\hat{\rho } )^{-1 } \mathcal{E }_{\hat Y} (\hat{\rho } ) \bigr) \\
&= {\rm Tr } \bigl( \mathcal{E }_{\hat{Y } } ( \hat{X } \hat{\rho } ) \bigr) \\
&= {\rm Tr } (\hat{X } \hat{\rho } ). 
\end{align*}  
One similarly checks that $\mathbb{E}^r_{\hat \rho}(\hat X|\hat Y)=\mathcal{E}_{\hat{Y}}(\hat \rho \hat X)\mathcal{E}_{\hat Y}(\hat \rho)^{-1}$. 

It is legitimate to wonder if, given a CSCO $\hat Y$ and a state $\hat \rho$, it is possible that the left and right conditional expectations coincide, for all $\hat X$. The answer  is provided by the following proposition. 

\begin{proposition}\label{prop:leftequalsright}  Let $\hat Y$ be a $\mathrm{CSCO}$ and let $\hat\rho\in D_{\hat Y}$. 
 Then the following are equivalent:   
\begin{enumerate}  
 \item [(i)] $\mathbb{E }^{\ell } _{\hat{\rho } } (\hat{X } | \hat{Y } ) = \mathbb{E }^r _{\hat{\rho } } (\hat{X } | \hat{Y } ) $, for all $\hat{X } $;
\item [(ii)] The left quantum conditional expectation $\mathbb{E }^{\ell } _{\hat{\rho } } (\cdot | \hat{Y } ) $ has the "right pull-out property" given by Eq.~\eqref{eq:right invariance}; 
\item [(iii)] $\hat{Y } $ and $\hat{\rho } $ commute.  
\end{enumerate}   
In that case, one has
\begin{equation} \label{eq:left_right_commute}
\mathbb{E } ^{\ell } _{\hat{\rho } } 
(\hat{X } | \hat{Y } ) = \mathcal{E}_{\hat Y}(\hat X)= \sum _{y \in \sigma (\hat{Y } ) } \hat{\Pi } ^{\hat{Y } } _y \hat{X } \hat{\Pi }^{\hat{Y } } _y=  \sum _{y \in \sigma (\hat{Y } ) } \langle \varphi_y, \hat{X }\varphi_y\rangle \hat{\Pi }^{\hat{Y } } _y.
\end{equation}
\end{proposition} 
We point out that Eq.~\eqref{eq:left_right_commute} is independent of the state $\hat \rho\in D_{\hat Y}$; it is in particular the conditional expectation of the maximally mixed state $\hat{\rho } = \frac{1 }{d }\id{d}$, where $d$ is the dimension of $\mathcal{H} . $ 
\begin{proof}
That (i) implies (ii) is immediate. We prove that (ii) implies (iii). If $\mathbb{E }_{\hat{\rho } }^{\ell} (\hat{X } | \hat{Y } ) $ satisfies (\ref{eq:right invariance}) for all $\hat{X } $ and $f $, then it is equal to the right quantum conditional expectation, by the characterization of Theorem~\ref{thm:UniqueEcond}. This means that for all $y $ and all $\hat{X } $,   
$$   
{\rm Tr } ( \hat{X } \hat{\rho } \hat{\Pi }^{\hat{Y } } _y ) = {\rm Tr } (\hat{X } \hat{\Pi } ^{\hat{Y } } _y \hat{\rho } )    
$$   
which implies that $\hat{\rho } \hat{\Pi }^{\hat{Y } } _y = \hat{\Pi } ^{\hat{Y } } _y \hat{\rho } $, i.e. $\hat{\rho } $ commutes with all spectral projectors of $\hat{Y } $ and therefore with $\hat{Y }$ . To show that (iii) implies (i) it suffices to note that  if $\hat{\rho } $ and $\hat{Y } $ commute, then the left- and right quantum conditional expectations coincide.

Finally, to show Eq.~\eqref{eq:left_right_commute}, we note that, as $\hat Y$ and $\hat \rho$ commute, $(\varphi ^{\hat{Y } } _y)_{y\in\sigma(\hat{Y})} $ are also eigenfunctions of $\hat{\rho } $ and their associated eigenvalues are not $0$ as $\hat{\rho } \in D_{\hat{Y } } $. Consequently, for all $y\in\sigma(\hat Y)$:  
\begin{equation} \label{eq:left_right_nondegenerate}
\frac{\langle \varphi ^{\hat{Y } } _y , \hat{X } \hat{\rho } \varphi ^{\hat{Y } } _y  \rangle}{\langle \varphi ^{\hat{Y } } _y, \hat{\rho } \varphi ^{\hat{Y } } _y \rangle } =  \langle \varphi ^{\hat{Y } } _y , \hat{X }  \varphi ^{\hat{Y } } _y  \rangle\frac{\langle \varphi ^{\hat{Y } } _y, \hat{\rho } \varphi ^{\hat{Y } } _y \rangle}{\langle \varphi ^{\hat{Y } } _y, \hat{\rho } \varphi ^{\hat{Y } } _y \rangle } =
\langle \varphi ^{\hat{Y } } _y , \hat{X } \varphi ^{\hat{Y } } _y \rangle .
\end{equation}
Thus, we obtain that :
\begin{equation*}
    \mathbb{E } ^{\ell } _{\hat{\rho } } (\hat X | \hat{Y } ) = \sum_{y\in\sigma(\hat Y)} \frac{\langle \varphi ^{\hat{Y } } _y , \hat{X } \hat{\rho } \varphi ^{\hat{Y } } _y  \rangle}{\langle \varphi ^{\hat{Y } } _y, \hat{\rho } \varphi ^{\hat{Y } } _y \rangle } \hat \Pi_{y}^{\hat Y} = \sum_{y\in\sigma(\hat Y)} \langle \varphi ^{\hat{Y } } _y , \hat{X } \varphi ^{\hat{Y } } _y \rangle \hat \Pi_{y}^{\hat Y} = \sum _{y \in \sigma (\hat{Y } ) } \hat{\Pi } ^{\hat{Y } } _y \hat{X } \hat{\Pi }^{\hat{Y } } _y.
\end{equation*}
\end{proof}
We finish by mentioning that, as a consequence of this result, the only state $\hat \rho$ that satisfies 
\begin{equation*}
    \mathbb{E}^{\ell}_{\hat \rho}(\hat X|\hat Y)=\mathbb{E}^{r}_{\hat \rho}(\hat X|\hat Y), \quad \forall \hat X\in \mathcal{L}(\mathcal{H}),\quad \forall \rm \hat Y \quad \rm CSCO,
\end{equation*}
is the maximally mixed state $\hat \rho=\frac{1}{d}\mathbb{I}_d . $ Indeed from Proposition \ref{prop:leftequalsright}, such a state must commute with all CSCOs, hence it must commute with all rank one projectors, thus with all of $\mathcal{L}(\mathcal{H})$. This implies that it is a multiple of the identity.

\section{Comparing the Wigner and KD distributions}\label{app:Wigner_KD}
 To illustrate our findings, and in particular the distinguishing features of the KD representations, we compare here the KD representation and the Gross-Wigner representation of a single qutrit with Hilbert space $\mathbb{C}^3$. 
 
 For that purpose we introduce  the ``position operator'' $\hat{Q} = 0\ket{0}\bra{0} + 1\ket{1}\bra{1} + 2\ket{2}\bra{2}$
    with eigenbasis  
   \begin{equation*}
       \ket{0}=
       \begin{pmatrix}
           1\\
           0\\
           0
       \end{pmatrix}
       ,\:\ket{1}=
       \begin{pmatrix}
           0\\
           1\\
           0
       \end{pmatrix}
       ,
       \:\ket{2}=
       \begin{pmatrix}
           0\\
           0\\
           1
       \end{pmatrix}
       ,
   \end{equation*}
   as well as the conjugate  ``momentum'' operator $\hat{P} = 0\ket{\hat{0}}\bra{\hat{0}} + 1\ket{\hat{1}}\bra{\hat{1}} + 2\ket{\hat{2}}\bra{\hat{2}}$, with eigenbasis
   \begin{equation*}
       \ket{\hat{0}}=\frac{1}{\sqrt{3}}
       \begin{pmatrix}
           1\\
           1\\
           1
       \end{pmatrix}
       ,\:\ket{\hat{1}}=\frac{1}{\sqrt{3}}
       \begin{pmatrix}
           1\\
           j\\
           j^2
       \end{pmatrix}
       ,
       \:\ket{\hat{2}}=\frac{1}{\sqrt{3}}
       \begin{pmatrix}
           1\\
           j^2\\
           j
       \end{pmatrix}
       ,\end{equation*}
   where $j=e^{i\frac{2\pi}{3}}$.   
   The transition matrix between these two bases is provided by the discrete Fourier transform in $3$ dimensions:
    \begin{equation*}
        U=\frac{1}{\sqrt{3}}
       \begin{pmatrix}
            1 & 1 & 1\\
            1 & j & j^2\\
            1 & j^2 & j
        \end{pmatrix}
        .\end{equation*}
We will consider the KD representation of quantum mechanics on $\mathcal H=\mathbb C^3$ with $\hat A=\hat P$ and $\hat B=\hat Q$. 
It was proven in~\cite{langrenez2023characterizing2} that in this case $\hat X\in V_{\mathrm{KD},\mathrm{real}}^{\mathrm{sa}}$ if and only if $\hat X=f(\hat P)+g(\hat Q)$. In other words, any such $\hat X$ can be written 
\begin{equation}
    \hat X =\sum_{i=1}^3 f_i|\hat i\rangle\langle \hat i|+\sum_{j=1}^3 g_j|j\rangle\langle j|, \quad  f_i, g_j\in\R.
\end{equation}
The same holds true for qudits, when $d$ is a prime number~\cite{langrenez2023characterizing2, debievreetal2025a}.

On the other hand, we recall the definition of the Gross-Wigner representation, defined in \cite{gross2006}. Let $\Lambda=\sigma(\hat P)\times \sigma(\hat Q)=\{0,1,2\}^2$. Consider the operators
\begin{equation}
   \label{eq:Gross Wigner frames}
       S_{p,q}^{\textrm W}=\frac{1}{3}\sum_{0\leq p',q'\leq 2}e^{i\frac{2\pi}{3}(pq'-qp')}e^{i\frac{\pi}{3}p'q'}\hat{z}(p')\hat{x}(q'),\quad 0\leq p,q\leq 2,
   \end{equation}
   where $\hat{z}(p),\hat{x}(q)$ are generalized Pauli operators for $(p,q)\in \Lambda$, satisfying
   \begin{equation*}
       \hat{x}(q)\ket{q'}=\ket{q'+q},\quad \hat{z}(p)\ket{q'}=e^{i\frac{2\pi}{3}pq'}\ket{q'},
   \end{equation*}
   for all $q'\in \{0,1,2\}$. It is readily checked that the family $(S_{p,q}^{\rm W})_{0\leq p,q\leq 2}$ is a frame, whose dual frame is given by
   \begin{equation}
       \quad T_{p,q}^{\rm W}=3S_{p,q}^{\rm W}, \quad 0\leq p,q\leq 2.
   \end{equation}
The Gross-Wigner function of a state $\hat{\rho}$ is the associated quasiprobability, given by
   \begin{equation*}
       W_{p,q}(\hat{\rho})=\Tr\left(\hat{\rho}\left(S_{p,q}^{\rm W}\right)^{\dagger}\right)=\frac{1}{3}\sum_{0 \leq\xi\leq2}e^{-i\frac{2\pi}{3}\xi p}\hat{\rho}_{q+\frac{\xi}{2},q-\frac{\xi}{2}}, \quad 0\leq p,q\leq 2.
   \end{equation*}
   The associated symbol of an operator $\hat X\in \mathcal{L}(\mathcal{H})$ is given by 
   \begin{equation*}
       \tilde{W}_{p,q}(\hat X)=\Tr\left(\hat{X}\left(T_{p,q}^{\rm W}\right)^{\dagger}\right)=\sum_{\xi}e^{-i\frac{2\pi}{3}\xi p}\hat{X}_{q+\frac{\xi}{2},q-\frac{\xi}{2}}=3W_{p,q}(\hat{X}), \quad 0\leq p,q \leq 2.
   \end{equation*}

 Theorem~\ref{thm:main_hatY} implies that the conditional expectation $\mathbb E_{\hat\rho}^{\mathrm W}(\hat X|\hat Q)$ does not coincide with 
$\mathbb E_{\hat\rho}^{\ell/\mathrm r}(\hat X|\hat Q)$ for all $\hat X$ and $\hat \rho$. Indeed, the frame of Eq.\eqref{eq:Gross Wigner frames} is not composed of rank one operators. For example, $S^{\rm W}_{0,0}$ is the parity operator satisfying $S_{0,0}^{\rm W}\ket{q}=\ket{-q}$, and is therefore of full rank. We now illustrate this fact with a specific example. We have 
   \begin{equation}
       \mathbb{E}_{\hat{\rho}}^{\textrm W}(\hat{X}|\hat{Q})=\sum_{p,q=0}^2\overline{\tilde{W}_{p,q}(\hat{X}^{\dagger})}\frac{W_{p,q}(\hat{\rho})}{\bra{q}\hat{\rho}\ket{q}}\hat{\Pi}_q^{\hat{Q}}.
   \end{equation}
    It is easily checked that the Gross-Wigner symbol satisfies $\tilde{W}_{p,q}(X^{\dagger})=\overline{\tilde{W}_{p,q}(X)}$ for any $X\in \mathcal{L}(\mathcal{H})$ and any $p,q\in \Lambda$. 
   It follows that
   \begin{equation}
       \mathbb{E}_{\hat{\rho}}^{\textrm W}(\hat{X}|\hat{Q})=\sum_{p,q=0}^2\tilde{W}_{p,q}(\hat{X})\frac{W_{p,q}(\hat{\rho})}{\bra{q}\hat{\rho}\ket{q}}\hat{\Pi}_q^{\hat{Q}}.
   \end{equation}
  Consider the state $\hat{\rho}$ given in the position basis by
   \begin{equation*}
       \hat{\rho}= \frac{1}{4}
       \begin{pmatrix}
           1 & 1 & 0\\
           1 & 2 & 0\\
           0 & 0 & 1
       \end{pmatrix}.
   \end{equation*}
One readily checks that $\hat \rho\in D_{\hat Q}$. The Gross-Wigner function of $\hat{\rho}$ is then given by
   \begin{equation*}
       (W_{p,q}(\hat{\rho}))_{0\leq p,q\leq 2}=\frac{1}{12}
       \begin{pmatrix}
           1 & 2 & 3\\
           1 & 2 & 0\\
           1 & 2 & 0
       \end{pmatrix}.
   \end{equation*}
The KD distribution of $\hat \rho$ is
     \begin{equation*}
       (Q^{\rm KD,\ell}_{a,b}(\hat \rho))_{0\leq a,b\leq 2}=\frac{1}{12}
       \begin{pmatrix}
           2 & 3&1\\
           1+j^2 & 2+j & 1\\
           1+j & 2+j^2 & 1
       \end{pmatrix}.
   \end{equation*} 
Note that $\hat \rho$ is Wigner-positive; in fact, it can be checked that $\hat\rho$ is  a convex combination of stabilizer states  but not KD-positive. We recall that the pure Wigner-positive  states are the so-called \textit{stabilizer} states and that the convex hull of the stabilizer states does not exhaust the set of all Gross-Wigner-positive states~\cite{gross2006}. We note that, in this particular case, the set of KD-positive states is the convex hull of the basis states~\cite{debievreetal2025a} and thus all KD-positive states are convex combinations of stabilizer states. Note that this is  no longer true for any system of $n$ qubits~\cite{thio2025}, for $n\geq 2$. Thus, the state $\hat \rho$ we consider is in the convex hull of the stabilizer states but outside of the convex-hull of pure KD-positive states.

 Now let $\hat X=f(\hat P)$ for $f:\sigma( \hat{P})=\{0,1,2\}\to \mathbb{C}$. Then its Gross-Wigner function is  
\begin{equation}\label{eq:wignerexample}
       (\tilde W_{p,q}(f(\hat{P})))_{0\leq p,q\leq 2}=
       \begin{pmatrix}
           f(0) & f(0) & f(0)\\
           f(1) & f(1) & f(1)\\
           f(2) & f(2) & f(2)
       \end{pmatrix}.
   \end{equation}
It follows from Eq.~\eqref{eq:wignerexample} that
\begin{equation}
\label{eq:GW Econd for f(p)}
    \mathbb{E}^{\rm W}_{\hat \rho}(f(\hat P)|\hat Q)=\frac{1}{3}(f(0)+f(1)+f(2))\left( \proj{0}+\proj{1}\right) + f(0)\proj{2}.
\end{equation}
Note that, when $f$ is real-valued, this is a self-adjoint operator. In fact, as the Gross-Wigner function of self-adjoint operators is real, the Gross-Wigner conditional expectation $\mathbb{E}_{\hat \rho}^{\rm W}(\hat X|\hat Q)$ is self-adjoint when $\hat{X}$ is self-adjoint. On the other hand, 
it is easily computed that
\begin{align}
    \mathbb{E}_{\hat{\rho}}^{\ell}(f(\hat P)|\hat Q)=&\frac{1}{3} (2f(0)+(1+j^2)f(1) + (1+j)f(2))\proj{0} \nonumber\\
    +& \frac{1}{6}(3f(0)+(2+j)f(1)+(2+j^2)f(2))\proj{1} \nonumber\\
    +& \frac{1}{3}(f(0)+f(1)+f(2))\proj{2}.\label{eq:example_condex}
\end{align}
Hence the Gross-Wigner-conditional expectation is indeed different from the left conditional expectation and therefore also from the left KD-conditional expectation. In particular, the latter  may have a non-zero imaginary part even if $f$ is a real valued function as is readily seen from Eq.~\eqref{eq:example_condex}.

To illustrate that the Gross-Wigner conditional expectation doesn't satisfy the pull-through property, we consider the function $f(p)=\sin\left(\frac{2\pi}{3}p\right)$ and the operator $\hat{X}=f(\hat P)$. From Eq.\eqref{eq:GW Econd for f(p)}, one has
\begin{equation*}
    \mathbb{E}^{\rm W}_{\hat \rho}(\hat X|\hat Q)=0.
\end{equation*}
On the other hand, one computes
\begin{equation*}
    (\tilde W_{p,q}(\hat{Q}\hat{X}))_{0\leq p,q\leq 2}=
       \frac{i\sqrt{3}}{4}
       \begin{pmatrix}
           -2 & 1 & 1\\
           1 & j^2 & j\\
           1 & j & j^2
       \end{pmatrix}
       ,
\end{equation*}
from which follows
\begin{equation*}
    \mathbb{E}^W_{\hat{\rho}}(\hat{Q}\hat{X}|\hat{Q})=\frac{i\sqrt{3}}{4}\proj{2}.
\end{equation*}
Hence
\begin{equation*}
    \mathbb{E}^W_{\hat{\rho}}(\hat{Q}\hat{X}|\hat{Q})\neq \hat{Q} \mathbb{E}_{\hat{\rho}}^W(\hat{X}|\hat{Q}),
\end{equation*}
which shows that the left pull-through property is not satisfied by the Gross-Wigner conditional expectation. One could similarly check that the right pull-through property isn't satisfied either.
 While the conditional expectations associated to the Wigner and Kirkwood-Dirac representations are different, they may share similarities for specific observables or states. To illustrate this fact, let us briefly go to the infinite dimensional setting, with $\mathcal{H}=L^2(\mathbb{R})$. Let us denote by $W$ the standard Wigner distribution and by $Q^{\mathrm{KD}}$ the Kirkwood-Dirac distribution associated to the position and momentum operators $\hat Q$ and $\hat P$. It is easily computed that for any pure state $\psi\in L^2(\mathbb{R})$ and any $q\in \mathbb{R}$,
\begin{equation*}
    \mathbb{E}_{\psi}^{W}(\hat P|\hat Q=q)=\frac{\mathrm{Re}\left(i\psi(q)\overline{\psi'(q)}\right)}{|\psi(q)|^2}, \quad \mathbb{E}_{\psi}^{Q^{\mathrm{KD}}}(\hat P|\hat Q=q)=\frac{i\psi(q)\overline{\psi'(q)}}{|\psi(q)|^2},
\end{equation*}
so that the Wigner conditional expectation of $\hat P$ knowing $\hat Q$ is exactly is the self-adjoint part of the Kirkwood-Dirac conditional expectation. This property does not hold in general, as
\begin{equation*}
    \mathbb{E}_{\psi}^{W}(\hat P^2|\hat Q=q)=\frac{|\psi'(q)|^2-\mathrm{Re}\left(\psi''(q)\overline{\psi(q)}\right)}{2|\psi(q)|^2}, \quad \mathbb{E}_{\psi}^{Q^{\mathrm{KD}}}(\hat P^2|\hat Q=q)=\frac{-\psi''(q)\overline{\psi(q)}}{|\psi(q)|^2},
\end{equation*}
hence $\mathbb{E}_{\psi}^W(\hat P^2|\hat Q)\neq \mathrm{Re}\left(\mathbb{E}_{\psi}^{Q^{\mathrm{KD}}}(\hat P^2|\hat Q)\right)$ in general.

\section{Interpretation of phase insensitivity}\label{app:vanishingFI}
In this Appendix we briefly explore the physical and operational meaning of a vanishing Fisher information.  We will in particular give an  alternative interpretation of the Fisher information that is independent of parameter estimation and the Cramer-Rao bound and as such useful in the following discussions.

 First, remark that the Cramer-Rao bound seems to indicate that a vanishing Fisher information implies that the variance of any estimator of $\theta$ is infinite. This however, is not possible since the set $\sigma(\hat Y)$ is finite, so that any random variable on $\sigma(\hat Y)$ necessarily has a finite variance. In fact, $\mathrm{I}_{\mathrm F}(\hat Y;0)=0$ is equivalent to $\partial_\theta p(y, 0)=0$, for all $y\in\sigma(\hat Y)$. But, for an unbiased estimator $\tilde\theta(\hat Y)$, one has
\begin{equation}
    \sum_y \tilde\theta(y)p(y;\theta) =\theta,
\end{equation}
for all $\theta$ in a neighbourhood of $\theta=0$. Taking a derivative with respect to $\theta$ on both sides of this equation, one immediately concludes that such an unbiased estimator does not exist if $\partial_\theta p(y, 0)=0$, for all $y\in\sigma(\hat Y)$. In other words, one concludes that repeated  measurements of $\hat Y$ cannot be used to produce an unbiased estimate on the phase $\theta$ of the states $\hat\rho_{\hat X}(\theta)$.

A different interpretation of the Fisher information, not directly related to parameter estimation, sheds light on this situation, and further justifies the above definition of phase-insensitive state. We consider the set of all probabilities $(p_y)_{y\in\sigma(\hat Y)}$ on $\sigma(\hat Y)$, which form a convex polytope. We will suppose $p_y\neq 0$ for all $y$, meaning that $p_y$ does not lie on any of the facets of the probability polytope. When such a probability is experimentally determined, it is  natural to  think of the experimental errors $(\delta p_y)_{y\in\sigma(\hat Y)}$ as forming a tangent vector to the set of probability distributions on the spectrum of $\hat Y$, at the probability distribution $(p_y)_{y\in\sigma(\hat Y)}$. Given two such tangent vectors $(\delta p_y)_{y\in\sigma(\hat Y)}, (\delta p_y')_{y\in\sigma(\hat Y)}$ , one may consider the Riemannian metric ({\it i.e.} the symmetric positive definite form)
\begin{equation}
    \langle(\delta p), (\delta p')\rangle_{\mathrm{F}}=\sum_{y\in\sigma(\hat Y)}\delta p_y \frac1{p(y)}\delta p_y',
\end{equation}
known as the Fisher metric. This transforms the probability polytope into an Riemannian manifold. The square of the Fisher norm of $\delta p_y$ is naturally interpreted as the mean squared relative error of the experimentally determined probabilities:
\begin{equation}
     \|\delta p\|^2_{\textrm{F}}=\langle \delta p,\delta p\rangle_{\mathrm{F}}=\sum_{y\in\sigma(\hat Y)} \left(\frac{\delta p_y}{p(y)}\right)^2 p(y).
\end{equation}
Next, we consider the curve $\theta\mapsto (p(y;\theta))_y$, where $p(y;\theta)=\Tr(\hat\rho_{\hat X}(\theta) \hat \Pi^{\hat Y}_y)$. One then sees that 
\begin{equation} 
\mathrm{I}_{\mathrm{F}}(\hat Y;\theta)=\left| \frac{\mathrm d p}{\mathrm d \theta}\right|^2_{\mathrm{F}}.
\end{equation}
This interprets the Fisher information as the squared relative rate of change of the probabilities $p(y;\theta)$, as a result of 
a variation in the phase $\theta$. As such, $\mathrm{I}_{\mathrm{F}}(\hat Y; \theta)$ measures the sensitivity of $p(y;\theta)$ to variations of $\theta$. In fact, for small $\delta\theta$, one has that 
\begin{equation}
\|p(\theta+\delta\theta)-p(\theta)\|_{\mathrm F}\simeq \mathrm{I_F}(\hat Y;\theta)^{1/2}\delta\theta.
\end{equation}
One concludes that, for a variation $\delta \theta$ to induce an experimentally noticeable change in the relative changes of the probabilities $p(y;\theta)$, this variation must exceed the mean square relative experimental error, meaning that 
\begin{equation}
\delta \theta^2 \mathrm{I}_{\mathrm{F}}(\hat Y;\theta)\geq  \|\delta p\|^2_{\mathrm F},\quad \textrm{or}\quad \delta \theta^2 \geq \delta\theta_{\textrm{min}}^2:= \frac{\|\delta p\|^2_{\mathrm F}}{ \mathrm{I}_{\mathrm{F}}(\hat Y;\theta)} .
\end{equation}
In other words, only when $\delta\theta$ exceeds $\delta\theta_{\min}$ is one sure that the probability distribution $p(y,\theta+\delta\theta)$ leaves the ball of Fisher radius $\|\delta p\|_{\mathrm F}$ around $p(y;\theta)$, thereby making the change in $\theta$ experimentally detectable. From this point of view, the probability distributions $p(y;\theta)$ are indeed phase-insensitive when the Fisher information is small: $\delta\theta_{\min}$ is then large. This interpretation does not refer to the estimation of $\theta$.

Conversely, one may also conclude that, given the experimental precision $\delta p_y$,  it is not possible to infer $\theta$ with a precision smaller than $\delta\theta_{\mathrm{min}}$. In other words, when the Fisher information is large, a small change in the phase $\theta$ will be experimentally noticeable in variations of the Born probabilities $p(y;\theta)$. Under these same circumstances, one can also expect to determine $\theta$ with high accuracy: this is the idea behind the Cramer-Rao bound. 

If $\mathrm{I_F}(\hat Y;\theta)=0$, on the other hand, $\partial_\theta p(y,0)$ vanishes for all $y$: variations of $\theta$ have then little effect on the measurement outcome probabilities $p(y,\theta)$. A second order Taylor expansion of $p(y,\theta) $ then yields analogously 
\begin{equation}
    \delta\theta^2\geq \delta \theta_{\textrm{min}}^2:=\frac{2\|\delta p\|_{\mathrm F}}{J(\hat Y;\theta)},
\end{equation}
where 
\begin{equation}
    J(\hat Y;\theta)=\sum_{y\in\sigma(\hat Y)}\partial_\theta^2 p(y,\theta)\frac1{p(y,\theta)}\partial_\theta^2 p(y,\theta),
\end{equation}
assuming not all $\partial_\theta^2 p(y,\theta)$ vanish. Note that in this case $\delta\theta_{\mathrm{min}}^2$ behaves like $\|\delta p\|_{\mathrm F}$ and is therefore larger than when the Fisher information does not vanish. The state is therefore less sensitive to variations in $\theta$ when the Fisher information vanishes than when it does not.

It should be noted that, in what precedes, we have only invoked the classical Fisher information of the $\theta$-dependent probabilities $p(\theta, y)$. We briefly elaborate on the link with the quantum Fisher information of the states $\hat\rho_{\hat X}(\theta)$ in Appendix~\ref{app:Fisher} where we will use the techniques developed here concerning the quantum conditional expectation to give a simple proof of the result of Braunstein and Caves~\cite{braunsteincaves1994} linking the definition of quantum Fisher information of Helstrom and Holevo~\cite{helstrom1976, holevo1982} to the optimization of classical Fisher informations over all possible quantum measurements.

\section{Quantum Fisher information} \label{app:Fisher}

Quantum conditional expectations can be used to give a quick proof of the Braunstein and Caves theorem expressing the Quantum Fisher information as a supremum of classical Fisher informations. To be able to do so in full generality we will have to extend the definition of the Fisher information $\mathrm{I}_{\mathrm{F}} (\hat Y;\theta ) $ to cases when $\hat{\rho } \notin D_{\hat{Y } } $ and also revisit the quantum conditional expectations for such $\hat{\rho } . $

First of all, if $\hat \rho (\theta ) $ is an arbitrary $\mathcal{C}^1 $-family of density operators on $\mathcal{H } $, not necessarily of the form considered in section 3, and if $\hat{Y } $ is an arbitrary observable, we define the Fisher information by   
\begin{equation}   
\mathrm{I}_{\mathrm{F}} (\hat Y;\theta ) := \sum _{y : p(y ; \theta ) \neq 0 } \frac{ (\partial _{\theta } p (y ; \theta ) )^2 }{p(y ; \theta ) }  
\end{equation}   
where $p(y; \theta ) := {\rm Tr } (\hat{\Pi }^{\hat{Y } } _y \hat{\rho } (\theta ) ) $ and the sum extends over elements $y $ of the spectrum $\sigma (\hat{Y }) $, here and below.      
   
\begin{rmk} That this is the correct definition of the Fisher information when $\hat{\rho } \notin D_{\hat{ Y } } $ follows from the continued validity of the Cram\`er-Rao inequality for the variance of an unbiased estimator $\theta (\hat{Y } ) $ for $\theta $, which itself is a consequence of the identity $\sum _{y: p(y; \theta ) \neq 0 } \bigl( \theta (y) - \theta \bigr) \partial _{\theta } p(y; \theta ) = 1 . $ The latter follows as usual from differentiating the unbiasedness condition $\sum _y \theta (y) p(y ; \theta ) = \theta $ and using that $\sum _y \partial _{\theta } p (y ; \theta ) = 0 . $ Since $p(y, \theta ) = 0 $ implies that $\partial _{\theta } p(y ; \theta ) = 0 $, the sum extends only over $y $'s for which $p(y ; \theta ) \neq 0 . $   
\end{rmk}   

Turning to quantum conditional expectations, as noted in Remark~\ref{rmk:QCE}, minimizers $f_0 (y) $ of ~\eqref{eq:Q_min_prob_left}) (or minimizers of \eqref{eq:Q_min_prob_right}) 
are no longer unique (all the minimizers are given in \eqref{eq:QCE_degenerate}). However, $\mathbb{E } _{\hat{\rho } } ( |f_0 (\hat{Y } ) |^2 ) $ is independent of the 
choice of minimizer and given by      
\begin{equation}   
\mathbb{E }_{\hat{\rho } } \bigl( |\mathbb{E }^{\ell , 0 } _{\hat{\rho } } (\hat{X } | \hat{Y } ) |^2 ) = \sum _{{\rm Tr } (\hat{\Pi }^{\hat{Y } } _y \hat{\rho } ) \neq 0 } \frac{ |{\rm Tr } (\hat{\Pi }^{\hat{Y } } _y \hat{X } \hat{\rho } ) |^2 }{ {\rm Tr } (\hat{\Pi }^{\hat{Y } } _y \hat{\rho } ) } ,      
\end{equation}    
where it is convenient to introduce a distinguished minimizer   
\begin{equation}   
\mathbb{E }^{\ell , 0 } _{\hat{\rho } } (\hat{X } | \hat{Y } ) := \sum _{{\rm Tr } (\hat{\Pi }^{\hat{Y } } _y \hat{\rho } ) \neq 0 } \frac{ {\rm Tr } (\hat{\Pi }^{\hat{Y } } _y \hat{X } \hat{\rho } ) }{{\rm Tr } (\hat{\Pi }^{\hat{Y } } _y \hat{\rho } ) } \hat{\Pi }^{\hat{Y } } _y ,   
\end{equation}   
which is the minimizer with minimal Hilbert-Schmidt norm. We have that $\mathbb{E }_{\hat{\rho } } \bigl( |\mathbb{E }^{\ell , 0 } _{\hat{\rho } } (\hat{X } | \hat{Y } ) |^2 ) \leq \mathbb{E }_{\hat{\rho }} (\hat{X }^{\dagger } \hat{X } ) $ with equality if $\hat{Y } = \hat{X } . $ Similarly, if we define   
$\mathbb{E }^{{\rm sa }, 0 } _{\hat{\rho } } (\hat{X } | \hat{Y } ) = {\rm Re } \, \mathbb{E }^{\ell , 0 } _{\hat{\rho } } (\hat{X } | \hat{Y } ) $, then for self-adjoint $\hat{X } $, $\mathbb{E }_{\hat{\rho } } \bigl( |\mathbb{E }^{{\rm sa } , 0 } _{\hat{\rho } } (\hat{X } | \hat{Y } ) |^2 )\leq \mathbb{E }_{\hat{\rho } } (\hat{X } ^2 ) $, with equality if $\hat{Y } = \hat{X } . $ This will be used below for the proof of Braunstein-Caves.

To finally make the connection with the quantum Fisher information, recall that a {\it symmetric logarithmic derivate} of a $C^1 $-family $\hat{\rho } (\theta ) $ is defined to be any self-adjoint operator $\hat{L } (\theta ) $ satisfying     
$$   
\partial _{\theta } \hat{\rho } (\theta ) = \frac{1 }{2 } \{ \hat{L } (\theta )    
, \hat{\rho } (\theta ) \} ,      
$$   
where $\{ \hat{C } , \hat{D } \} = \hat{C } \hat{D } + \hat{D } \hat{C} $ is the anti-commutator~\cite{helstrom1976,holevo1982,braunsteincaves1994,paris2009}. In general, if $\hat{R } \geq 0 $ and $\hat{S } $ are self-adjoint operators on $\mathcal{H } $, the operator (or matrix) equation $\hat{L } \hat{R } + \hat{R } \hat{L } = \hat{S } $ has a solution iff $\hat{\Pi }^{\hat{R } } _0 \hat{S } \hat{\Pi }^{\hat{R } } _0 = 0 $, where $\hat{\Pi }^{\hat{R } } _0 $ is the orthogonal projection onto the kernel of $\hat{R } . $ If $|\psi _j\rangle $ is an orthogonal basis of eigenvectors of $\hat R $ with eigenvalues $r_j $ (repeated according to their multiplicity), then the general solution is given by   
\begin{equation}
\label{eq:log derivative}
\hat{L } = \sum _{r_j + r_k > 0 } \frac{ \langle \psi _j , \hat{S } \psi _k \rangle }{r_j + r_k } | \psi _j \rangle \langle \psi _k | + \hat{\Pi }^{\hat{R } } _0 \hat{C } \hat{\Pi }^{\hat{R } } _0 ,   
\end{equation}
where $\hat{C} \in \mathcal{L}(\mathcal{H } ) $ is arbitrary. Since $\hat{S } $ is self-adjoint, $\hat{L } $ is self-adjoint iff the final term on the right is, and $\mathbb{E }(\hat{L }^2 \hat{R } ) $ is independent of the choice of self-adjoint solution $\hat{L }$. 

\medskip   
   
We can apply this with $\hat{R } = \hat{\rho } (\theta ) $ and $\hat{S } = \partial _{\theta } \hat{\rho } (\theta ) . $ The necessary and sufficient condition for solvability of the defining equation for $\hat{L } (\theta ) $ is automatically satisfied, and a consequence of the positivity of $\hat{\rho } (\theta) $: suppose that ${\rm Ker } (\hat{\rho } (\theta _0 ) ) \neq 0 $, and let $\hat{\Pi } _0 (\theta _0 ) := \hat{\Pi } ^{\hat{\rho } (\theta _0 ) } _0 $ be the corresponding orthogonal projector. Then the positive (= non-negative) function $\langle v , \hat{\Pi }_0 (\theta _0 ) \hat{\rho } (\theta ) \hat{\Pi }_0 (\theta _0 ) v \rangle $ vanishes in $\theta = \theta _0 $, for any vector $v \in \mathcal{H } $, which implies that its derivative in $\theta _0 $ vanishes :    $$   
\langle v , \hat{\Pi } _0 (\theta _0 ) \partial _{\theta } \hat{\rho } (\theta_0 ) \hat{\Pi} _0  (\theta _0 ) v \rangle = 0 ,   
$$   
Since this holds for arbitrary $v $,  $\hat{\Pi }_0 (\theta _0 ) \partial _{\theta } \hat{\rho } (\theta _0 ) \hat{\Pi } _0 (\theta _0 ) = 0 $, by the usual polarisation argument.      

We can now, following Helstrom \cite{helstrom1976}, define the {\it quantum Fisher information} or QFI by   
\begin{equation}   
\mathrm{I}_{\mathrm{QF}} (\theta ) = \mathbb{E }_{\hat{\rho }(\theta) } (\hat{L } (\theta )^2 ) ,        
\end{equation}
where the expectation on the right is, as we have seen, independent of the choice of symmetric logarithmic derivative. Helmstrom introduced the notion of symmetric logarithmic derivative, and used it to prove the quantum Cram\`er Rao bound that the variance of any unbiased estimator $\theta (\hat{Y } ) $ of $\theta $, constructed from (observations of) some observable $\hat{Y } $, is bounded from below by $\mathrm{I}_{\mathrm{QF}} (\theta )^{-1 } $, by adapting the proof of the classical Cram\`er-Rao bound: see also \cite{holevo1982}.

Symmetric logarithmic derivatives can now be used to express the classical Fisher information $\mathrm{I}_{\mathrm{F}} ( \hat{Y } , \theta  ) $ in term of the quantum conditional expectations.  

\begin{proposition} If $\hat{L }(\theta ) $ is a symmetric logarithmic derivative of the $C^1 $-family of states $\hat{\rho } (\theta ) $, then        
\begin{equation} \label{eq:FI_SLD}     
\mathrm{I}_{\mathrm{F}} ( \hat{Y } , \theta  ) = \mathbb{E }_{\hat{\rho }(\theta) } \bigl(  \mathbb{E }_{\hat{\rho }(\theta) } ^{{\rm sa }, 0 } (\hat{L } (\theta ) | \hat{Y } )^2 \bigr)   
\end{equation}   
\end{proposition}   

\begin{proof} If $p(y, \theta ) \neq 0 $, then         
$$   
\partial _{\theta } \log p (y ; \theta ) = \frac{{\rm Tr } (\hat{\Pi }^{\hat{Y } } _y \partial _{\theta } \hat{\rho }(\theta) ) }{{\rm Tr } (\hat{\Pi }^{\hat{Y } } _y \hat{\rho }(\theta) ) } =   
\frac{{\rm Re } \, {\rm Tr } (\hat{\Pi }^{\hat{Y } } _y \hat{L }(\theta)\hat{\rho }(\theta) ) }{{\rm Tr } (\hat{\Pi }^{\hat{Y } } _y \hat{\rho }(\theta) ) } 
= \mathbb{E }_{\hat{\rho }(\theta) } ^{{\rm sa } , 0 } (\hat{L }(\theta) | \hat{Y } = y ),      
$$   
from which the formula follows after squaring, multiplying by $p(y; \theta ) $ and summing over the $y $'s for which $p(y; \theta ) \neq  0 . $
\end{proof}   

Since the expectation on the right hand side of (\ref{eq:FI_SLD}) is maximized if $\hat{Y } = \hat{L }(\theta) $, we obtain the Braunstein and Caves theorem as an immediate corollary:   

\begin{corollary} 
\label{cor:Braunstein Caves}(\cite{braunsteincaves1994}) If $\hat{\rho }(\theta) $ is a $C^1 $-family of non-singular states, then     
\begin{equation}   
\max _{\hat{Y } \, {\rm observ. } } \mathrm{I}_{\mathrm{F}} ( \hat{Y } ;\theta ) = \mathbb{E }_{\hat{\rho } (\theta ) } \bigl( \hat{L } (\theta )^2 \bigr) = \mathrm{I}_{\mathrm{QF}} (\theta ) .     
\end{equation}     
\end{corollary}   

The reason we cannot just work with quantum conditional expectations for which $\hat{\rho } \in D_{\hat{Y } } $ is that it is not necessarily true that $\hat{\rho }(\theta ) \in D_{\hat{L } (\theta ) } $, preventing us to then take $\hat{Y } = \hat{L }(\theta ) . $ For example, if $\hat{\rho } (\theta ) = | \psi (\theta ) \rangle \langle \psi (\theta ) | $ is a family of pure states with $\psi (\theta ) $ a $\mathcal{C}^1 $-family of unit vectors, then $\hat{\rho } (\theta ) ^2 = \hat{\rho } (\theta ) $ implies that $\hat{L }(\theta) = 2\partial _{\theta } \hat{\rho } (\theta ) = 2 ( | \partial _{\theta } \psi (\theta ) \rangle \langle \psi (\theta ) | + | \psi (\theta ) \rangle \langle \partial _{\theta } \psi (\theta ) |) $ is a symmetric logarithmic derivative, and if $\chi \perp \psi (\theta ) , \partial _{\theta } \psi (\theta ) $, then $\chi $ is an eigenvector of $\hat L(\theta ) $ with eigenvalue 0, for which $\langle \chi , \hat{\rho } (\theta ) \chi \rangle = 0 $, so  $\hat{\rho }(\theta ) \notin D_{\hat{L } (\theta ) } . $   
\medskip   
   
We note for completeness, that a short computation shows that the quantum Fisher information of such a family of pure states is given by the well-known formula         
$$   
\mathrm{I}_{\mathrm{QF}} (\theta ) = 4 (|| \partial _{\theta } \psi ||^2 - | \langle \psi , \partial _{\theta } \psi\rangle |^2 ) ,   
$$   
with the right hand side evaluated in $\theta $, and where we used that $\langle \psi , \partial _{\theta } \psi \rangle = - \langle \partial _{\theta } \psi , \psi \rangle . $

If we take $\hat{\rho }(\theta) $ to be the Born state associated to an observable $\hat{Y } $,   
$$   
\hat{\rho }(\theta) = \sum _y p(y; \theta ) \hat{\Pi }^{\hat{Y } } _y ,   
$$  
then the quantum Fisher information associated to this state is precisely the classical Fisher information, $\mathrm{I}_{\mathrm{F}} ( \hat{Y } ;\theta ) . $ We examine what happens near an isolated point $\theta _0 $ where this state is singular: $p(y; \theta _0 ) = 0  $ for certain values of $y$. Assuming that $p(y ; \theta ) $ is analytic (or at least not flat at $\theta _0 $), positivity implies that the first non-zero derivative in $\theta _0 $ is of even order, say of order $2k_y $, $k_y \in \mathbb{N }^* . $ Letting $\pi _{2k } (y) := \partial _{\theta } ^{2 k_y } p(y ; \theta ) $, Taylor expansion of the numerator and denominator shows that   
$$   
\frac{(\partial _{\theta } p(y ; \theta ) )^2 }{p(y ; \theta ) } = \frac{(2k )! }{(2k - 1 )! ^2 } (\theta - \theta _0 )^{2k - 2 } + \cdots ,   
$$   
whose  limit as $\theta \to \theta _0 $ is 0 unless $k = 1 $, in which case it is $2 \pi _2 . $ We therefore have the curious result that   
$$   
\lim _{\theta \to \theta _0 } \mathrm{I}_{\mathrm{F}} ( \hat{Y } ;\theta ) = \mathrm{I}_{\mathrm{F}} ( \hat{Y } ;\theta _0 ) + \sum _{y: p(y; \theta _0 ) = 0 } 
2 \partial _{\theta } ^2 p (y ; \theta _0 ) ,
$$   
showing that the Fisher information is in general not continuous as function of the parameter, unless all maxima and minima of the probabilities $p(y; \theta ) $ are degenerate (non-Morse). Doing a similar analysis of the multi-parameter case would be more complicated, the limit of the quotient of two quadratic forms does not exist.  

\section{Variational characterisations of KD}\label{app:variational_char}   
The identification of the conditional expectations defined by the left or right KD distribution with the corresponding quantum conditional expectations leads to a variational characterisation of the KD quasiprobabilities.   
In the proof of Theorem \ref{thm:Uniqueminimizers} the minimisation is done for each $y \in \sigma (\hat{Y } ) $ separately (see Eq.~\eqref{eq:quadform}) and the proof shows in fact that    
\begin{equation} \label{eq:var_char_QCE}
\mathbb{E }_{\hat{\rho } } (\hat{X } | \hat{Y } = y ) = \frac{{\rm Tr } (\hat{\Pi }^{\hat{Y } } _y \hat{X } \hat{\rho } ) }{{\rm Tr } (\hat{\Pi }^{\hat{Y } } _y \hat{\rho } ) } =   
{\rm Argmin }_{\lambda \in \mathbb{C } } \mathbb{E }_{\hat{\rho } } \bigl( | \hat{X } - \lambda \hat{\Pi }^{\hat{Y } } _y |^2 \bigr) . 
\end{equation}   
Applying this with $\hat{X } = \hat{\Pi }^{\hat{A } } _a $ and $\hat{Y } = \hat{B } $, where $\hat{A } $ and $\hat{B } $ are two complementary CSCOs,   
this translates into   
\begin{equation} \label{eq:var_char_KD}   
\frac{Q ^{{\rm KD } , \ell } _{a, b } }{\langle \varphi ^{\hat{B } } _b , \hat{\rho } \varphi ^{\hat{B } } _b \rangle } =   
{\rm Argmin } _{ \lambda  \in \mathbb{C } } \mathbb{E }_{\hat{\rho } } \bigl( |\hat{\Pi }^{\hat{A } } _a  - \lambda \hat{\Pi }^{\hat{B } } _b |^2 \bigr) 
\end{equation}   
Now if a function $f(\lambda ) $ of a complex variable $\lambda $ attains its minimum in $\lambda _0 $ and if $\beta \in \mathbb{C } $, then the function $f_{\beta } (\lambda ) := f(\lambda / \beta ) $ has its minimum in $\beta \lambda _0 $, since $f(\lambda / \beta ) \geq f(\lambda _0 ) = f_{\beta } (\beta \lambda _0 ) $, and this remains of course true for the function $\beta f_{\beta } (\lambda ) $ if $\beta > 0 . $  Applying this to the function which is minimized    
in the right hand side of (\ref{eq:var_char_KD}) with $\beta =   
\langle \varphi ^{\hat{B } } _b , \hat{\rho } \varphi ^{\hat{B } } _b \rangle $ we find:   
   
\begin{proposition}   
\begin{equation} \label{eq:var_char_KD2}   
Q^{{\rm KD } , \ell } _{a, b } (\hat{\rho } ) = {\rm Argmin }_{\lambda \in \mathbb{C } } \ \mathbb{E }_{\hat{\rho } } \left( \bigl | \langle \varphi ^{\hat{B } } _b , \hat{\rho } \varphi ^{\hat{B } } _b \rangle  \hat \Pi ^{\hat{A } } _a - \lambda \hat \Pi ^{\hat{B } } _b \bigr |^2 \right) .   
\end{equation}   
\end{proposition}   
   
\noindent In words, $Q^{{\rm KD } , \ell } _{a, b } (\hat{\rho }) \hat{\Pi }^{\hat{B } } _b $ is the best approximation of the spectral projector $\hat{\Pi }^{\hat{A } } _a $ of $a \in \sigma (\hat{A } ) $, weighted by the Born probability of $b \in \sigma (\hat{B } ) $ , by a multiple of the spectral projector $\hat{\Pi }^{\hat{B } } _b . $     
\medskip   
   
There is an second variational characterisation which follows from the observation that if $\hat{\rho }_0 := d^{-1 } \id{d} $ is the maximally mixed state ($d $ being the dimension of $\mathcal{H } $), then 
\begin{equation}   
Q^{{\rm KD } , \ell } _{a, b } (\hat{\rho } )= {\rm Tr } ( \hat{\Pi } ^{\hat{B } } _b \hat{\Pi } ^{\hat{A } }_a \hat{\rho } ) =   
\mathbb{E }_{\hat{\rho }_0 } \bigl(\hat{\Pi } ^{\hat{A } }_a \hat{\rho } | \hat{B } = b ) ,   
\end{equation}   
${\rm Tr } ( \hat{\Pi } ^{\hat{B } } _b \hat{\rho }_0 ) $ being 1 since $\hat{B } $ is a CSCO. Formula (\ref{eq:var_char_QCE}) then immediately implies:   

\begin{proposition}   
\begin{equation} \label{eq:var_char_KD3}   
Q^{{\rm KD } , \ell } _{a, b } (\hat{\rho } ) = {\rm Argmin }_{\lambda \in \mathbb{C } } \ {\rm Tr } \bigl( | \hat{\Pi } ^{\hat{A } } _a  \hat{\rho } -  \lambda \hat{\Pi } ^{\hat{B } } _b |^2 \bigr)   
\end{equation}   
\end{proposition}   

Another way to derive (\ref{eq:var_char_KD3}) is by observing that   
\begin{equation} \nonumber   
\sum _{b \in \sigma (\hat{B } ) } Q ^{{\rm KD } , \ell } _{a, b } (\hat \rho)\hat{\Pi }^{\hat{B } } _b = \mathcal{E }_{\hat{B } } \bigl( \hat{\Pi } ^{\hat{A } }_a \hat{\rho } \bigr) ,   
\end{equation}     
where $\mathcal{E }_{\hat{B } } $ is defined in (\ref{eq:VNexpectation}). Since $\mathcal{E }_{\hat{B } } $ is the orthogonal projection with respect to the Hilbert-Schmidt norm of $\mathcal{L}(\mathcal{H } ) $ onto $\mathcal{F }_{\hat{B } , \mathbb{C } } $, this implies (\ref{eq:var_char_KD3}).   

We finally note that formulas (\ref{eq:var_char_KD2}) and (\ref{eq:var_char_KD3}) can also easily be verified directly.

\end{document}